\pdfoutput=1

\documentclass[a4paper,UKenglish,cleveref, autoref, thm-restate]{lipics-v2021}

\usepackage[ruled, vlined, linesnumbered]{algorithm2e}
\usepackage{numprint}
\npdecimalsign{.} %
\usepackage{makeidx}
\usepackage[dvipsnames]{xcolor}
\usepackage{color}
\usepackage{hyperref}
\usepackage{xspace}
\usepackage{booktabs}
\usepackage{mathtools}
\usepackage{subcaption}
\usepackage{makecell} 	%
\usepackage{upgreek} %
\usepackage{siunitx}
\usepackage{nicefrac}
\usepackage{multirow}
\usepackage{niceAlgorithm}
\usepackage{cleveref}

\newcommand{\changed}[1]{#1}

\newcommand{\transultra}{Trans-ULTRA\xspace}

\newcommand{\ie}{i.e.,\xspace~}
\newcommand{\eg}{e.g.,\xspace~}

\newcommand{\absoluteVal}[1]{\left\vert #1 \right\vert}

\newcommand{\atime}{\ensuremath{\uptau}\xspace}
\newcommand{\departureTime}{\atime_\mathrm{dep}}
\newcommand{\arrivalTime}{\atime_\mathrm{arr}}

\newcommand{\graph}{G}
\newcommand{\vertices}{V}
\newcommand{\edges}{E}
\newcommand{\aVertex}{v}
\newcommand{\vertexA}{u}
\newcommand{\vertexB}{v}
\newcommand{\sourceVertex}{\aVertex_s}
\newcommand{\targetVertex}{\aVertex_t}
\newcommand{\anEdge}{e}
\newcommand{\edgeWeight}{c}
\newcommand{\aPath}{P}

\newcommand{\stops}{\mathcal{S}}
\newcommand{\aStop}{p}
\newcommand{\stopA}{p}
\newcommand{\stopB}{q}
\newcommand{\stopC}{r}
\newcommand{\inStop}{\aStop^\text{in}}
\newcommand{\outStop}{\aStop^\text{out}}
\newcommand{\sourceStop}{\aStop_s}
\newcommand{\targetStop}{\aStop_t}
\newcommand{\stopEvents}{\mathcal{E}}
\newcommand{\stopEvent}{\varepsilon}
\newcommand{\targetEvent}{\stopEvent_t}
\newcommand{\lines}{\mathcal{L}}
\newcommand{\aLine}{L}
\newcommand{\trips}{\mathcal{T}}
\newcommand{\aTrip}{T}
\newcommand{\tripA}{T_a}
\newcommand{\tripB}{T_b}
\newcommand{\tripC}{T_c}
\newcommand{\tripD}{T_d}

\newcommand{\pred}[1]{\mathrm{pred}\left( #1 \right)}
\newcommand{\tripSegment}[3]{#1[#2,#3]}
\newcommand{\footpaths}{\mathcal{F}}
\newcommand{\footpath}{f}
\newcommand{\transfertime}[2]{\Delta\atime_{\mathrm{fp}}( #1, #2 )}
\newcommand{\query}{Q}

\newcommand{\journeys}{\mathcal{J}}
\newcommand{\aJourney}{J}
\newcommand{\subjourney}[2]{\aJourney[#1,#2]}
\newcommand{\subjourneyAlt}[2]{\aJourney'[#1,#2]}
\newcommand{\transfers}{\mathfrak{T}}
\newcommand{\aTransfer}{\mathfrak{t}}
\newcommand{\transfer}[2]{(#1,#2)}

\newcommand{\cost}{c}
\newcommand{\costspace}{\mathcal{C}}
\newcommand{\costsystem}{C}
\newcommand{\paretoset}{\mathcal{P}}
\newcommand{\representativeSet}{\journeys_R}

\newcommand{\numCells}{k}
\newcommand{\partition}{r}
\newcommand{\aFlag}{b}

\newcommand{\layoutGraph}{\graph_L}
\newcommand{\layoutEdges}{\edges_L}
\newcommand{\layoutEdgeWeight}{\edgeWeight_L}
\newcommand{\links}{C}

\newcommand{\aQueue}{\ensuremath{Q}\xspace}
\newcommand{\reachedIndex}{R}
\newcommand{\runReachedIndex}{R_{\mathrm{r}}}
\newcommand{\minTime}{\ensuremath{\atime_{\textsf{min}}}\xspace}
\SetKwFunction{Enqueue}{Enqueue}
\SetKwFunction{Scan}{Scan}
\SetKwFunction{AddArrival}{AddArrival}

\newcommand{\tiebreakingSequence}{X}
\newcommand{\tiebreakers}{\tiebreakingSequence_b}
\newcommand{\localTiebreakingSequence}{\tiebreakingSequence_\ell}

\newcommand{\lineIndex}{\ensuremath{\text{id}_\lines}\xspace}
\newcommand{\stopIndex}{\ensuremath{\text{id}_\stops}\xspace}
\newcommand{\eventIndex}{\ensuremath{\text{id}_\stopEvents}\xspace}
\newcommand{\directTransfers}{\ensuremath{\transfers_0}\xspace}
\newcommand{\footpathTransfers}{\ensuremath{\transfers_\mathrm{fp}}\xspace}
\DeclareMathOperator{\parent}{parent}

\usepackage{xcolor}
\definecolor{KITgreen}          {rgb}{0,    0.588,0.509}
\definecolor{KITgreen70}        {rgb}{0.3,  0.711,0.656}
\definecolor{KITgreen50}        {rgb}{0.5,  0.794,0.754}
\definecolor{KITgreen30}        {rgb}{0.7,  0.876,0.852}
\definecolor{KITgreen15}        {rgb}{0.85, 0.938,0.926}
\definecolor{KITblue}           {rgb}{0.274,0.392,0.666}
\definecolor{KITblue70}         {rgb}{0.492,0.574,0.766}
\definecolor{KITblue50}         {rgb}{0.637,0.696,0.833}
\definecolor{KITblue30}         {rgb}{0.782,0.817,0.9}
\definecolor{KITblue15}         {rgb}{0.891,0.908,0.95}
\definecolor{KITpalegreen}      {rgb}{0.509,0.745,0.235}
\definecolor{KITpalegreen70}    {rgb}{0.656,0.821,0.464}
\definecolor{KITpalegreen50}    {rgb}{0.754,0.872,0.617}
\definecolor{KITpalegreen30}    {rgb}{0.852,0.923,0.77}
\definecolor{KITpalegreen15}    {rgb}{0.926,0.961,0.885}
\definecolor{KITyellow}         {rgb}{0.98, 0.901,0.078}
\definecolor{KITyellow70}       {rgb}{0.986,0.931,0.354}
\definecolor{KITyellow50}       {rgb}{0.99, 0.95, 0.539}
\definecolor{KITyellow30}       {rgb}{0.994,0.97, 0.723}
\definecolor{KITyellow15}       {rgb}{0.997,0.985,0.861}
\definecolor{KITorange}         {rgb}{0.862,0.627,0.117}
\definecolor{KITorange70}       {rgb}{0.903,0.739,0.382}
\definecolor{KITorange50}       {rgb}{0.931,0.813,0.558}
\definecolor{KITorange30}       {rgb}{0.958,0.888,0.735}
\definecolor{KITorange15}       {rgb}{0.979,0.944,0.867}
\definecolor{KITbrown}          {rgb}{0.627,0.509,0.196}
\definecolor{KITbrown70}        {rgb}{0.739,0.656,0.437}
\definecolor{KITbrown50}        {rgb}{0.813,0.754,0.598}
\definecolor{KITbrown30}        {rgb}{0.888,0.852,0.758}
\definecolor{KITbrown15}        {rgb}{0.944,0.926,0.879}
\definecolor{KITred}            {rgb}{0.627,0.117,0.156}
\definecolor{KITred70}          {rgb}{0.739,0.382,0.409}
\definecolor{KITred50}          {rgb}{0.813,0.558,0.578}
\definecolor{KITred30}          {rgb}{0.888,0.735,0.747}
\definecolor{KITred15}          {rgb}{0.944,0.867,0.873}
\definecolor{KITlilac}          {rgb}{0.627,0,    0.47}
\definecolor{KITlilac70}        {rgb}{0.739,0.3,  0.629}
\definecolor{KITlilac50}        {rgb}{0.813,0.5,  0.735}
\definecolor{KITlilac30}        {rgb}{0.888,0.7,  0.841}
\definecolor{KITlilac15}        {rgb}{0.944,0.85, 0.92}
\definecolor{KITcyanblue}       {rgb}{0.313,0.666,0.901}
\definecolor{KITcyanblue70}     {rgb}{0.519,0.766,0.931}
\definecolor{KITcyanblue50}     {rgb}{0.656,0.833,0.95}
\definecolor{KITcyanblue30}     {rgb}{0.794,0.9,  0.97}
\definecolor{KITcyanblue15}     {rgb}{0.897,0.95, 0.985}
\definecolor{KITseablue}        {rgb}{0.196,0.313,0.549}
\definecolor{KITseablue70}      {rgb}{0.437,0.519,0.684}
\definecolor{KITseablue50}      {rgb}{0.598,0.656,0.774}
\definecolor{KITseablue30}      {rgb}{0.758,0.794,0.864}
\definecolor{KITseablue15}      {rgb}{0.879,0.897,0.932}
\definecolor{KITblack}          {rgb}{0,    0,    0}
\definecolor{KITblack90}        {rgb}{0.1,  0.1,  0.1}
\definecolor{KITblack80}        {rgb}{0.2,  0.2,  0.3}
\definecolor{KITblack75}        {rgb}{0.25, 0.25, 0.25}
\definecolor{KITblack70}        {rgb}{0.3,  0.3,  0.3}
\definecolor{KITblack60}        {rgb}{0.4,  0.4,  0.4}
\definecolor{KITblack50}        {rgb}{0.5,  0.5,  0.5}
\definecolor{KITblack40}        {rgb}{0.6,  0.6,  0.6}
\definecolor{KITblack30}        {rgb}{0.7,  0.7,  0.7}
\definecolor{KITblack25}        {rgb}{0.75, 0.75, 0.75}
\definecolor{KITblack20}        {rgb}{0.8,  0.8,  0.8}
\definecolor{KITblack10}        {rgb}{0.9,  0.9,  0.9}
\definecolor{KITwhite}          {rgb}{1,    1,    1}

\usepackage{tikz}
\usepackage{pgfplots}
\usepackage{colortbl}
\usepgfplotslibrary{groupplots}
\usetikzlibrary{positioning,shapes,arrows,fit,backgrounds}
\pgfplotsset{compat=newest}
\pgfkeys{/pgf/number format/.cd, set thousands separator={\,}}

\colorlet{nodeColor}{black!80}
\colorlet{edgeColor}{black!50}
\colorlet{axisColor}{black!80}
\colorlet{legendColor}{black!80}

\newcommand{\gs}{\hphantom{\tiny$\cdot$}}
\newcommand{\legend}[1]{\raisebox{0.07ex}{#1}}

\tikzstyle{vertex}=[circle,line width=.5pt,minimum size=0.1pt]
\tikzstyle{arrow}=[->, >=stealth]
\tikzstyle{route}=[arrow, line width=2.5pt, rounded corners = 20]
\tikzstyle{edge}=[edgeColor, line width=1pt, rounded corners = 20]
\tikzstyle{directedEdge}=[edge, arrow]
\tikzstyle{stop}=[rectangle, line width=.5pt, inner sep=5pt, draw=nodeColor!100,fill=nodeColor!15,rounded corners=0.1cm]

\pgfplotsset{
    axis line style={axisColor,line width = 0.4pt},
    major tick style={axisColor,line width = 0.4pt},
    minor tick style={axisColor,line width = 0.2pt},
    major tick length=3.5pt,
    minor tick length=2.0pt,
    ytick align=outside,
    xtick align=outside,
    xtick pos=left,
    ytick pos=left,
    ticklabel style = {font=\small},
    label style = {font=\small},
}

\newcommand{\codestyling}[1]{\texttt{#1}}
\newcommand{\printTime}[3]{\num[minimum-integer-digits = 2]{#1}{:}\num[minimum-integer-digits = 2]{#2}{:}\num[minimum-integer-digits = 2]{#3}}

\bibliographystyle{plainurl}%

\title{FLASH-TB: Integrating Arc-Flags and Trip-Based Public Transit Routing}

\titlerunning{FLASH-TB: Integrating Arc-Flags and Trip-Based Public Transit Routing}

\author{Ernestine Großmann}{Heidelberg University, Heidelberg, Germany}{e.grossmann@informatik.uni-heidelberg.de}{https://orcid.org/0000-0002-9678-0253}{}
\author{Jonas Sauer\footnote{Corresponding author}}{University of Bonn, Bonn, Germany}{jsauer1@uni-bonn.de}{https://orcid.org/0000-0002-7196-7468}{}
\author{Christian Schulz}{Heidelberg University, Heidelberg, Germany}{christian.schulz@informatik.uni-heidelberg.de}{https://orcid.org/0000-0002-2823-3506}{}
\author{Patrick Steil}{Karlsruhe Institute of Technology, Karlsruhe, Germany}{patrick@steil.dev}{https://orcid.org/0000-0003-3282-4533}{}
\author{Sascha Witt}{Karlsruhe Institute of Technology, Karlsruhe, Germany}{sascha.witt@kit.edu}{https://orcid.org/0000-0002-7867-3200}{}

\authorrunning{E. Großmann, J. Sauer, C. Schulz, P. Steil and S. Witt}

\Copyright{Ernestine Großmann, Jonas Sauer, Christian Schulz, Patrick Steil, Sascha Witt}

\begin{CCSXML}
	<ccs2012>
	<concept>
	<concept_id>10003752.10003809.10003635.10010037</concept_id>
	<concept_desc>Theory of computation~Shortest paths</concept_desc>
	<concept_significance>500</concept_significance>
	</concept>
	<concept>
	<concept_id>10002950.10003624.10003633.10010917</concept_id>
	<concept_desc>Mathematics of computing~Graph algorithms</concept_desc>
	<concept_significance>300</concept_significance>
	</concept>
	<concept>
	<concept_id>10010405.10010481.10010485</concept_id>
	<concept_desc>Applied computing~Transportation</concept_desc>
	<concept_significance>100</concept_significance>
	</concept>
	</ccs2012>
\end{CCSXML}

\ccsdesc[500]{Theory of computation~Shortest paths}
\ccsdesc[300]{Mathematics of computing~Graph algorithms}
\ccsdesc[100]{Applied computing~Transportation}

\keywords{Public transit routing; graph algorithms; algorithm engineering}

\category{}

\relatedversion{This publication is partially based on an extended abstract that appeared at the~21st International Symposium on Experimental Algorithms~\hbox{\cite{Gro23}}.}

\supplement{Software (Code): \url{https://github.com/TransitRouting/FLASH-TB}}

\acknowledgements{
	We thank Patrick Brosi for providing us with the Germany dataset. 
	We also thank our anonymous reviewers for their extensive feedback, which allowed us to identify and fix critical flaws in the proofs.
	This publication is partially based on an extended abstract that appeared at the~21st International Symposium on Experimental Algorithms~\hbox{\cite{Gro23}}.
}%

\nolinenumbers %

\hideLIPIcs  %

\EventEditors{John Q. Open and Joan R. Access}
\EventNoEds{2}
\EventLongTitle{42nd Conference on Very Important Topics (CVIT 2016)}
\EventShortTitle{CVIT 2016}
\EventAcronym{CVIT}
\EventYear{2016}
\EventDate{December 24--27, 2016}
\EventLocation{Little Whinging, United Kingdom}
\EventLogo{}
\SeriesVolume{42}
\ArticleNo{23}

\begin{document}

\maketitle

\begin{abstract}
	We present FLASH-TB, a journey planning algorithm for public transit networks that combines Trip-Based Public Transit Routing (TB) with the Arc-Flags speedup technique.
	The basic idea is simple:
		The network is partitioned into a configurable number of cells.
		For each cell and each possible transfer between two vehicles, the algorithm precomputes a flag that indicates whether the transfer is required to reach the cell.
		During a query, only flagged transfers are explored.
		Our algorithm improves upon previous attempts to apply Arc-Flags to public transit networks, which saw limited success due to conflicting rules for pruning the search space.
		We show that these rules can be reconciled while still producing correct results.
		Because the number of cells is configurable, FLASH-TB offers a tradeoff between query time and memory consumption.
		It is significantly more space-efficient than existing techniques with a comparable preprocessing time, which store generalized shortest-path trees: to match their query performance, it requires up to two orders of magnitude less memory.
		The fastest configuration of FLASH-TB achieves a speedup of more than two orders of magnitude over TB, offering sub-millisecond query times even on large countrywide networks.
\end{abstract}
\section{Introduction}
	\label{sec:introduction}
	Journey-planning applications that provide real-time routing information have become a part of our daily lives.
	To allow for interactive use, the employed routing algorithms must be very fast.
	Although Dijkstra's algorithm~\cite{Dij59} finds shortest paths in quasi-linear time, it still takes several seconds on continental-sized networks.
	Practical applications therefore rely on \emph{speedup techniques}, which compute auxiliary data in a preprocessing phase to accelerate the query phase.
	For road networks, many speedup techniques have been developed in recent decades~\cite{Bas16b}.
	These achieve sub-millisecond query times with only moderate preprocessing time and space consumption.
	For public transit networks, the situation is less satisfactory.
    Although some techniques offer sub-millisecond query times on networks representing large countries, these require auxiliary data in the tens to hundreds of gigabytes.
	This discrepancy has been explained by the fact that road networks exhibit beneficial structural properties that are less pronounced in public transit networks~\cite{Bas09}.
	Another challenge is that passengers typically consider multiple criteria when choosing between journeys.
	Therefore, most algorithms Pareto-optimize at least two criteria: the arrival time and the number of used trips.
	This increases the solution size and thereby slows down the algorithms even further.
	For these reasons, a speedup technique that achieves very low query times with a moderate amount of precomputed data has remained elusive.
	
	\subsection{Related Work}
	\label{par:stateoftheart}
	For this work, we consider public transit journey planning algorithms that Pareto-optimize arrival time and number of trips.
	For a more general overview, we refer to a survey by~\hbox{\cite{Bas16b}}.
	Classical approaches model the public transit timetable as a graph and apply a multicriteria variant of Dijkstra's algorithm~\cite{Mue07b,Dis08}.
	In the~\emph{time-dependent} graph model~\cite{Bro04,Pyr08}, nodes represent stops in the network and edges are weighted with functions that map departure time to arrival time.
	This yields a compact graph but requires a time-dependent version of Dijkstra's algorithm.
	By contrast, the~\emph{time-expanded} model~\cite{Mue07b,Pyr08} introduces nodes for each event in the timetable (\eg a vehicle arriving at or departing from a stop).
	Edges connect consecutive events of the same trip and events between which a transfer is possible.
    This results in a larger graph, but Dijkstra's algorithm can be applied directly.

    Graph-based models can be combined with speedup techniques for Dijkstra's algorithm.
	A notable example is Arc-Flags~\cite{Hil09,Lau09,Moe06}, which partitions the graph into cells and computes a boolean \emph{flag} for each combination of edge and cell.
    The flag indicates whether the edge is required to reach the cell.
	Arc-Flags has been applied to both time-dependent~\hbox{\cite{Ber09}} and time-expanded~\hbox{\cite{Del09c}} graphs, although only the arrival time was optimized in the latter case.
    However, this has only produced speedups of~3--4, compared to over~5\,000~\hbox{\cite{Bas16b}} for road networks.
	
	More recent algorithms do not model the timetable as a graph.
	Instead, they use tailor-made data structures that allow for cache-efficient queries.
	Notable examples are RAPTOR~\cite{Del15b} and Trip-Based Routing~(TB)~\hbox{\cite{Wit15}}.
	The latter employs a lightweight preprocessing phase that computes relevant transfers between pairs of trips.
	This yields query times below~100\,ms, even on large networks, significantly improving upon graph-based techniques.

	Algorithms that offer query times in the sub-millisecond range do so by precomputing auxiliary data whose size is quadratic in the size of the network.
	Public Transit Labeling~(PTL)~\hbox{\cite{Del15}} adapts the ideas of Hub Labeling~\hbox{\cite{Coh03}} to time-expanded graphs.
	On metropolitan networks, PTL achieves query times of a few microseconds but requires tens of gigabytes of space.
    Additional space in the hundreds of gigabytes would be required to support~\emph{journey unpacking}, \ie retrieving full descriptions of the optimal journeys.
	Transfer Patterns~(TP)~\hbox{\cite{Bas10}} is a technique that, in its most basic form, answers all possible queries during the preprocessing phase.
    The journey descriptions are then condensed into \emph{transfer patterns}, \ie sequences of stops at which transfers between vehicles occur.
	This yields a search graph for each possible source stop, which is explored during the query phase.
	On the network of Germany, TP answers queries in less than~1\,ms but requires hundreds of hours of preprocessing time and over 100\,GB of space~\hbox{\cite{Bas16}}.

    To make the preprocessing phase more practical, \cite{Bas10} split the transfer patterns into global components between a small set of important \emph{hub stations} and local components that lead to and from the hub stations.
	This significantly reduces the memory consumption, but the preprocessing time is improved only when applying a heuristic that causes a small percentage of queries to be answered incorrectly.
	A similar approach that answers all queries correctly is Scalable Transfer Patterns~\hbox{\cite{Bas16}}, which employs graph clustering to split the patterns into local and global components.
	However, when these are recombined during a query, the search space increases drastically.
	Thus, the resulting query times are only barely competitive with TB.
	Trip-Based Routing Using Condensed Search Trees~(TB-CST)~\cite{Wit16} re-engineers the ideas of TP with a faster, TB-based preprocessing algorithm.
	To save space, shared suffixes are extracted from the search graphs and re-attached by the query algorithm.
	TB-CST offers faster query times than TP on country-scale networks, although it is less successful on metropolitan areas.
	
	\subsection{Contribution}
	We present FLASH-TB (TB with FLAgged SHortcuts), a speedup technique for TB that exceeds the query performance of TP and TB-CST with a much smaller memory footprint.
    Like those techniques, FLASH-TB answers all possible queries in advance, but it condenses the journey descriptions in a different manner.
    Instead of computing transfer patterns, FLASH-TB stores and annotates the individual transfers.
	The simplest variant stores a flag for every combination of transfer and target stop $\targetStop$ that indicates whether the transfer is needed to reach $\targetStop$.
	Because transfers connect events rather than stops, this approach retains fine-grained information that is lost in the transfer patterns.
    The memory consumption can be reduced by leveraging Arc-Flags: the stops are partitioned into $k$ cells and a combined flag is stored for each pair of transfer and target cell.
    Thus, FLASH-TB is conceptually simpler than previous space-saving techniques for TP.
    Furthermore, because $k$ is configurable, it offers a tradeoff between query performance and space consumption.
	
	Our approach is similar to applying Arc-Flags to the time-expanded graph, but there are crucial differences that allow it to overcome the issues observed by~\cite{Del09c}.
    In particular, we observe that the low speedups that were previously reported are caused by a conflict between different pruning rules in the preprocessing and query phases.
	We resolve this conflict with a redesigned preprocessing step and show that this vastly reduces the search space.
	We perform an experimental evaluation on four real-world datasets representing metropolitan and country-scale public transit networks.
	With less than~1\,GB of precomputed data, FLASH-TB achieves query times of~500\,$\mu s$ on the network of Germany and well below~100\,$\mu s$ on smaller country-scale networks.
	This corresponds to a speedup of two orders of magnitude over TB.
	The metropolitan network of Paris requires~8\,GB to achieve a similar speedup.
	When allowed to use as much space as TB-CST, FLASH-TB is faster by a factor of two to nine.
	
	\subsection{Outline}
	This paper is structured as follows:
	Section~\ref{sec:preliminaries} introduces the necessary notation.
	Section~\ref{sec:tb} outlines the algorithms upon which our work is based, including TB, TB-CST, and Arc-Flags.
	In Section~\ref{sec:transferset}, we discuss challenges that arise when using TB in the preprocessing step, which is done by both TB-CST and FLASH-TB, and how they can be overcome while keeping the search space small.
	Based on this, we present FLASH-TB in Section~\ref{sec:flashtb} and prove its correctness.
	We discuss similarities and differences to existing speedup techniques in Section~\ref{sec:comparison}.
	Section~\ref{sec:experiments} evaluates the performance of FLASH-TB on real-world networks and compares it with state-of-the-art algorithms.
	Finally, we summarize our findings and discuss potential areas for future research in Section~\ref{sec:conclusion}.
	
	\section{Preliminaries}
	\label{sec:preliminaries}
	Our definitions and notation are partially based on~\cite{Wit15} and~\cite{Bau23}.
	
	\subsection{Public Transit Network}
	A~\emph{public transit network} is a 5-tuple $(\stops,\stopEvents,\trips,\lines,\footpaths)$ consisting of a set of stops $\stops$, a set of stop events $\stopEvents$, a set of trips $\trips$, a set of lines $\lines$, and a set of footpaths $\footpaths\subseteq\stops\times\stops$.
	A \textit{stop} $\aStop\in\stops$ is a location visited by a vehicle at which passengers can embark or disembark.
	A \textit{stop event} $\stopEvent\in\stopEvents$ represents a visit of a vehicle at the stop $\aStop(\stopEvent) \in \stops$ with the arrival time $\arrivalTime(\stopEvent)$ and the departure time $\departureTime(\stopEvent) \geq \arrivalTime(\stopEvent)$.
	A \textit{trip} is a sequence $\aTrip = \left< \stopEvent_0, \stopEvent_1, \dots\right>$ of stop events performed by the same vehicle.
	We denote the $i$-th stop event of $\aTrip$ as $\aTrip[i]$ and the number of stop events in $\aTrip$ as $\absoluteVal{\aTrip}$.
	The \textit{stop sequence} of $\aTrip$ is given by $\left< \aStop(\stopEvent_0), \aStop(\stopEvent_1), \dots\right>$.
	We say that a trip $\tripA$ \textit{overtakes} another trip $\tripB$ with the same stop sequence if $\arrivalTime(\tripA[0]) \geq \arrivalTime(\tripB[0])$ and if there is an index $i$ such that $\arrivalTime(\tripA[i]) \leq \arrivalTime(\tripB[i])$ or $\departureTime(\tripA[i]) \leq \departureTime(\tripB[i])$.
	A line $\aLine\in\lines$ is a set of trips with the same stop sequence such that no trip overtakes another.
	Every trip $\aTrip$ is included in exactly one line, which is denoted by $\aLine(\aTrip)$.
	We define a total order on trips $\tripA, \tripB$ of the same line $\aLine(\tripA)=\aLine(\tripB)$:
	\begin{align*}
		\tripA \prec \tripB &\iff \arrivalTime(\tripA[0]) < \arrivalTime(\tripB[0]),\\
		\tripA \preceq \tripB &\iff \tripA = \tripB \lor \tripA \prec \tripB.
	\end{align*}
	If $\tripA \prec \tripB$, it follows from the non-overtaking property that $\arrivalTime(\tripA[i]) < \arrivalTime(\tripB[i])$ and $\departureTime(\tripA[i]) < \departureTime(\tripB[i])$ for $0 \leq i < \absoluteVal{\aTrip}$.
	Furthermore, if two trips with the same stop sequence have the same arrival or departure time at any point along the sequence, our definition of overtaking forces them to be in different lines.
	For a trip $\aTrip$, its predecessor according to $\prec$ is denoted by $\pred{\aTrip}$.
	If $\aTrip$ is the first trip of its line, we write $\pred{\aTrip}=\bot$.
	A partial order of stop events $\tripA[i],\tripB[j]$ is given by
	\begin{align*}
		\tripA[i] \preceq \tripB[j] &\iff \aLine(\tripA)=\aLine(\tripB) \land \tripA \preceq \tripB \land i \leq j,\\
		\tripA[i] \prec \tripB[j] &\iff \tripA[i] \preceq \tripB[j] \land (\tripA \prec \tripB \lor i < j).
	\end{align*}
	
	A~\emph{footpath} $(\stopA,\stopB)\in\footpaths$ allows passengers to transfer between stops $\stopA$ and $\stopB$, which requires the time $\transfertime{\stopA}{\stopB}$.
	If no footpath between $\stopA$ and $\stopB$ exists, we define $\transfertime{\stopA}{\stopB} = \infty$.
	For each stop $\aStop\in\stops$, we require that a footpath $(\aStop,\aStop)\in\footpaths$ exists with $\transfertime{\aStop}{\aStop} = 0$; we call this an \emph{empty} footpath.
	We require that the set of footpaths is transitively closed and fulfills the triangle inequality, \ie if there are stops $\stopA,\stopB,\stopC\in\stops$ with $(\stopA,\stopB)\in\footpaths$ and $(\stopB,\stopC)\in\footpaths$, then there must be a footpath $(\stopA,\stopC)\in\footpaths$ with $\transfertime{\stopA}{\stopC}\leq\transfertime{\stopA}{\stopB}+\transfertime{\stopB}{\stopC}$.
	
	\subsection{Journeys}
	A~\textit{trip segment} $\tripSegment{\aTrip}{i}{j}$ is the subsequence of the trip $\aTrip$ between the two stop events $\aTrip[i]$ and $\aTrip[j]$.
	A \textit{transfer} is a tuple $\transfer{\tripA[i]}{\tripB[j]}\in\stopEvents\times\stopEvents$ with $\arrivalTime(\tripA[i])+\transfertime{\aStop(\tripA[i])}{\aStop(\tripB[j])}\leq\departureTime(\tripB[j])$.
	It represents a passenger changing from trip $\tripA$ to $\tripB$ at the respective stop events.
	For a source stop $\sourceStop$ and target stop $\targetStop$, a $\sourceStop$-$\targetStop$-\textit{journey} $\aJourney = \left< \footpath_0, \tripSegment{\aTrip_1}{i_1}{j_1}, \dots, \tripSegment{\aTrip_k}{i_k}{j_k}, \footpath_{k+1} \right>$ is a sequence of trip segments such that a transfer connects every pair of consecutive trip segments.
	In addition, the journey contains the \textit{initial footpath} $\footpath_0=(\sourceStop, \aStop(\aTrip_1[i_1]))\in\footpaths$, which connects $\sourceStop$ to the first trip, and the \textit{final footpath} $\footpath_{k+1}=(\aStop(\aTrip_k[j_k]), \targetStop)\in\footpaths$ connecting the last trip to $\targetStop$.
	For the sake of brevity, we omit empty footpaths from the notation.
    We denote the source stop of $\aJourney$ by $\sourceStop(\aJourney)$ and the target stop by $\targetStop(\aJourney)$.
    The number of trips used by $\aJourney$ is denoted as $\absoluteVal{\aJourney}=k$.
	The arrival time of $\aJourney$ at $\targetStop$ is given by $\arrivalTime(\aJourney)=\arrivalTime(\aTrip_k[j_k])+\transfertime{\aStop(\aTrip_k[j_k])}{\targetStop}$.
	The \textit{departure time} of $\aJourney$ is the latest possible departure time at $\sourceStop$ such that $\aTrip_1$ can still be entered, \ie $\departureTime(\aJourney)=\departureTime(\aTrip_1[i_1])-\transfertime{\sourceStop}{\aStop(\aTrip_1[i_1])}$.	
	
	The~\emph{stop sequence} $\stops(\aJourney)$ of $\aJourney$ is obtained by removing every stop from the sequence $\widetilde{\stops}(\aJourney)=\left<\sourceStop,\aStop(\aTrip_1[i_1]),\aStop(\aTrip_1[j_1]),\dots,\aStop(\aTrip_k[i_k]),\aStop(\aTrip_k[j_k]),\targetStop\right>$ that is identical to its predecessor.
    \changed{Note that if a stop $\aStop$ appears twice in $\stops(\aJourney)$, then there is a corresponding journey $\aJourney'$ that waits at $\aStop$ instead of taking the loop.
    For the criteria considered in this work (arrival time and number of trips), $\aJourney'$ is always preferable to $\aJourney$.
    Hence, we assume throughout this work that each stop in $\stops(\aJourney)$ appears only once.}
	For two stops $\stopA,\stopB\in\stops(\aJourney)$, the \textit{subjourney} $\subjourney{\stopA}{\stopB}$ is the $\stopA$-$\stopB$-journey whose trip segments form a subsequence of $\aJourney$.
	A \emph{partial journey} $\aJourney_p = \left< \footpath_0, \tripSegment{\aTrip_1}{i_1}{j_1}, \dots, \tripSegment{\aTrip_k}{i_k}{j_k}, \aTrip_{k+1}[i_{k+1}] \right>$ differs from a journey in that the final footpath is replaced by a transfer to the \emph{target event} $\targetEvent(\aJourney_p)=\aTrip_{k+1}[i_{k+1}]$.
	The trip $\aTrip_{k+1}$, which is entered but not exited, is still counted as a used trip, \ie $\absoluteVal{\aJourney_p}=k+1$.
	
	We define three types of \emph{prefixes} of a journey $\aJourney = \left< \footpath_0, \tripSegment{\aTrip_1}{i_1}{j_1}, \dots, \tripSegment{\aTrip_k}{i_k}{j_k}, \footpath_{k+1} \right>$:
	\begin{itemize}
		\item For $0 < \ell \leq k$, the $\sourceStop$-$\aStop(\aTrip_\ell[j_\ell])$-journey $\left< \footpath_0, \tripSegment{\aTrip_1}{i_1}{j_1}, \dots, \tripSegment{\aTrip_\ell}{i_\ell}{j_\ell}\right>$ is an \emph{exit prefix}.
		\item \changed{For $1 \leq \ell < k$, the $\sourceStop$-$\aStop(\aTrip_{\ell+1}[i_{\ell+1}])$-journey $\left< \footpath_0, \tripSegment{\aTrip_1}{i_1}{j_1}, \dots, \tripSegment{\aTrip_\ell}{i_\ell}{j_\ell},\footpath_{\ell+1}\right>$ with final footpath $\footpath_{\ell+1}=(\aStop(\aTrip_\ell[j_\ell]),\aStop(\aTrip_{\ell+1}[i_{\ell+1}]))\in\footpaths$ is a \emph{footpath prefix}. For $\ell=0$, the $\sourceStop$-$\aStop(\aTrip_1[i_1])$-journey $\langle\footpath_0\rangle$ is also a footpath prefix.}
		\item For $0 \leq \ell < k$, the partial journey $\left< \footpath_0, \tripSegment{\aTrip_1}{i_1}{j_1}, \dots, \tripSegment{\aTrip_\ell}{i_\ell}{j_\ell}, \aTrip_{\ell+1}[i_{\ell+1}]\right>$ is called a \emph{partial prefix}.
	\end{itemize}
	Exit and footpath prefixes are called \emph{proper} prefixes to distinguish them from the partial prefix.
	
	\subsection{Journey Planning Problem}
	A \emph{query} $\query=(\sourceStop,\targetStop,\departureTime)$ consists of source and target stops $\sourceStop,\targetStop\in\stops$ and a departure time $\departureTime$.
	The set $\journeys(\query)$ of \emph{feasible journeys} for $\query$ consists of all $\sourceStop$-$\targetStop$-journeys $\aJourney$ with $\departureTime(\aJourney)\geq\departureTime$.
	A \emph{partial query} $\query_p=(\sourceStop,\targetEvent,\departureTime)$ has a target event $\targetEvent\in\stopEvents$ instead of a target stop.
    The set $\journeys_p(\query_p)$ of \emph{feasible partial journeys} consists of all partial $\sourceStop$-$\targetEvent'$-journeys $\aJourney_p$ with $\departureTime(\aJourney_p)\geq\departureTime$ and $\targetEvent'\preceq\targetEvent$.
	
	Let $\journeys$ denote a set of feasible (partial) journeys.
	A \emph{cost function} $\cost \colon \journeys \to \costspace$ maps (partial) journeys to a \emph{cost space} $\costspace$.
	A \emph{(2D) cost system} $\costsystem = ((\cost_1,\preceq_1), (\cost_2,\preceq_2))$ is formed by cost functions $\cost_1\colon \journeys \to \costspace_1$ and $\cost_2\colon \journeys \to \costspace_2$ with partial orders $\preceq_1,\preceq_2$ on $\costspace_1$ and $\costspace_2$, respectively.
	The~\emph{cost vector} of a (partial) journey $\aJourney$ is the tuple $\cost(\aJourney)=(\cost_1(\aJourney), \cost_2(\aJourney))$ in the 2D cost space $\costspace_1 \times \costspace_2$.
	We call $\aJourney$ a \emph{representative} of $\cost(\aJourney)$.
	Two (partial) journeys $\aJourney,\aJourney'$ are \emph{equivalent} if $\cost(\aJourney)=\cost(\aJourney')$.
	The \emph{dominance relation} of $\costsystem$ is a partial order on $\costspace_1 \times \costspace_2$ that is defined as follows for cost vectors $\cost=(\cost_1,\cost_2)$ and $\cost'=(\cost_1',\cost_2')$:
	\begin{align*}
		\cost \preceq \cost' &\iff \cost_1 \preceq_1 \cost_1' \land \cost_2 \preceq_2 \cost_2',\\
		\cost \prec \cost' &\iff \cost \preceq \cost' \land \cost \neq \cost'.
	\end{align*}
	We say that $\cost$ \emph{weakly dominates} $\cost'$ if $\cost \preceq \cost'$ and $\cost$ \emph{strictly dominates} $\cost'$ if $\cost \prec \cost'$.
	The dominance relation carries over to (partial) journeys: we say $\aJourney$ weakly dominates $\aJourney'$ ($\aJourney \preceq \aJourney'$) if $\cost(\aJourney) \preceq \cost(\aJourney')$ and $\aJourney$ strictly dominates $\aJourney'$ ($\aJourney \prec \aJourney'$) if $\cost(\aJourney) \prec \cost(\aJourney')$.
	A (partial) journey $\aJourney\in\journeys$ and its cost vector $\cost(\aJourney)$ are called \emph{Pareto-optimal} if there is no (partial) journey $\aJourney'\in\journeys$ with $\cost(\aJourney') \prec \cost(\aJourney)$.
	The \changed{\emph{Pareto front}} is the set of all cost vectors $\cost(\aJourney)$ for $\aJourney\in\journeys$ such that $\cost(\aJourney)$ is Pareto-optimal.
	A \emph{representative set} is a subset of $\journeys$ that consists of exactly one representative for every Pareto-optimal cost vector.
	This set is not necessarily unique because there may be multiple representatives for the same cost vector.
    Note that journeys with a non-empty final footpath may dominate journeys with an empty final footpath; this requires that the footpath set is transitively closed.
	
	Unless otherwise noted, we use the cost system $\costsystem=((\arrivalTime,\leq),(\absoluteVal{\cdot},\leq))$ to evaluate journeys and $\costsystem_p=((\targetEvent,\preceq),(\absoluteVal{\cdot},\leq))$ to evaluate partial journeys.
	A journey is Pareto-optimal for a query $\query=(\sourceStop,\targetStop,\departureTime)$ if it is Pareto-optimal among the feasible journeys $\journeys(\query)$.
	Likewise, a partial journey is Pareto-optimal for a partial query $\query_p=(\sourceStop,\targetEvent,\departureTime)$ if it is Pareto-optimal among the feasible partial journeys $\journeys_p(\query_p)$.
	We call $\aJourney$ \emph{exit-optimal} for $\query$ if every exit prefix $\aJourney'$ of $\aJourney$ is Pareto-optimal for the query $(\sourceStop,\targetStop(\aJourney'),\departureTime)$.
	We call $\aJourney$ \emph{prefix-optimal} if every proper prefix $\aJourney'$ is Pareto-optimal for $(\sourceStop,\targetStop(\aJourney'),\departureTime)$ and every partial prefix $\aJourney_p$ for $(\sourceStop,\targetEvent(\aJourney_p),\departureTime)$.
	
	We consider several variants of the \emph{journey planning problem} in a public transit network $(\stops,\stopEvents,\trips,\lines,\footpaths)$.
	Given a query $\query=(\sourceStop,\targetStop,\departureTime)$, the (one-to-one) \textit{fixed departure time problem} asks for \changed{the Pareto front} (according to $\costsystem$) and a representative set.
	For the \textit{profile problem}, we are given an interval $\left[\atime_1, \atime_2\right],\,\atime_1 \leq \atime_2$ of possible departure times in addition to $\sourceStop$ and $\targetStop$.
	Here, the objective is to find the union of the \changed{Pareto fronts} for each query $(\sourceStop,\targetStop,\atime)$ with $\atime \in \left[ \atime_1, \atime_2\right]$, as well as a representative for each Pareto-optimal cost vector.
	In the~\emph{full-range profile problem}, the departure time interval spans the entire service duration of the network.
	In an~\emph{one-to-all query}, no target stop is given.
	Instead, the objective is to solve the fixed departure time or profile problem for every possible target stop $\targetStop\in\stops$.
	Finally, an~\emph{all-to-all query} asks for a solution to the journey planning problem for each possible pair of source and target stops.
	
	\subsection{Graph}
	A directed, weighted graph $\graph = (\vertices, \edges, \edgeWeight)$ consists of a set of nodes $\vertices$, a set of edges $\edges \subseteq \vertices \times \vertices$, and an edge weight function $\edgeWeight: \edges \to \mathbb{R}$.
	A \textit{path} $\aPath = \left< \aVertex_1, \aVertex_2, \dots, \aVertex_k\right>$ is a sequence of nodes between $\aVertex_1$ and $\aVertex_k$ such that an edge connects each pair of consecutive nodes.
	The weight of a path is the sum of the weights of all edges in the path.
	A path $\aPath = \left< \sourceVertex, \dots, \targetVertex \right>$ between a source node $\sourceVertex$ and a target node $\targetVertex$ is called a \textit{shortest path} if there is no path between $\sourceVertex$ and $\targetVertex$ with a smaller weight.
	
	Given a value $\numCells\in\mathbb{N}$ and a graph $\graph = \left(\vertices, \edges, \edgeWeight\right)$, a (\emph{$\numCells$-way}) \emph{partition} of $\graph$ is a function $\partition : \vertices \to \left\{1, \dots, \numCells\right\}$ that partitions the node set $\vertices$ into $\numCells$ \textit{cells}.
	The set of nodes in cell $i$ is denoted as $\vertices_i \coloneqq \partition^{-1}(i)$.
	An edge $(\vertexA,\vertexB) \in \edges$ is called a~\emph{cut edge} if $\vertexA$ and $\vertexB$ belong to different cells.
	A node is called a~\emph{boundary node} if it is incident to a cut edge.
	The partition is called~\emph{balanced} for an imbalance parameter $\varepsilon > 0$ if the size of each cell $\vertices_i$ is bounded by $\absoluteVal{\vertices_i} \leq \left( 1 + \varepsilon \right) \left\lceil \absoluteVal{\vertices}/\numCells\right\rceil$.
	The \emph{graph partitioning problem} asks for a balanced partition that minimizes the weighted sum of all~cut~edges.
	
	\section{Basic Algorithms}
	\label{sec:tb}
	In this section, we discuss the algorithms upon which our work is based.
	We outline the basic TB algorithm in Section~\ref{sec:tb:basic}, extensions for profile and one-to-all search in Section~\ref{sec:tb:extensions}, the TB-CST speedup technique in Section~\ref{sec:tb:cst}, and Arc-Flags in Section~\ref{sec:arcflags}.
	
	\subsection{Trip-Based Routing}
	\label{sec:tb:basic}
	
	The TB~\hbox{\cite{Wit15}} algorithm is split into two phases: a precomputation step generates a set $\transfers\subseteq\stopEvents\times\stopEvents$ of transfers between stop events, which is used in the query phase.
	
	\subparagraph*{Transfer Precomputation.}
	The preprocessing step begins by generating possible transfers between pairs of stop events.
	Then, pruning rules are applied to discard some transfers that are not required to answer queries.
	These pruning rules are not exhaustive, so $\transfers$ may still contain transfers that do not occur in any Pareto-optimal journey.
	The generation step creates all transfers $\aTransfer = \transfer{\tripA[i]}{\tripB[j]}$ with $i > 0$, $j < \absoluteVal{\tripB} - 1$ and $\tripA[i]\not\preceq\tripB[j]$ such that $\tripB$ is the earliest trip of its line that can be entered at $\aStop(\tripB[j])$ when arriving at $\arrivalTime(\tripA[i])+\transfertime{\aStop(\tripA[i])}{\aStop(\tripB[j])}$.
	The \textit{U-turn} reduction rule removes a transfer $\aTransfer = \transfer{\tripA[i]}{\tripB[j]}$ if $\aStop(\tripA[i-1]) = \aStop(\tripB[j+1])$ and $\arrivalTime(\tripA[i-1]) \leq \arrivalTime(\tripB[j+1])$.
	A second reduction rule, which we call the \textit{latest-exit} rule, removes transfers that can be replaced with another transfer from the same trip.
	For every journey of the form $\aJourney = \left<\tripSegment{\tripA}{k}{i}, \tripSegment{\tripB}{j}{\ell} \right>$ that uses the transfer $\aTransfer$, the algorithm checks whether there is a journey of the form $\aJourney'=\left<\tripSegment{\tripA}{k}{i'},\tripSegment{\tripC}{j'}{\ell'}, \footpath \right>$ with $i < i'$, $\footpath=(\aStop(\tripC[\ell']), \aStop(\tripB[\ell]))$ and $\aJourney'\preceq\aJourney$.
	Note that $\aJourney'$ is required to start with the same trip as $\aJourney$ and must exit it at a later stop event.
	This means that many possible journeys that might dominate $\aJourney$ are not considered.
	
	\subparagraph*{Query Algorithm.}
	\begin{figure}
		\begin{minipage}{\textwidth}
			\begin{algorithm}[H]
				\caption{TB trip scanning operation (for target stop $\targetStop$).}
				\label{alg:tbts}
				\myproc{\Scan{$n, \minTime$}\label{alg:tb:ts:begin}}{
					$\aQueue_{n+1}\leftarrow{}\emptyset$\;
					\ForEach{$\tripSegment{\aTrip}{j}{k}\in\aQueue_n$\label{alg:tb:ts:queue:firstloop}}{
						\For{$i$ from $j$ to $k$\label{alg:tb:ts:loop1begin}}{
							\lIfComment{target pruning}{$\arrivalTime(\aTrip[i])\geq\minTime$\label{alg:tb:ts:final}}{\Break}\label{alg:tb:targetpruning}
							\IfComment{create cost vector}{$\arrivalTime(\aTrip[i])+\transfertime{\aStop(\aTrip[i])}{\targetStop}<\minTime$\label{alg:tb:improv}}{
								$\minTime\leftarrow\arrivalTime(\aTrip[i])+\transfertime{\aStop(\aTrip[i])}{\targetStop}$\;
								$\paretoset\leftarrow{}\paretoset\cup\left\{\left(\minTime,n\right)\right\}$, removing dominated entries\;\label{alg:tb:domination}
							}
						}
					}
					\ForEach{$\tripSegment{\aTrip}{j}{k}\in\aQueue_n$\label{alg:tb:ts:queue:secondloop}}{
						\For{$i$ from $j$ to $k$\label{alg:tb:ts:loop2begin}}{
							\lIfComment{target pruning}{$\arrivalTime\left(\aTrip[i]\right)\geq\minTime$}{\Break}\label{alg:tb:targetpruning2}
							\ForEachComment{scan transfers}{$\transfer{\aTrip[i]}{\aTrip'[i']}\in\transfers$}{
								$\Enqueue\left(\aTrip', i'+1, n+1\right)$\label{alg:tb:ts:enqueue}
							}
						}
					}
				}
			\end{algorithm}
		\end{minipage}
	\end{figure}
	
	The TB query algorithm resembles a breadth-first search on a graph with the trips as nodes and the precomputed transfers as edges.
	The algorithm tracks which parts of the network have already been explored by maintaining a~\emph{reached index} $\reachedIndex\left(\aTrip\right)$ for each trip $\aTrip$.
	This is the index of the first reached stop event of $\aTrip$, or $\absoluteVal{\aTrip}$ if none have been reached yet.
	Additionally, the algorithm maintains \changed{the Pareto front $\paretoset$} among the cost vectors of all journeys to $\targetStop$ found so far.
	The search operates in~\emph{rounds}, where round $n$ finds \changed{the journeys in the representative set with $n$ trips}.
	Each round $n$ maintains a first-in-first-out (FIFO) queue $\aQueue_n$ of newly reached trip segments, which are scanned during the round.

    Before the first round, the queue $\aQueue_1$ is filled by examining the lines that visit the source stop $\sourceStop$ or a stop that is reachable from $\sourceStop$ with an initial footpath.
	For each line visiting a reachable stop $\aStop$ with index $i$, the algorithm identifies the earliest trip $\aTrip$ of the line that can be entered at $\aStop$ when arriving there at $\departureTime+\transfertime{\sourceStop}{\aStop}$.
	The $\Enqueue$ operation is then called to test whether a new trip segment starting at $\aTrip[i+1]$ needs to be enqueued.
    Algorithm~\ref{alg:tbenq} depicts the $\Enqueue$ operation of a stop event $\aTrip[j]$.
	If $j<\reachedIndex(\aTrip)$, then the trip segment $\tripSegment{\aTrip}{j}{\reachedIndex(\aTrip)-1}$ is added to the queue for the next round.
	Additionally, for every succeeding trip $\aTrip'\succeq\aTrip$ of the same line, the reached index $\reachedIndex\left(\aTrip'\right)$ is set to $\min\left(\reachedIndex(\aTrip'), j\right)$.
	This ensures that the search only enters the earliest reachable trip of each line, a principle we call \emph{line pruning}.

    Pseudocode for the trip scanning operation is given in Algorithm~\ref{alg:tbts}.
	It omits steps that are needed to retrieve the representative set of journeys; these are discussed below.
	A trip segment $\tripSegment{\aTrip}{j}{k}$ is scanned by iterating over the stop events $\aTrip[i]$ with $j\leq{}i\leq{}k$.
	For each stop event $\aTrip[i]$, the loop in lines~\ref{alg:tb:ts:loop1begin}--\ref{alg:tb:domination} creates a new journey $\aJourney$ by exiting the trip and (if necessary) taking a final footpath to $\targetStop$.
	To discard dominated journeys, the algorithm maintains the earliest arrival time $\minTime$ at the target stop found so far.
	If $\aJourney$ arrives at or after $\minTime$, then it is weakly dominated by another journey, so it is discarded.
	Otherwise, the cost vector $\cost(\aJourney)$ is added to the \changed{Pareto front} $\paretoset$, dominated cost vectors are removed, and $\minTime$ is set to $\arrivalTime(\aJourney)$.
	If the arrival time of $\aTrip[i]$ is not earlier than $\minTime$, then all following stop events can be skipped because they cannot be extended to journeys that improve $\minTime$ (see line~\ref{alg:tb:targetpruning}).
	This principle is known as \textit{target pruning}.
	The loop in lines~\ref{alg:tb:ts:loop2begin}--\ref{alg:tb:ts:enqueue} scans the outgoing transfers of $\aTrip[i]$ to find newly reachable trip segments.
	Again, target pruning is applied in line~\ref{alg:tb:targetpruning2}.
	A transfer $\transfer{\aTrip[i]}{\aTrip'[i']}\in\transfers$ is scanned by calling the $\Enqueue$ operation for $\aTrip'[i'+1]$.

	\begin{figure}
		\begin{minipage}{\textwidth}
			\begin{algorithm}[H]
				\caption{TB enqueuing operation.}
				\label{alg:tbenq}
				\myproc{\Enqueue{$\aTrip, j, n$}}{
					\lIfComment{trip segment already reached}{$\reachedIndex\left(\aTrip\right)\leq{}j$}{\Return}
					$\aQueue_n\leftarrow\aQueue_n\cup\left\{\tripSegment{\aTrip}{j}{\reachedIndex(\aTrip)-1}\right\}$\;
					\ForEachComment{update reached index}{$\aTrip'\succeq\aTrip$}{
						\lIf{$\reachedIndex\left(\aTrip'\right)\leq{}j$}{\Break}
						$\reachedIndex\left(\aTrip'\right)\leftarrow{}j$\;
					}
				}
			\end{algorithm}
		\end{minipage}
	\end{figure}
		
    Note that Algorithm~\ref{alg:tbts} loops over the trip segments in $\aQueue_n$ twice: once to create cost vectors and update $\minTime$, and then again to scan transfers.
	  Using two loops instead of one improves the cache efficiency~\cite{Wit15}.
    It also makes the target pruning check in line~\ref{alg:tb:targetpruning2} more effective because the value of $\minTime$ is final for this round after this first loop.
		
	\subparagraph*{Journey Unpacking.}
	As presented thus far, the query algorithm only computes the \changed{Pareto front $\paretoset$}.
	To retrieve the representative set of journeys, the trip segments in the FIFO queues and the cost vectors additionally store~\emph{parent pointers}.
	For each trip segment $\tripSegment{\aTrip_n}{j_n}{k_n}$ inside a queue $\aQueue_n$, the algorithm stores a pointer to the preceding trip segment $\tripSegment{\aTrip_{n-1}}{j_{n-1}}{k_{n-1}}$, which is located in the previous queue $\aQueue_{n-1}$.
	Additionally, it stores the index $i_{n-1}$ with $j_{n-1} \leq i_{n-1} \leq k_{n-1}$ at which $\aTrip_{n-1}$ is exited when the transfer $\transfer{\aTrip_{n-1}[i_{n-1}]}{\aTrip_n[j_n-1]}$ is used.
	These two values are set during the $\Enqueue$ operation for $\aTrip_n[j_n]$.
	Similarly, each cost vector $(\atime,n)\in\paretoset$ is associated with a pointer to the last trip segment of the corresponding journey, which is located in $\aQueue_n$, and the index at which the trip segment is exited.
	Note that this scheme requires that the FIFO queue $\aQueue_n$ for each round $n$ is kept in memory after the round is done, because it will be accessed again during journey unpacking.
	See~\cite{Wit15} for details on how to implement the queues efficiently.
	
	\subsection{TB Extensions}
	\label{sec:tb:extensions}
	\subparagraph*{Profile Search.}
	Profile-TB extends TB to solve the profile problem.
	It collects all possible departure times at $\sourceStop$ within the departure time interval $[\atime_1,\atime_2]$ and processes them in descending order.
	A departure time $\departureTime$ is possible if there is a stop event $\aTrip[i]$ such that $\departureTime+\transfertime{\sourceStop}{\aStop(\aTrip[i])}=\departureTime(\aStop[i])$.
	For each departure time, a run of the TB query algorithm is performed.
	All data structures, including reached indices, are not reset between runs.
	This exploits the fact that journeys found in a run with departure time $\departureTime$ are still feasible for all departure times before $\departureTime$, so they can be used to prune suboptimal results in subsequent runs.
	This principle is called \emph{self-pruning}.
	To obtain correct results, the definition of reached indices must be modified slightly.
	For each trip $\aTrip$ and each number of trips $n$, the algorithm now maintains a reached index $\reachedIndex_n(\aTrip)$, which is the index of the first stop event in $\aTrip$ that was reached with $n$ or fewer trips.
	Whenever $\reachedIndex_n(\aTrip)$ is updated to $\min(\reachedIndex_n(\aTrip),k)$ for some value $k$, the same is done for the reached indices $\reachedIndex_m(\aTrip)$ with $m \geq n$.
	
	\subparagraph*{One-to-All Search.}
	\label{sec:tb:extensions:one-to-all}
	A one-to-all extension of Profile-TB was briefly outlined by~\cite{Wit16}.
	The same extension can also be applied to the original TB algorithm for fixed departure time queries.
	We give a more detailed description.
	The earliest arrival time $\minTime$ and the \changed{Pareto front} $\paretoset$ are replaced with an earliest arrival time $\arrivalTime(\aStop,n)$ for each stop $\aStop$ and number of trips $n$.
	\changed{This is the minimum arrival time among feasible journeys for the query~$(\sourceStop, \aStop, \departureTime)$ with at most~$n$ trips found so far, or $\infty$ if none have been found (yet)}.
	The \changed{representative} set of journeys with target stop $\aStop$ is \changed{maintained} implicitly: the cost vector $(\arrivalTime(\aStop,n),n)$ is Pareto-optimal if $\arrivalTime(\aStop,n) < \arrivalTime(\aStop,n-1)$, where we define $\arrivalTime(\aStop,-1)=\infty$ for the sake of simplicity.
	To unpack the journeys, the algorithm stores a trip segment pointer and exit index along with each earliest arrival time $\arrivalTime(\aStop,n)$, in the same manner as the one-to-one algorithm.
	As with the reached indices, the profile query does not reset the earliest arrival times between runs.
	
	The trip scanning operation for a round $n$ (cf.\ Algorithm~\ref{alg:tbts}) is modified as follows:
	When a stop event $\aTrip[k]$ is scanned during the first loop \changed{(lines~\ref{alg:tb:ts:queue:firstloop}--\ref{alg:tb:domination})}, the algorithm iterates over all stops $\aStop$~with $\transfertime{\aStop(\aTrip[k])}{\aStop}<\infty$ and computes the arrival time $\overline{\arrivalTime}=\arrivalTime(\aTrip[k])+\transfertime{\aStop(\aTrip[k])}{\aStop}$ at $\aStop$.
	If $\overline{\arrivalTime}<\arrivalTime(\aStop,n)$, then $\arrivalTime(\aStop,m)$ is set to $\min(\overline{\arrivalTime},\arrivalTime(\aStop,m))$ for all $m\geq{}n$.
	The target pruning rule, which is no longer applicable, is replaced with a \textit{local pruning} rule:
	When a stop event $\aTrip[k]$ is scanned during the second loop \changed{(lines~\ref{alg:tb:ts:queue:secondloop}--\ref{alg:tb:ts:enqueue})}, the algorithm checks whether $\arrivalTime(\aTrip[k]) > \arrivalTime(\aStop(\aTrip[k]),n)$.
	If so, the outgoing transfers of $\aTrip[k]$ are not scanned.
	Similar rules are also employed in other algorithms, such as RAPTOR~\cite{Del15b}.
	
	\subsection{Condensed Search Trees}
	\label{sec:tb:cst}
	Trip-Based Routing Using Condensed Search Trees~(TB-CST)~\cite{Wit16} uses TB to precompute search graphs in a similar manner to TP.
	The preprocessing phase solves the all-to-all full-range profile problem by running one-to-all Profile-TB from every stop.
	Consider the Profile-TB search for a source stop $\sourceStop$.
	After each TB run, all newly found representative journeys are unpacked.
	This yields a breadth-first search tree with $\sourceStop$ as the root, trip segments as inner nodes, the reached target stops as leaves, and footpaths and transfers as edges.
	The search trees of all runs are combined into the \emph{prefix tree} of $\sourceStop$.
	Here, each trip segment $\tripSegment{\aTrip}{i}{j}$ is replaced with a tuple $(\aLine(\aTrip),i)$ consisting of the corresponding line $\aLine(\aTrip)$ and the stop index $i$ where the line is entered.
	
	To answer a query $(\sourceStop,\targetStop,\departureTime)$, TB-CST builds a~\emph{query graph} from the prefix tree of $\sourceStop$ by extracting all paths that lead to a leaf representing $\targetStop$.
	Then a variant of Dijkstra's algorithm is run on the query graph.
	Because the prefix tree only provides information about lines, the used trips are reconstructed during the query.
	When exploring an edge from $\sourceStop$ to a tuple $(\aLine,i)$, the algorithm computes the earliest trip of $\aLine$ that can be entered at the $i$-th stop.
	When exploring an edge between tuples $(\aLine,i)$ and $(\aLine',j)$, the used trip $\aTrip$ of $\aLine$ is already known, so the algorithm scans the outgoing transfers of $\aTrip[i]$ in $\transfers$ to find the earliest reachable trip $\aTrip'$ of $\aLine'$.\looseness=-1
	
	The space required to store all prefix trees can be reduced by extracting \emph{postfix trees}.
	Consider the prefix tree for a source stop $\sourceStop$.
	For each path from the root to a leaf representing a target stop $\targetStop$, a~\emph{cut node} is chosen.
	The subpath from the cut node to the leaf is then removed from the prefix tree of $\sourceStop$ and added to the postfix tree of $\targetStop$.
	Because many of these extracted subpaths occur in multiple prefix trees, moving them into a shared postfix tree considerably reduces memory consumption.
	To construct the query graph for a source stop $\sourceStop$ and target stop $\targetStop$, the prefix tree $\sourceStop$ and the postfix tree of $\targetStop$ are spliced back together at the cut nodes.
	
	\subsection{Arc-Flags}
	\label{sec:arcflags}
	Arc-Flags~\cite{Hil09,Lau09,Moe06} is a speedup technique for Dijkstra's algorithm in road networks.
	Given a weighted graph $\graph=\left(\vertices, \edges, \edgeWeight\right)$, the preprocessing phase of Arc-Flags performs two steps:
	First, it computes a partition $\partition: \vertices \to \{1,\dots,\numCells\}$ of the node set into $\numCells$ cells, where $\numCells$ is a freely chosen parameter.
	Then, a~\emph{flags function} $\aFlag : \edges\times\left\{1, \dots, \numCells\right\} \to \left\{0, 1\right\}$ is computed.
	For an edge $\anEdge$ and a cell $i$, the value $\aFlag(\anEdge, i)$ is called a \emph{flag}, and we say that $\anEdge$ is~\emph{flagged} for cell $i$ if $\aFlag(\anEdge, i)=1$.
	The flags function must have the following property: for each pair of source node $\sourceVertex$ and target node $\targetVertex$, there is at least one shortest path $\aPath$ from $\sourceVertex$ to $\targetVertex$ such that every edge $\anEdge$ along $\aPath$ is flagged for the cell $\partition(\targetVertex)$ containing the target node.
	With this precomputed information, a shortest path query between $\sourceVertex$ and $\targetVertex$ can be answered by running Dijkstra's algorithm but only scanning edges that are flagged for $\partition(\targetVertex)$.
	The parameter $\numCells$ imposes a tradeoff between query speed and memory consumption.
	The space required to store the flags is in $\Theta\left(\numCells\absoluteVal{\edges}\right)$, which is manageable for $\numCells \ll \absoluteVal{\vertices}$.
	On the other hand, the search space of the query decreases for larger values of $\numCells$ because fewer edges will be flagged if the target cell is smaller.
	
	Flags can be computed naively by computing a shortest-path tree from every node.
	This can be improved by exploiting the observation that every shortest path that leads into a cell $i$ must pass through a boundary node of $i$.
	Thus, it is sufficient to compute backward shortest-path trees from all boundary nodes.
	For more details, we refer to~\cite{Hil09}.
	
	\section{Precomputing Auxiliary Data with TB}
	\label{sec:transferset}
	TB-CST precomputes its auxiliary data by solving the all-to-all full-range profile problem with Profile-TB.
	Our own approach, which we present in Section~\ref{sec:flashtb}, uses the same precomputation scheme but extracts a different set of data from the computed journeys.
	In this section, we discuss challenges that arise when using Profile-TB for this computation and how they can be overcome.
	
	\subsection{Issues}
	\label{sec:transferset:issues}
    If one-to-all TB with local pruning (see Section~\ref{sec:tb:extensions:one-to-all}) is combined with the original TB transfer precomputation, it may answer some queries incorrectly.
	This is because local pruning is incompatible with the latest-exit reduction rule employed by the preprocessing.
	Local pruning ensures that the journeys found by one-to-all TB are exit-optimal, \ie a trip is only exited if the resulting arrival time at the stop is optimal.
	By contrast, the latest-exit rule prefers journeys that exit a trip at the latest possible stop.
	As shown in Figure~\ref{fig:latestexit}, these two choices can conflict.
	In this example, exiting the trip $\tripA$ at $\tripA[1]$ and walking to $\aStop(\tripA[2])$ is faster than exiting at $\tripA[2]$.
	Therefore, local pruning prevents the transfer $\transfer{\tripA[2]}{\tripB[0]}$ from being scanned.
	However, the latest-exit rule discards the transfer $\transfer{\tripA[1]}{\tripB[0]}$ in favor of $\transfer{\tripA[2]}{\tripB[0]}$.
	As a result, one-to-all TB will not find any journeys from $\aStop(\tripA[0])$ to $\aStop(\tripB[1])$.
	A similar issue can also occur in circle lines that visit the same stop twice.
	Due to local pruning, transfers will only be scanned for the first visit.
	However, the latest-exit rule will only keep the outgoing transfers for the second visit.
	Note that this issue does not occur in one-to-one variants of TB, which do not use local pruning.
	The TB-CST preprocessing as originally implemented by~\hbox{\cite{Wit15}} also avoids this issue because it does not apply local pruning.
	
	\begin{figure}[tp!]
		\caption{
			An example network in which local pruning conflicts with the latest-exit rule.
			Gray boxes represent stops.
			Nodes within the boxes represent stop events and are labeled with their indices along the respective trip.
			Colored edges represent trips, which are labeled with the departure and arrival times of the respective stop events.
			Gray edges represent footpaths (between stops) and transfers (between stop events).
			In this example, it is faster to exit ${\tripA}$ at ${\tripA[1]}$ and take the footpath to ${\aStop(\tripA[2])}$ than to stay seated until ${\tripA[2]}$.
			Therefore, local pruning prevents the one-to-all TB query from scanning the transfer ${\transfer{\tripA[2]}{\tripB[0]}}$.
			However, the latest-exit rule discards the transfer ${\transfer{\tripA[1]}{\tripB[0]}}$.
		}
		\label{fig:latestexit}
		\centering
		\begin{tikzpicture}
	\node (s0A) at (0.00, 0.00) {};%
	\node (s1A) at (3.00, 0.00) {};%
	\node (s2A) at (6.00, 0.00) {};%
	\node (s3A) at (9.00, 0.00) {};%
	
	\node (s0B) at (7.00, -1.50) {};%
	\node (s1B) at (10.00, -1.50) {};%
    
    \begin{pgfonlayer}{background}
    \node [fit=(s0A),stop] {};%
    \node [fit=(s1A),stop] (v1) {};%
    \node [fit=(s2A),stop] (v2) {};%
    \node [fit=(s3A),stop] {};%
    \node [fit=(s0B),stop] {};%
    \node [fit=(s1B),stop] {};%
	\end{pgfonlayer}
    
	\node [align=left,text=KITblue] at (1.50, 0.30) {\small{$0\rightarrow10$}};%
	\node [align=left,text=KITblue] at (4.50, 0.30) {\small{$10\rightarrow20$}};%
	\node [align=left,text=KITblue] at (7.50, 0.30) {\small{$20\rightarrow30$}};%
	\node [align=left,text=KITgreen] at (8.50, -1.20) {\small{$25\rightarrow35$}};%
	
	\node [align=left,text=KITblue] at (-0.70, 0.00) {\small{$\tripA$}};%
	\node [align=left,text=KITgreen] at (6.30, -1.50) {\small{$\tripB$}};%
	\node [align=left,text=edgeColor] at (4.50, -1.10) {\small{$5$}};%
	\node [align=left,text=edgeColor] at (6.75, -0.70) {\small{$2$}};%
	
	\node (s0A_v) at (s0A) [vertex,draw=KITblue,fill=KITblue!15] {\gs};%
	\node (s1A_v) at (s1A) [vertex,draw=KITblue,fill=KITblue!15] {\gs};%
	\node (s2A_v) at (s2A) [vertex,draw=KITblue,fill=KITblue!15] {\gs};%
	\node (s3A_v) at (s3A) [vertex,draw=KITblue,fill=KITblue!15] {\gs};%
	\node (s0B_v) at (s0B)  [vertex,draw=KITgreen,fill=KITgreen!15] {\gs};%
	\node (s1B_v) at (s1B)  [vertex,draw=KITgreen,fill=KITgreen!15] {\gs};%
    
    \begin{pgfonlayer}{background}
    \draw [KITblue, route] (s0A) -- (s1A) -- (s2A) -- (s3A_v);
    \draw [KITgreen, route] (s0B) -- (s1B_v);
    \draw [directedEdge, shorten <= -1.3pt, shorten >= -1.3pt]  (v1) to[bend right=45] (v2);
    \draw [directedEdge]  (s2A) -- (s0B_v);
    \end{pgfonlayer}
	
	\node at (s0A) [text=KITblue] {\small{$0$}};%
	\node at (s1A) [text=KITblue] {\small{$1$}};%
	\node at (s2A) [text=KITblue] {\small{$2$}};%
	\node at (s3A) [text=KITblue] {\small{$3$}};%
	
	\node at (s0B) [text=KITgreen] {\small{$0$}};%
	\node at (s1B) [text=KITgreen] {\small{$1$}};%
\end{tikzpicture}
	\end{figure}
	
	Another issue that does occur in TB-CST but previously went unnoticed is caused by a conflict between the line pruning rule employed in the query phase and the self-pruning rule used by Profile-TB.
	As discussed in Section~\ref{sec:tb:cst}, the query algorithm only explores the earliest reachable trip of each line; later trips of the same line are not explored.
	This creates a problem in the example shown in Figure~\ref{fig:conflictline}.
	The one-to-all Profile-TB precomputation from $\sourceStop$ will add an edge from $(\aLine(\tripA),1)$ to $(\aLine(\tripC),0)$ to the prefix tree of $\sourceStop$, but due to self-pruning, it will not add an edge from $(\aLine(\tripA),1)$ to $(\aLine(\tripD),0)$.
	During a query from $\sourceStop$ for the departure time $0$, the algorithm explores the edge from $(\aLine(\tripA),1)$ to $(\aLine(\tripC),0)$.
	However, it only searches for outgoing transfers from the earliest reachable trip of the line, which is $\tripB$.
	No corresponding transfer exists and the transfer from the later trip $\tripA$ is ignored.
	Hence, the algorithm fails to find any journey to $\targetStop$.
	A simple workaround is to disable line pruning in the query algorithm: in addition to scanning the outgoing transfers of the earliest reachable trip of a line, those of all later trips are scanned as well.
	In the example from Figure~\ref{fig:conflictline}, this ensures that the transfer $\transfer{\tripA[1]}{\tripC[0]}$ is scanned and $\tripC$ is reached.
	
	\begin{figure}[tp!]
		\caption{
			An example network showing the conflict between line pruning and self-pruning.
			The trips ${\tripA}$ and ${\tripB}$ belong to the same line.
			For a query from ${\sourceStop}$ to ${\targetStop}$ with departure time ${0}$, a query algorithm with line pruning finds the journey ${\left<\tripSegment{\tripB}{0}{1}, \tripSegment{\tripD}{0}{1}\right>}$.
			The later trip ${\tripA}$ is not entered.
			A Profile-TB query, however, only finds the journey ${\left<\tripSegment{\tripA}{0}{1}, \tripSegment{\tripC}{0}{1}\right>}$ due to self-pruning.
		}
		\label{fig:conflictline}
		\centering
		\begin{tikzpicture}[scale=1.04]
	\begin{scope}[yscale=0.8]
		\node (ta0) at (0.00, 0.00) {};%
		\node (ta1) at (3.00, 0.00) {};%
        \node (tb0) at (0.00, -1.50) {};%
        \node (tb1) at (3.00, -1.50) {};%
		\node (tc0) at  (5.00, 0.00) {};%
		\node (tc1) at  (8.00, 0.00) {};%
        \node (td0) at  (5.00, -1.50) {};%
        \node (td1) at  (8.00, -1.50) {};%
		\node (source_label) at (0.00, 0.50) {};%
		\node (target_label) at (8.00, 0.50) {};%

        \begin{pgfonlayer}{background}
		\node [fit=(ta0)(tb0)(source_label),stop] {};%
		\node [fit=(ta1)(tb1),stop] {};%
		\node [fit=(tc0),stop] {};%
        \node [fit=(td0),stop] {};%
		\node [fit=(tc1)(td1)(target_label),stop] {};%
		\end{pgfonlayer}
		
		\node [align=left,text=KITblue] at ( 0.60, -0.40) {\small{$\tripA$}};%
		\node [align=left,text=KITblue] at ( 0.60, -1.90) {\small{$\tripB$}};%
		\node [align=left,text=KITgreen] at ( 5.60, -0.40) {\small{$\tripC$}};%
		\node [align=left,text=KITred] at ( 5.60, -1.90) {\small{$\tripD$}};%
        
        \node [align=left,text=KITblue] at (1.50, 0.30) {\small{$5\rightarrow10$}};%
		\node [align=left,text=KITblue] at (1.50, -1.20) {\small{$0\rightarrow5$}};%
        \node [align=left,text=KITgreen] at (6.50, 0.30) {\small{$15\rightarrow20$}};%
        \node [align=left,text=KITred] at (6.50, -1.20) {\small{$10\rightarrow20$}};%
        
		\node (ta0_v) at (ta0) [vertex,draw=KITblue,fill=KITblue!15] {\gs};%
		\node (tb0_v) at (tb0) [vertex,draw=KITblue,fill=KITblue!15] {\gs};%
		\node (ta1_v) at (ta1) [vertex,draw=KITblue,fill=KITblue!15] {\gs};%
		\node (tb1_v) at (tb1) [vertex,draw=KITblue,fill=KITblue!15] {\gs};%
		\node (tc0_v) at (tc0)  [vertex,draw=KITgreen,fill=KITgreen!15] {\gs};%
		\node (tc1_v) at (tc1)  [vertex,draw=KITgreen,fill=KITgreen!15] {\gs};%
        \node (td0_v) at (td0)  [vertex,draw=KITred,fill=KITred!15] {\gs};%
        \node (td1_v) at (td1)  [vertex,draw=KITred,fill=KITred!15] {\gs};%
        
        \begin{pgfonlayer}{background}
        \draw [KITblue, route] (ta0) -- (ta1_v);
        \draw [KITblue, route] (tb0) -- (tb1_v);
        \draw [KITgreen, route] (tc0) -- (tc1_v);
        \draw [KITred, route] (td0) -- (td1_v);
                
        \draw [directedEdge]  (ta1) -- (tc0_v);
        \draw [directedEdge]  (tb1) -- (td0_v);
        \end{pgfonlayer}
		
		\node at (ta0) [text=KITblue] {\small{$0$}};%
		\node at (tb0) [text=KITblue] {\small{$0$}};%
		\node at (ta1) [text=KITblue] {\small{$1$}};%
		\node at (tb1) [text=KITblue] {\small{$1$}};%
		\node at (tc0)  [text=KITgreen] {\small{$0$}};%
		\node at (tc1)  [text=KITgreen] {\small{$1$}};%
        \node at (td0)  [text=KITred] {\small{$0$}};%
        \node at (td1)  [text=KITred] {\small{$1$}};%
		\node at (source_label) [text=nodeColor!100] {\small{$\sourceStop$}};%
		\node at (target_label) [text=nodeColor!100] {\small{$\targetStop$}};%
	\end{scope}
\end{tikzpicture}
	\end{figure}
	
	\subsection{Property-Preserving Transfer Sets}
	Both issues are caused by the fact that the different phases of the algorithm make inconsistent choices between representatives of the same cost vector.
	Although this can be avoided by disabling certain pruning rules, doing so increases the query search space.
	Instead, we modify the transfer generation and profile search phases to make them compatible with each other and the query phase.

	To formalize the notion of compatibility, we study whether a transfer set preserves certain journey properties.
	For a transfer set $\transfers\subseteq\stopEvents\times\stopEvents$ and a query $\query$, let $\journeys(\query,\transfers)\subseteq\journeys(\query)$ denote the set of feasible journeys whose transfers are all included in $\transfers$.
	Let $x$ denote a property of journeys, such as exit optimality.
	The transfer set $\transfers$ is called \emph{$x$-preserving} for $\query$ if there exists a representative set $\representativeSet$ for $\query$ such that every journey $\aJourney\in\representativeSet$ has the property $x$ and is contained in $\journeys(\query,\transfers)$.
	If $\transfers$ is $x$-preserving for all possible queries, we simply call it $x$-preserving overall.
	
	We show that one-to-all TB with local pruning is correct when using an exit-optimality-preserving~(EOP) transfer set.
	Note that while the set of all possible transfers is EOP, it is extremely large and will lead to slow queries.
	In Section~\ref{sec:transferset:ultra}, we present an algorithm that computes a sufficiently small EOP transfer set.
	First, we show that for each Pareto-optimal cost vector, there is an exit-optimal representative whose stop events are all scanned by the search.
	Note that this is not necessarily the journey returned by the algorithm.
	
	\begin{lemma}
		\label{th:one-to-all:scan}
		Let $\transfers$ be a transfer set and $\aJourney=\left< \footpath_0, \tripSegment{\aTrip_1}{i_1}{j_1}, \dots, \tripSegment{\aTrip_k}{i_k}{j_k}, \footpath_{k+1} \right> \in \journeys(\query,\transfers)$ an exit-optimal journey for a query $\query=(\sourceStop,\targetStop,\departureTime)$.
		Consider a one-to-all TB search for $\query$ using $\transfers$ as the transfer set.
		For $1 \leq n \leq k$, round $n$ scans the stop event $\aTrip_n[j_n]$.
	\end{lemma}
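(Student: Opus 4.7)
\emph{Plan.} The argument goes by induction on $n$, driven by exit-optimality of $\aJourney$ at each intermediate exit stop $\aStop(\aTrip_m[j_m])$. Throughout, I appeal to the standard TB soundness invariants: every call $\Enqueue(\aTrip, j, n)$ corresponds to a feasible partial journey reaching $\aTrip[j-1]$ with exactly $n$ trips via transfers in $\transfers$, and every recorded $\arrivalTime(\aStop, n)$ is the arrival time of some feasible journey to $\aStop$ with at most $n$ trips.

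\textbf{Base case ($n = 1$).} Since $\aStop(\aTrip_1[i_1])$ is reachable via $\footpath_0$, the initial step processes $\aLine(\aTrip_1)$ at entry index $i_1$ and enters its earliest feasible trip $\aTrip_1^* \preceq \aTrip_1$. If $\aTrip_1^* \prec \aTrip_1$, then $\langle \footpath_0, \tripSegment{\aTrip_1^*}{i_1}{j_1} \rangle$ strictly dominates $\aJourney$'s first exit prefix by non-overtaking, contradicting exit-optimality. Hence $\aTrip_1^* = \aTrip_1$ and $\Enqueue(\aTrip_1, i_1+1, 1)$ is invoked. The same dominance argument eliminates every other initial call $\Enqueue(\aTrip_1', j', 1)$ with $\aTrip_1' \prec \aTrip_1$, so every call that decreases $\reachedIndex_1(\aTrip_1)$ is for $\aTrip_1$ itself. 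The segments these calls add to $\aQueue_1$ telescope to cover $\{j_*, j_*+1, \ldots, \absoluteVal{\aTrip_1}-1\}$ for some $j_* \leq i_1+1 \leq j_1$, so $\aTrip_1[j_1]$ lies in some segment of $\aQueue_1$ and is scanned.

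\textbf{Inductive step.} Assume $\aTrip_{n-1}[j_{n-1}]$ is scanned in round $n-1$. When the first loop processes it, the empty-footpath self-update sets $\arrivalTime(\aStop(\aTrip_{n-1}[j_{n-1}]), n-1) \leq \arrivalTime(\aTrip_{n-1}[j_{n-1}])$; exit-optimality of the $(n-1)$-th exit prefix prevents a strict inequality, so the local pruning check in the second loop does not fire and the outgoing transfers of $\aTrip_{n-1}[j_{n-1}]$ are scanned. Since $\transfer{\aTrip_{n-1}[j_{n-1}]}{\aTrip_n[i_n]} \in \transfers$, this triggers $\Enqueue(\aTrip_n, i_n+1, n)$. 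Any call $\Enqueue(\aTrip_n^*, j^*, n^*)$ that could preempt it via the reached-index machinery must satisfy $\aTrip_n^* \preceq \aTrip_n$, $n^* \leq n$, and $j^* \leq i_n+1$; extending its associated partial journey along $\aTrip_n^*$ to index $j_n$ yields a feasible journey to $\aStop(\aTrip_n[j_n])$ of cost $(\arrivalTime(\aTrip_n^*[j_n]), n^*)$. By non-overtaking this weakly dominates $(\arrivalTime(\aTrip_n[j_n]), n)$, and the dominance is strict unless $\aTrip_n^* = \aTrip_n$ and $n^* = n$; exit-optimality of the $n$-th exit prefix forces both equalities. Hence every preempting call contributes a segment of $\aTrip_n$ to $\aQueue_n$, and these segments telescope to cover an interval $\{j_*, \ldots, \absoluteVal{\aTrip_n}-1\}$ with $j_* \leq i_n+1 \leq j_n$, so $\aTrip_n[j_n]$ is scanned in round $n$.

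\textbf{Main obstacle.} The subtle part is the reached-index bookkeeping. A priori, $\reachedIndex_n(\aTrip_n)$ could be lowered by a call for a strictly earlier trip of $\aLine(\aTrip_n)$, or by propagation from an earlier round $n^* < n$; in either case the offending segment would end up either for the wrong trip or in the wrong queue, so $\aTrip_n[j_n]$ might go unscanned in round $n$. Ruling both scenarios out uniformly from exit-optimality via the partial-journey extension above, and then verifying that the segments legitimately added to $\aQueue_n$ together cover $j_n$, is the heart of the argument.
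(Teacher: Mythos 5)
Your proof is correct and follows essentially the same route as the paper's: both argue by induction that the \Enqueue call for $\aTrip_n[i_n+1]$ is made in round $n-1$ (via non-pruning at the exit-optimal exit event and membership of the transfer in $\transfers$), and both use exit-optimality to rule out that any enqueue call for a predecessor trip, an earlier round, or a different index lowers the reached index of $\aTrip_n$ to $j_n$ or below, so that the segments enqueued for round $n$ must cover $\aTrip_n[j_n]$. One minor imprecision: a call can also interfere when $i_n+1 < j^* \leq j_n$ (by truncating the legitimately enqueued segment short of $j_n$, or by placing the segment containing $j_n$ in the wrong queue), not only when $j^* \leq i_n+1$ as you state, but the dominance argument you give --- extending the associated partial journey along the trip to index $j_n$ --- applies verbatim to that whole range, so the conclusion stands.
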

	\begin{proof}
		Because the exit prefix $\left< \footpath_0, \tripSegment{\aTrip_1}{i_1}{j_1}, \dots, \tripSegment{\aTrip_n}{i_n}{j_n} \right>$ is Pareto-optimal, $\arrivalTime(\aStop(\aTrip_n[j_n]),n-1)>\arrivalTime(\aStop(\aTrip_n[j_n]),n)=\arrivalTime(\aTrip_n[j_n])$ must hold after round $n$.
		For a trip $\aTrip$, let $\reachedIndex_n(\aTrip)$ denote the reached index of $\aTrip$ at the start of round $n$.
		The trip scanning operation ensures that $\arrivalTime(\aStop(\aTrip[i]),n) \leq \arrivalTime(\aTrip[i])$ holds after round $n$ for every $\reachedIndex_n(\aTrip) \leq i < \absoluteVal{\aTrip}$.
		Together, this implies $\reachedIndex_{n-1}(\aTrip_n)>j_n$.
		Additionally, because $\arrivalTime(\aTrip'[j_n])<\arrivalTime(\aTrip_n[j_n])$ holds for every preceding trip $\aTrip'\prec\aTrip_n$, it follows that $\reachedIndex_n(\aTrip')>j_n$.
		Hence, the only way that $\reachedIndex_n(\aTrip_n)\leq j_n$ can hold is if round $n-1$ adds a trip segment containing $\aTrip_n[j_n]$ to $\aQueue_n$ and round $n$ scans it.
		We show by induction over $n$ that round $n-1$ calls the \Enqueue operation (see Algorithm~\ref{alg:tbenq}) for $\aTrip_n[i_n+1]$.
		This ensures that $\reachedIndex_n(\aTrip_n) \leq i_n+1 \leq j_n$ holds afterwards, from which it follows that round $n$ scans $\aTrip_n[j_n]$.
		\begin{description}
			\item[\textbf{Base case ${\left(n=1\right)}$:}]
			Because the exit prefix $\left< \footpath_0, \tripSegment{\aTrip_1}{i_1}{j_1} \right>$ is optimal, $\aTrip_1$ is the earliest trip of its line that can be entered at $\aStop(\aTrip_1[i_1])$.
			Hence, round $0$ calls the \Enqueue operation for $\aTrip_1[i_1+1]$.
			\item[\textbf{Inductive step ${\left(n \rightarrow n+1\right)}$:}] Because the exit prefix $\left< \footpath_0, \tripSegment{\aTrip_1}{i_1}{j_1}, \dots, \tripSegment{\aTrip_n}{i_n}{j_n} \right>$ is optimal, it follows that $\arrivalTime(\aStop(\aTrip_n[j_n]),n)\geq\arrivalTime(\aTrip_n[j_n])$.
			Hence, the trip scanning operation of round $n$ does not apply local pruning at $\aTrip_n[j_n]$.
			When the outgoing transfer $\transfer{\aTrip_n[j_n]}{\aTrip_{n+1}[i_{n+1}]}\in\transfers$ is scanned, the \Enqueue operation is called for $\aTrip_{n+1}[i_{n+1}+1]$.
		\end{description}
	\end{proof}
	
	From this, it follows that one-to-all TB computes \changed{the Pareto front}.
	
	\begin{theorem}
		\label{th:one-to-all:correct}
		Let $\transfers$ be an EOP transfer set for a query $\query=(\sourceStop,\targetStop,\departureTime)$.
		A one-to-all TB search for $\query$ using $\transfers$ as the transfer set computes \changed{the Pareto front}.
	\end{theorem}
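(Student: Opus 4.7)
The plan is to show a two-sided identity between the values $\arrivalTime(\targetStop,n)$ maintained by the algorithm and the Pareto-optimal arrival times, using Lemma~\ref{th:one-to-all:scan} for one direction and a standard soundness argument for the other.

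First I would use the EOP hypothesis to fix a representative set $\representativeSet\subseteq\journeys(\query,\transfers)$ for $\query$ in which every journey is exit-optimal. For each Pareto-optimal cost vector $(a,n)$, this yields an exit-optimal journey $\aJourney=\langle \footpath_0,\tripSegment{\aTrip_1}{i_1}{j_1},\dots,\tripSegment{\aTrip_n}{i_n}{j_n},\footpath_{n+1}\rangle\in\representativeSet$ with $\cost(\aJourney)=(a,n)$. By Lemma~\ref{th:one-to-all:scan}, round $n$ scans the stop event $\aTrip_n[j_n]$ in its first loop. At that point, the algorithm iterates over all stops reachable from $\aStop(\aTrip_n[j_n])$ via a footpath, so it considers the stop $\targetStop$ and computes $\overline{\arrivalTime}=\arrivalTime(\aTrip_n[j_n])+\transfertime{\aStop(\aTrip_n[j_n])}{\targetStop}=\arrivalTime(\aJourney)=a$. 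The update rule therefore enforces $\arrivalTime(\targetStop,m)\leq a$ for every $m\geq n$ after round $n$.

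For the converse bound, I would argue soundness: every update of $\arrivalTime(\targetStop,m)$ corresponds to an actual feasible journey in $\journeys(\query,\transfers)$ with at most $m$ trips. This follows by induction on the round number from the definitions of \Enqueue and the trip scanning operation, since each trip segment enqueued in $\aQueue_m$ is reached through $m{-}1$ precomputed transfers from a trip entered at a stop reachable from $\sourceStop$ at or after time $\departureTime$. Hence $\arrivalTime(\targetStop,m)\geq\min\{\arrivalTime(\aJourney')\mid \aJourney'\in\journeys(\query,\transfers),\,\absoluteVal{\aJourney'}\leq m\}$. Combined with the previous paragraph and the fact that $\transfers\subseteq\stopEvents\times\stopEvents$ only admits feasible journeys, we obtain equality: $\arrivalTime(\targetStop,n)=a$ for each Pareto-optimal $(a,n)$, and no smaller value is ever recorded.

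Finally, I would conclude by unpacking the implicit representation of the Pareto front. The monotonicity $\arrivalTime(\targetStop,m)\leq\arrivalTime(\targetStop,m')$ for $m\geq m'$ is enforced by the update rule; together with the two bounds above it implies that $\arrivalTime(\targetStop,n)<\arrivalTime(\targetStop,n-1)$ holds if and only if $(\arrivalTime(\targetStop,n),n)$ is Pareto-optimal, which is exactly the criterion used to extract the Pareto front from the one-to-all output. The main obstacle I expect is the soundness step: one must verify that every trip segment enqueued during the search corresponds to a feasible partial journey assembled exclusively from transfers in $\transfers$ and an initial footpath from $\sourceStop$. This is a routine but somewhat tedious structural induction over the rounds and over the queue contents, and it is the only place where the lower-bound direction $\arrivalTime(\targetStop,n)\geq a$ ultimately rests.
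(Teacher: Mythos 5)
Your proposal is correct and follows essentially the same route as the paper's proof: fix an exit-optimal representative set via the EOP hypothesis, invoke Lemma~\ref{th:one-to-all:scan} to conclude that the final stop event of each representative is scanned, and observe that the subsequent footpath scan records the Pareto-optimal arrival time at $\targetStop$. The only difference is that you make the soundness direction (every recorded arrival time is realized by a feasible journey in $\journeys(\query,\transfers)$) explicit, whereas the paper leaves this routine induction implicit and simply asserts $\arrivalTime(\targetStop,k)=\arrivalTime(\aJourney)$.
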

	\begin{proof}
		Because $\transfers$ is EOP for $\query$, there is a representative set $\representativeSet\subseteq\journeys(\query,\transfers)$ of exit-optimal journeys.
		Consider a journey $\aJourney\in\representativeSet$ with $\absoluteVal{\aJourney}=k$.
		For $1 \leq n \leq k$, let $\tripSegment{\aTrip_n}{i_n}{j_n}$ denote the $n$-th trip segment of $\aJourney$.
		By Lemma~\ref{th:one-to-all:scan}, round $n$ of the one-to-all TB search scans the stop event $\aTrip_n[j_n]$.
		When the outgoing footpaths of $\aTrip_k[j_k]$ are scanned, it ensures that $\arrivalTime(\targetStop,k)=\arrivalTime(\aJourney)$.
	\end{proof}
	
	\begin{corollary}
		\label{th:one-to-one}
		Let $\transfers$ be an EOP transfer set for a query $\query=(\sourceStop,\targetStop,\departureTime)$.
		A one-to-one TB search for $\query$ using $\transfers$ as the transfer set computes \changed{the Pareto front}.
	\end{corollary}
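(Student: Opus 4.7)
The plan is to adapt the two-step argument of Theorem~\ref{th:one-to-all:correct} to the one-to-one setting. By the EOP property, for each Pareto-optimal cost vector there is an exit-optimal representative $\aJourney = \left< \footpath_0, \tripSegment{\aTrip_1}{i_1}{j_1}, \dots, \tripSegment{\aTrip_k}{i_k}{j_k}, \footpath_{k+1} \right>$ in $\journeys(\query,\transfers)$, and it suffices to show that the algorithm inserts a cost vector equivalent to $(\arrivalTime(\aJourney),k)$ into $\paretoset$ for each such $\aJourney$. The main step is to reprove the scan claim of Lemma~\ref{th:one-to-all:scan} for one-to-one TB: round $n$ scans $\aTrip_n[j_n]$ for each $1\leq n\leq k$. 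Since the \Enqueue routine, the queues $\aQueue_n$, and the reached-index updates are identical in the two variants, the inductive step reducing to ``round $n-1$ enqueues $\aTrip_n[i_n+1]$'' has the same structure. One only has to replace the local-pruning argument that $\reachedIndex_{n-1}(\aTrip_n)>j_n$ by a direct one: otherwise a prior enqueue of some $\aTrip''\preceq\aTrip_n$ at an index $j''\leq j_n$ yields a partial prefix of at most $n-1$ trips using only transfers from $\transfers$, and extending its last trip to exit at $j_n$ produces an exit prefix reaching $\aStop(\aTrip_n[j_n])$ with arrival time at most $\arrivalTime(\aTrip''[j_n])\leq\arrivalTime(\aTrip_n[j_n])$ by non-overtaking, contradicting exit-optimality of $\aJourney$.

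The second new ingredient is handling target pruning. I would show that $\minTime>\arrivalTime(\aTrip_n[j_n])$ holds whenever $\aTrip_n[j_n]$ is about to be examined in round $n$, so that neither target-pruning test (lines~\ref{alg:tb:targetpruning} and~\ref{alg:tb:targetpruning2} of Algorithm~\ref{alg:tbts}) breaks the scan. Suppose otherwise; then some journey $\aJourney'$ to $\targetStop$ has already updated $\minTime$ with $\absoluteVal{\aJourney'}\leq n$ and $\arrivalTime(\aJourney')\leq\arrivalTime(\aTrip_n[j_n])\leq\arrivalTime(\aJourney)$. If either inequality is strict, we get $\aJourney'\prec\aJourney$, contradicting Pareto-optimality of $\aJourney$. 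The only remaining case is $n=k$, $\absoluteVal{\aJourney'}=k$, and $\arrivalTime(\aJourney')=\arrivalTime(\aJourney)$; in that case an equivalent cost vector is already in $\paretoset$ and there is nothing left to prove.

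Given the scan claim, round $k$'s first loop reaches $\aTrip_k[j_k]$; line~\ref{alg:tb:improv} then inserts $(\arrivalTime(\aJourney),k)$ into $\paretoset$ (or does nothing if an equivalent vector is already present), and the dominance removal in line~\ref{alg:tb:domination} guarantees that $\paretoset$ equals the Pareto front at termination. The hardest part I anticipate is precisely the $n=k$ edge case of the pruning argument, where $\minTime$ can be tight against $\arrivalTime(\aJourney)$ itself; resolving it is exactly where Pareto-optimality of $\aJourney$ (as opposed to merely exit-optimality) is needed, since only then can we conclude that the witness $\aJourney'$ cannot have strictly fewer trips and must therefore contribute an equivalent representative to $\paretoset$.
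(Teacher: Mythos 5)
Your proof is correct and follows essentially the same route as the paper, which simply declares the corollary analogous to Theorem~\ref{th:one-to-all:correct} and notes that the inductive step of Lemma~\ref{th:one-to-all:scan} survives because one-to-one TB does not apply local pruning. You go further than the paper's one-line argument by explicitly checking that target pruning cannot skip $\aTrip_n[j_n]$ and by replacing the arrival-label derivation of $\reachedIndex_{n-1}(\aTrip_n)>j_n$ (which relies on the one-to-all labels $\arrivalTime(\aStop,n)$ that do not exist in the one-to-one variant) with a direct exit-optimality contradiction; both details are genuinely needed and are glossed over in the paper.
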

	\begin{proof}
		Analogous to Theorem~\ref{th:one-to-all:correct}.
		Note that the inductive step in Lemma~\ref{th:one-to-all:scan} still holds for one-to-one TB because it does not apply local pruning.
	\end{proof}
	
	Theorem~\ref{th:one-to-all:correct} carries over in a straightforward manner to one-to-all Profile-TB.
	However, it only makes a claim about the \changed{Pareto front}, not the computed representative set of journeys.
	Even with an EOP transfer set, the journeys computed by one-to-all Profile-TB are not necessarily exit-optimal.
	This is due to the self-pruning rule, as outlined in Section~\ref{sec:transferset:issues}.

	\subsection{Transitive ULTRA}
	\label{sec:transferset:ultra}
	An alternative transfer generation algorithm for TB was previously presented in the context of multimodal journey planning.
	Here, the transitively closed set of footpaths is replaced with an unrestricted~\emph{transfer graph} that may represent one or several road-based transportation modes.
	ULTRA~(UnLimited TRAnsfers)~\cite{Bau23} is a technique that enables TB to handle multimodal networks by condensing the transfer graph into a set of edges between pairs of stop events, \ie a transfer set.
	The method for generating this transfer set is substantially different from the TB transfer precomputation and does not employ the latest-exit rule.
    We propose \transultra, a variant of ULTRA for transitively closed footpaths.
	We show that the computed transfer set preserves a journey property called canonicity, which is stronger than exit optimality.
	Because \transultra is identical to ULTRA expect for minor differences in the exploration of footpaths, we only give details that are necessary to prove the canonicity-preserving property.
	For a full overview of ULTRA, including algorithmic details and the original proofs, we refer to~\hbox{\cite{Bau23}}.

	\subparagraph*{Canonical Journeys.}
	We call a journey with exactly two trips and empty initial and final footpaths a \emph{candidate}.
	For example, the candidates contained in the journey $\aJourney = \left< \footpath_0, \tripSegment{\aTrip_1}{i_1}{j_1}, \dots, \tripSegment{\aTrip_k}{i_k}{j_k}, \footpath_{k+1} \right>$ are $\left< \tripSegment{\aTrip_\ell}{i_\ell}{j_\ell}, \tripSegment{\aTrip_{\ell+1}}{i_{\ell+1}}{j_{\ell+1}}\right>$ for $1 \leq \ell < k$.
	ULTRA exploits the observation that a set of transfers that is sufficient for answering all queries can be found by only inspecting candidates.
	Formally, the set of transfers that occur in at least one Pareto-optimal candidate \changed{for at least one possible query} is Pareto-optimality-preserving.
	However, this set is unnecessarily large: it covers all Pareto-optimal journeys, but one representative for every Pareto-optimal cost vector is sufficient.
	
	To choose the representative, ULTRA defines a unique~\emph{tiebreaking sequence} for each journey.
	For \transultra, we modify this sequence slightly to account for the fact that footpaths consist of single edges.
	Let $\lineIndex:\lines\to\mathbb{N}$ and $\stopIndex:\stops\to\mathbb{N}$ be total orderings on the sets of lines and stops.
	Let $\eventIndex\colon\stopEvents\to\mathbb{N}$ be a total ordering of the stop events such that $\eventIndex(\aTrip[i]) < \eventIndex(\aTrip'[j])$ if and only if~\changed{$\langle \lineIndex(\aLine(\aTrip)), \arrivalTime(\aTrip[0]), i \rangle <_\text{lex} \langle \lineIndex(\aLine(\aTrip')), \arrivalTime(\aTrip'[0]), j \rangle$, where $<_\text{lex}$ denotes the lexicographical order.} 
	Note that this ordering is consistent with the partial order $\prec$ on stop events, i.e., $\aLine(\aTrip)=\aLine(\aTrip')$ and $\aTrip[i] \prec \aTrip'[j]$ implies $\eventIndex(\aTrip[i]) < \eventIndex(\aTrip'[j])$.
	
	For a journey $\aJourney=\langle \footpath_0, \dots, \tripSegment{\aTrip}{i}{j}, (\stopA, \stopB)\rangle$, the~\emph{tiebreakers} are given by
	\[
	\tiebreakers(\aJourney):=\begin{cases}
		\langle\eventIndex(\aTrip[i]),\mathmakebox[0pt][l]{\infty}\hphantom{\eventIndex(\aTrip[j])}\rangle&\text{if }\stopA = \stopB,\\
		\langle\mathmakebox[0pt][l]{\infty}\hphantom{\eventIndex(\aTrip[i])},\eventIndex(\aTrip[j])\rangle&\text{otherwise.}\\
	\end{cases}
	\]
	The~\emph{local tiebreaking sequence} $	\localTiebreakingSequence(\aJourney):=\langle\arrivalTime(\aJourney)\rangle\circ\tiebreakers(\aJourney)$ consists of the arrival time followed by the tiebreakers.
	The global tiebreaking sequence of a journey $\aJourney$ with stop sequence $\stops(\aJourney)=\langle\aStop_1,\dots,\aStop_k\rangle$ concatenates the local tiebreaking sequences in reverse order:
	\[\tiebreakingSequence(\aJourney):=\localTiebreakingSequence(\subjourney{\aStop_1}{\aStop_k})\circ\dots\circ\localTiebreakingSequence(\subjourney{\aStop_1}{\aStop_2}).\]

    \changed{Sequences are compared via the lexicographical order $<_\text{lex}$.}
    If one sequence is shorter than the other, it is padded by adding entries with value $-\infty$ at the end.
	A journey $\aJourney$ is called~\emph{canonical} for a query $\query$ if it is Pareto-optimal under the \changed{cost system $\costsystem_\tiebreakingSequence=((\tiebreakingSequence,<_\text{lex}),(\absoluteVal{\cdot},\leq))$.}
	Because the tiebreaking sequences of all feasible journeys for $\query$ are unique and totally ordered, no two journeys are equivalent under $\costsystem_\tiebreakingSequence$.
	Accordingly, the only representative set for the query is the \emph{canonical representative set}, which consists solely of canonical journeys.
    \changed{A crucial property is that the set of all canonical journeys for all possible queries is closed under subjourney decomposition:}
    \begin{lemma}\label{th:canonical-subjourney}
        \changed{Let $\aJourney$ be a canonical journey for a query $\query$ and let $\subjourney{\stopA}{\stopB}$ be a subjourney of $\aJourney$ for $\stopA,\stopB\in\stops(\aJourney)$.
        Then $\subjourney{\stopA}{\stopB}$ is canonical for the query $(\stopA,\stopB,\departureTime(\subjourney{\stopA}{\stopB}))$.}
    \end{lemma}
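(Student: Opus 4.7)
My plan is a proof by contradiction. Assume $\subjourney{\stopA}{\stopB}$ is not canonical for $\query' := (\stopA, \stopB, \departureTime(\subjourney{\stopA}{\stopB}))$; then some feasible $\aJourney^* \in \journeys(\query')$ satisfies $\aJourney^* \prec \subjourney{\stopA}{\stopB}$ under $\costsystem_\tiebreakingSequence$. Because tiebreaking sequences of feasible journeys for a common query are unique, this strict domination forces $\tiebreakingSequence(\aJourney^*) <_\text{lex} \tiebreakingSequence(\subjourney{\stopA}{\stopB})$ together with $\absoluteVal{\aJourney^*} \leq \absoluteVal{\subjourney{\stopA}{\stopB}}$. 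I will splice $\aJourney^*$ into $\aJourney$, forming $\aJourney' := \subjourney{\aStop_1}{\stopA} \circ \aJourney^* \circ \subjourney{\stopB}{\aStop_k}$, where $\stops(\aJourney) = \langle \aStop_1,\dots,\aStop_k\rangle$, $\stopA=\aStop_m$, and $\stopB=\aStop_n$; the two meeting footpaths at $\stopA$ and at $\stopB$ each collapse into a single footpath via the transitive closure and triangle inequality of $\footpaths$. Since arrival time is the leading coordinate of each local tiebreaking sequence, $\aJourney^* \prec \subjourney{\stopA}{\stopB}$ yields $\arrivalTime(\aJourney^*) \leq \arrivalTime(\subjourney{\stopA}{\stopB})$, so the suffix of $\aJourney$ remains catchable; feasibility of $\aJourney'$ for $\query$ and $\absoluteVal{\aJourney'} \leq \absoluteVal{\aJourney}$ follow. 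The target is to establish $\aJourney' \prec \aJourney$ under $\costsystem_\tiebreakingSequence$, contradicting canonicity of $\aJourney$.

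The core of the argument is a structural factorization of $\tiebreakingSequence$. I split the global tiebreaking sequence of $\aJourney$ into three contiguous blocks according to whether the contributing stop $\aStop_\ell$ satisfies $\ell > n$ (suffix block), $m < \ell \leq n$ (middle block), or $\ell \leq m$ (prefix block). The local tiebreaking sequence of a prefix $\subjourney{\aStop_1}{\aStop_\ell}$ depends only on the arrival time at $\aStop_\ell$ and the last trip segment used to reach it. For $\ell > n$ that trip segment lies in the untouched suffix of $\aJourney$, and for $\ell \leq m$ in the untouched prefix, so the suffix and prefix blocks coincide between $\aJourney$ and $\aJourney'$. For $m < \ell \leq n$ the last trip segment lies inside the replaced middle, and the same observation gives $\localTiebreakingSequence(\subjourney{\aStop_1}{\aStop_\ell}) = \localTiebreakingSequence(\subjourney{\stopA}{\aStop_\ell})$; hence the middle block of $\tiebreakingSequence(\aJourney)$ agrees as a raw number sequence with $\tiebreakingSequence(\subjourney{\stopA}{\stopB})$, and the middle block of $\tiebreakingSequence(\aJourney')$ with $\tiebreakingSequence(\aJourney^*)$. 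After cancelling the common suffix block, the desired $\tiebreakingSequence(\aJourney') <_\text{lex} \tiebreakingSequence(\aJourney)$ reduces to comparing the middle blocks, which is supplied directly by the contradiction hypothesis.

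The main obstacle will be pushing through this last reduction when $\aJourney^*$ and $\subjourney{\stopA}{\stopB}$ have different numbers of intermediate stops, so the two middle blocks have different lengths and the shared prefix block sits at different absolute positions in the two global sequences. The delicate case is when $\tiebreakingSequence(\aJourney^*)$ is a proper prefix of $\tiebreakingSequence(\subjourney{\stopA}{\stopB})$ as a raw number sequence, so the lex witness lies past the end of the shorter middle block. I plan to handle this by exploiting the temporal monotonicity $\arrivalTime(\subjourney{\aStop_1}{\stopA}) \leq \departureTime(\subjourney{\stopA}{\stopB}) \leq \arrivalTime(\subjourney{\stopA}{\aStop_\ell})$ for $m < \ell \leq n$, which dominates the leading arrival-time entry of the shifted prefix block below every arrival-time entry in the tail of the middle block that it is compared against, together with analogous monotone bookkeeping for the two tiebreaker coordinates of each local tiebreaking sequence.
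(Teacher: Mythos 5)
Your strategy is the paper's own argument written out in full: the paper disposes of this lemma in a single sentence, asserting that adding a common prefix or suffix to two journeys does not change the relative order of their tiebreaking sequences, and your splice-and-compare contradiction is exactly that assertion unpacked. The suffix half of your factorization is sound, since the suffix block sits at the \emph{front} of the global sequence and cancelling a common leading block preserves $<_\text{lex}$ unconditionally. The prefix half is where the content lies, and your write-up leaves two genuine gaps there. First, the block identities you rely on can fail at the splice junctions. If $\stopA$ belongs to $\stops(\aJourney)$ as the stop where a trip is \emph{boarded}, so that $\subjourney{\aStop_1}{\stopA}$ ends with a non-empty footpath into $\stopA$, and $\aJourney^*$ begins with a non-empty initial footpath out of $\stopA$, then in $\aJourney'$ these merge into a single footpath and $\stopA$ drops out of $\stops(\aJourney')$ entirely; the local tiebreaking sequence at $\stopA$ then heads your prefix block in $\tiebreakingSequence(\aJourney)$ but is absent from $\tiebreakingSequence(\aJourney')$, so ``the prefix blocks coincide'' is false as stated. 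The same can happen at $\stopB$. This needs an explicit case analysis or a preliminary argument that the offending configurations cannot arise.

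Second, and more seriously, your resolution of the misaligned-middle-block case is a plan, not a proof, and the plan does not obviously go through. When $\tiebreakingSequence(\aJourney^*)$ is a proper prefix of $\tiebreakingSequence(\subjourney{\stopA}{\stopB})$, you must compare the head of the shifted prefix block, $\arrivalTime(\subjourney{\aStop_1}{\stopA})$, against the head of some $\localTiebreakingSequence(\subjourney{\stopA}{\aStop_\ell})$ in the tail of the longer middle block. Temporal monotonicity gives only a weak inequality (zero-length footpaths and zero dwell times permit equality of the two arrival times), and once they tie the next coordinates are values of $\eventIndex$ or $\infty$, for which no monotonicity along the journey exists: $\eventIndex$ sorts primarily by line identifier, which carries no temporal meaning, so the announced ``analogous monotone bookkeeping for the two tiebreaker coordinates'' is not available. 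To close this branch you would need either to show that a strictly dominating $\aJourney^*$ cannot have a tiebreaking sequence that is a proper prefix of that of $\subjourney{\stopA}{\stopB}$ (plausible, since agreement of the leading local sequences already pins down the final trip's entry or exit event), or to find a different way to finish the lexicographic comparison. To be fair, the paper's one-line proof glosses over exactly these points; your proposal has the merit of exposing them, but as written it does not resolve them.
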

    \begin{proof}
        \changed{This follows from the fact that adding a common prefix or suffix to two journeys does not change the relative order of their tiebreaking sequences.}
    \end{proof}
    
	Hence, the set of transfers that occur in at least one canonical candidate is canonicity-preserving~(CP).
	Furthermore, Lemma~\ref{th:canonical_prefix_optimal} shows that canonical journeys are prefix-optimal and therefore also exit-optimal.
	Hence, the set is EOP and can be used for one-to-all-TB by Theorem~\ref{th:one-to-all:correct}
	\begin{lemma}\label{th:canonical_prefix_optimal}
		Every canonical journey $\aJourney$ for a query $\query$ is prefix-optimal for $\query$.
	\end{lemma}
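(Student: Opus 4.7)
The strategy is proof by contradiction. Assume $\aJourney = \langle \footpath_0, \tripSegment{\aTrip_1}{i_1}{j_1}, \dots, \tripSegment{\aTrip_k}{i_k}{j_k}, \footpath_{k+1}\rangle$ is canonical for $\query = (\sourceStop, \targetStop, \departureTime)$ but some proper prefix $\aJourney'$ (exit or footpath) or partial prefix $\aJourney_p$ fails to be Pareto-optimal for its corresponding sub-query. I would pick any strict dominator $\tilde\aJourney'$ under $\costsystem$ (or $\costsystem_p$), so that $\arrivalTime(\tilde\aJourney') \leq \arrivalTime(\aJourney')$ and $\absoluteVal{\tilde\aJourney'} \leq \absoluteVal{\aJourney'}$ with at least one inequality strict, and $\departureTime(\tilde\aJourney') \geq \departureTime$.

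Next, I would construct $\tilde\aJourney$ by splicing $\tilde\aJourney'$ together with the tail of $\aJourney$, i.e., the trip segments and footpaths that follow the prefix endpoint. The three prefix kinds would be handled in turn, but the shared observation is that, because $\tilde\aJourney'$ reaches the prefix endpoint no later than $\aJourney'$ does, the first trip of the tail can still be boarded in time and the remainder executed at its original fixed schedule. This yields $\tilde\aJourney \in \journeys(\query)$ with $\arrivalTime(\tilde\aJourney) = \arrivalTime(\aJourney)$ and $\absoluteVal{\tilde\aJourney} \leq \absoluteVal{\aJourney}$, with equality in the latter iff $\absoluteVal{\tilde\aJourney'} = \absoluteVal{\aJourney'}$.

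The heart of the argument is a lexicographic comparison of $\tiebreakingSequence(\tilde\aJourney)$ and $\tiebreakingSequence(\aJourney)$ via the reverse-order concatenation structure. Because the two journeys share the tail and its trips have fixed schedules, the local tiebreaking sequence at every tail stop is identical in both: the arrival time is pinned down by the tail and the tiebreaker entry is pinned down by the last trip segment before that stop, which is again part of the shared tail. Hence the first position where the two sequences can diverge corresponds to the prefix endpoint, pitting $\localTiebreakingSequence(\tilde\aJourney')$ against $\localTiebreakingSequence(\aJourney')$. If $\arrivalTime(\tilde\aJourney') < \arrivalTime(\aJourney')$, the comparison is decided immediately by the leading arrival-time entry and gives $\tiebreakingSequence(\tilde\aJourney) <_\text{lex} \tiebreakingSequence(\aJourney)$; combined with $\absoluteVal{\tilde\aJourney} \leq \absoluteVal{\aJourney}$, $\tilde\aJourney$ strictly dominates $\aJourney$ under $\costsystem_\tiebreakingSequence$, contradicting canonicity.

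The main obstacle is the remaining case $\arrivalTime(\tilde\aJourney') = \arrivalTime(\aJourney')$ with $\absoluteVal{\tilde\aJourney'} < \absoluteVal{\aJourney'}$, since the tiebreaker pair $\tiebreakers(\tilde\aJourney')$ need not a priori be smaller than $\tiebreakers(\aJourney')$. My plan here is to refine the choice of dominator: among all $\tilde\aJourney'$ feasible for the sub-query that $\costsystem$-dominate $\aJourney'$, select one whose $\localTiebreakingSequence$ is lex-minimal. I would then invoke Lemma~\ref{th:canonical-subjourney} applied to the subjourney $\subjourney{\sourceStop}{\targetStop(\aJourney')}$ of $\aJourney$, together with the uniqueness of tiebreaking sequences, to rule out the possibility that every such dominator has $\localTiebreakingSequence >_\text{lex} \localTiebreakingSequence(\aJourney')$. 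The resulting $\localTiebreakingSequence(\tilde\aJourney') \leq_\text{lex} \localTiebreakingSequence(\aJourney')$, combined with $\absoluteVal{\tilde\aJourney} < \absoluteVal{\aJourney}$, delivers the desired contradiction with canonicity of $\aJourney$.
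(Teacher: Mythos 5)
Your skeleton---splice a dominator of the prefix onto the fixed tail of $\aJourney$ and compare global tiebreaking sequences, using that the shared tail contributes identical local sequences---is exactly the mechanism behind the paper's (one-sentence) treatment of proper prefixes, and your first case (strictly earlier arrival at the prefix endpoint) is fine. The genuine gap is your resolution of the case $\arrivalTime(\tilde\aJourney')=\arrivalTime(\aJourney')$ with $\absoluteVal{\tilde\aJourney'}<\absoluteVal{\aJourney'}$. Lemma~\ref{th:canonical-subjourney} says the prefix $\aJourney'$ is canonical for its own subquery, i.e.\ Pareto-optimal under $\costsystem_\tiebreakingSequence$; applied to a competitor with strictly fewer trips, this forces $\tiebreakingSequence(\tilde\aJourney')>_\text{lex}\tiebreakingSequence(\aJourney')$---the \emph{opposite} of the inequality you want to extract. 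Nothing prevents every such dominator from boarding a different line whose $\eventIndex$ (driven by the arbitrary ordering $\lineIndex$) exceeds that of $\aJourney'$'s boarding event, so choosing a lex-minimal dominator does not help. The intended escape from this case does not go through the tiebreakers at all: the spliced journey has the same arrival time as $\aJourney$ and strictly fewer trips, hence strictly dominates $\aJourney$ under $\costsystem$ itself; to turn this into a contradiction one must use that a canonical journey is in particular Pareto-optimal under $\costsystem$ (as under the ULTRA-style definition, where the canonical journey is the lex-minimal representative of a $\costsystem$-Pareto-optimal cost vector). That step is silently missing from your argument and is not supplied by $\costsystem_\tiebreakingSequence$-Pareto-optimality alone.

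The second gap concerns partial prefixes, which is the case the paper actually proves in detail. Dominance under $\costsystem_p$ is measured in the target event via $\preceq$, not in arrival time, so your reduction ``$\arrivalTime(\tilde\aJourney')\leq\arrivalTime(\aJourney')$'' mistranslates the hypothesis: a strict dominator of the partial prefix ends at some $\aTrip'[i']\prec\aTrip[i]$ on the same line, and after riding to the same exit index it yields a journey to the prefix endpoint with the \emph{same} arrival time and the same number of trips. Your ``decided by the leading arrival-time entry'' branch therefore never fires, and your fewer-trips branch does not apply either. The decisive step---non-overtaking forces $\aTrip'=\aTrip$ and $i'<i$, so $\eventIndex(\aTrip[i'])<\eventIndex(\aTrip[i])$ and hence $\tiebreakers$ of the competitor is lexicographically smaller, contradicting canonicity---is the heart of the paper's proof and is absent from your proposal.
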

	\begin{proof}
		Let $\stops(\aJourney)=\langle\aStop_1,\dots,\aStop_k\rangle$ be the stop sequence of $\aJourney$.
		For each proper prefix $\subjourney{\aStop_1}{\aStop_n}$, the local tiebreaking sequence $\localTiebreakingSequence(\subjourney{\aStop_1}{\aStop_n})$ begins with $\arrivalTime(\subjourney{\aStop_1}{\aStop_n})$, so $\subjourney{\aStop_1}{\aStop_n}$ is Pareto-optimal under $\costsystem$.
		Let $\tripSegment{\aTrip}{i}{j}$ denote the last trip segment of $\subjourney{\aStop_1}{\aStop_n}$.
		Assume that the partial prefix of $\aJourney$ that ends with $\aTrip[i]$ is not Pareto-optimal under $\costsystem_p$.
		Because $\subjourney{\aStop_1}{\aStop_n}$ is Pareto-optimal, there must be a journey $\subjourneyAlt{\aStop_1}{\aStop_n}$ with $\cost(\subjourneyAlt{\aStop_1}{\aStop_n})=\cost(\subjourney{\aStop_1}{\aStop_n})$ that ends with a trip segment $\tripSegment{\aTrip'}{i'}{j}$ such that $\aTrip'[i']\prec\aTrip[i]$.
		Because trips of the same line do not overtake each other, this requires that $\aTrip'=\aTrip$ and $i'<i$ and therefore $\eventIndex(\aTrip'[i'])<\eventIndex(\aTrip[i])$.
		However, then it follows that~\changed{$\tiebreakers(\subjourneyAlt{\aStop_1}{\aStop_n})<_\text{lex}\tiebreakers(\subjourney{\aStop_1}{\aStop_n})$}, which contradicts the fact that $\aJourney$ is canonical.
	\end{proof}
	
	Canonical representative sets can be computed with a slight modification of RAPTOR~\cite{Del15b} called~\emph{canonical RAPTOR}.
	Like~TB, RAPTOR operates in rounds.
	Each round first explores the outgoing lines and then the outgoing footpaths of previously reached stops.
	To ensure that canonical journeys are found, canonical RAPTOR makes two modifications:
	The collected lines are sorted according to $\lineIndex$ before they are explored.
	For each stop $\aStop$ reached directly via a trip, the algorithm maintains the stop event $\stopEvent(\aStop)$ at which the trip was exited.
	The collected stops are then sorted according to $\eventIndex(\stopEvent(\aStop))$ before their outgoing footpaths are explored.

    \subparagraph*{\changed{Repairing Self-Pruning.}}
    Enumerating all canonical candidates requires solving the all-to-all full-range profile problem, restricted to journeys with at most two trips.
	\transultra does this with rRAPTOR, an extension of RAPTOR for profile queries.
	In the same manner as Profile-TB, rRAPTOR performs a canonical RAPTOR run for each possible departure time in decreasing order.
	Because candidates have two trips, each run can be stopped after two rounds.
	As we saw in Section~\ref{sec:transferset:issues}, the self-pruning rule cannot be applied directly because it conflicts with exit optimality.

    \changed{This issue can be solved by exploiting the following insight:
    If a journey $\aJourney$ is canonical for a query $(\sourceStop,\targetStop,\departureTime)$, then it is also canonical for the query $(\sourceStop,\targetStop,\departureTime(\aJourney))$.
    If the objective is to enumerate all canonical journeys (or candidates), then we only need to ensure that $\aJourney$ is chosen as the representative during the run for $\departureTime(\aJourney)$.
    In subsequent runs (for earlier departure times), $\aJourney$ may still be canonical, but because it is already included in the algorithm's output, it is safe to replace it with a different journey.}

    \changed{This insight leads to a variant of rRAPTOR called \emph{canonical rRAPTOR}.}
	Each RAPTOR run is done with canonical RAPTOR.
	Additionally, the self-pruning rule is relaxed to ensure that \changed{exit-optimal journeys are not discarded if they are dominated by an equivalent journey (under $\costsystem$) with a later departure time}.
	The run in which a journey $\aJourney$ is found is the one for $\departureTime(\aJourney)$.
	Let $\aJourney'$ the journey with the earliest arrival time among those with at most $\absoluteVal{\aJourney}$ trips that were previously found at $\targetStop(\aJourney)$.
	Usually, RAPTOR discards $\aJourney$ if $\aJourney' \preceq \aJourney$ holds under $\costsystem$.
	Otherwise, $\aJourney$ replaces $\aJourney'$.
	\changed{Canonical rRAPTOR instead discards $\aJourney$ if one of the two~\emph{ULTRA conditions} holds under $\costsystem$:}
	\begin{description}
		\item[(U1)] $\aJourney' \preceq \aJourney$ and $\departureTime\left(\aJourney'\right) = \departureTime\left(\aJourney\right)$.
		\item[(U2)] $\aJourney' \prec \aJourney$.
	\end{description}
	\changed{With these conditions, we show canonical rRAPTOR returns every canonical journey:}
	\begin{lemma}\label{th:departure_canonical_mr}
		Let $\aJourney$ be a \changed{canonical} journey for a query $\query=(\sourceStop,\targetStop,\departureTime)$.
		After the run for~\changed{$\departureTime(\aJourney)$ in the canonical rRAPTOR search} from $\sourceStop$, the representative for $\cost(\aJourney)$ at $\targetStop$ is $\aJourney$.
	\end{lemma}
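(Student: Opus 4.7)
The plan is to induct on $n := \absoluteVal{\aJourney}$, using Lemma~\ref{th:canonical-subjourney} at each step to reduce the problem to a shorter canonical journey. The base case $n = 0$ is immediate: $\aJourney$ then consists of a single footpath from $\sourceStop$ and is picked up by the initialization of the canonical RAPTOR run for $\departureTime(\aJourney)$, where the representative is stored trivially.

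For the inductive step, let $\aJourney = \langle \footpath_0, \tripSegment{\aTrip_1}{i_1}{j_1}, \dots, \tripSegment{\aTrip_n}{i_n}{j_n}, \footpath_{n+1}\rangle$ and let $\aJourney^*$ denote its footpath prefix ending at $\aStop(\aTrip_n[i_n])$ via the transfer footpath $(\aStop(\aTrip_{n-1}[j_{n-1}]), \aStop(\aTrip_n[i_n]))$. By Lemma~\ref{th:canonical-subjourney}, $\aJourney^*$ is canonical for its own query, and by construction $\departureTime(\aJourney^*) \geq \departureTime(\aJourney)$, so the canonical rRAPTOR run for $\departureTime(\aJourney^*)$ precedes that for $\departureTime(\aJourney)$. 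The inductive hypothesis then yields that after the earlier run, the $(n-1)$-trip representative at $\aStop(\aTrip_n[i_n])$ is exactly $\aJourney^*$. I would next argue that no intermediate run can replace this representative with one of strictly smaller arrival time: such a replacement, concatenated with $\tripSegment{\aTrip_n}{i_n}{j_n}$ and $\footpath_{n+1}$, would produce a journey feasible for $\query$ that strictly dominates $\aJourney$ under $\costsystem$, contradicting the Pareto-optimality of $\aJourney$ that follows from its canonicity (via Lemma~\ref{th:canonical_prefix_optimal} applied to $\aJourney$ itself). Hence, when round $n$ of the run for $\departureTime(\aJourney)$ executes, the canonical extension along $\aLine(\aTrip_n)$ yields a journey with cost vector $\cost(\aJourney)$ at $\aStop(\aTrip_n[j_n])$ and, after the footpath-expansion step, at $\targetStop$.

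The main obstacle is to rule out that a cost-equivalent competitor $\aJourney'$ is stored at $\targetStop$ in place of $\aJourney$. I would split competitors by provenance. If $\aJourney'$ is inherited from an earlier run, then $\departureTime(\aJourney') > \departureTime(\aJourney)$, so (U1) does not apply; (U2) would require $\aJourney' \prec \aJourney$ under $\costsystem$, but since $\aJourney'$ is also feasible for $\query$, this contradicts the Pareto-optimality of $\aJourney$ implied by canonicity. If instead $\aJourney'$ is produced earlier in the current run, the canonical sorting of lines by $\lineIndex$ and of reached stops by $\eventIndex(\stopEvent(\cdot))$ implements precisely the lex order on $\tiebreakers$ (by the definition of $\eventIndex$), so $\aJourney$ is scanned before any cost-equivalent non-canonical alternative and is therefore the first to be stored at $\targetStop$; any later alternative then triggers (U1) with $\aJourney$ weakly dominating it and is discarded without displacing $\aJourney$. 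Combining the two cases with the inductive hypothesis gives that $\aJourney$ itself is the representative for $\cost(\aJourney)$ at $\targetStop$ at the end of the run for $\departureTime(\aJourney)$.
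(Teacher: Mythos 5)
Your overall strategy---induction over the prefixes of $\aJourney$, a case split on whether a competing representative stems from an earlier run (dispatched via (U1)/(U2) and Pareto-optimality) or from the current run (dispatched via the canonical scan order)---is essentially the paper's proof. However, one step is genuinely wrong as written: you justify that the representative at $\aStop(\aTrip_n[i_n])$ cannot be displaced by a journey with strictly smaller arrival time by claiming that such a journey, concatenated with $\tripSegment{\aTrip_n}{i_n}{j_n}$ and $\footpath_{n+1}$, would strictly dominate $\aJourney$ under $\costsystem$. It would not: the stop event $\aTrip_n[i_n]$ departs at the fixed scheduled time $\departureTime(\aTrip_n[i_n])$ regardless of how early one arrives at $\aStop(\aTrip_n[i_n])$, so the concatenated journey has exactly the cost vector $\cost(\aJourney)$ and produces no contradiction with the Pareto-optimality of $\aJourney$. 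The correct justification sits one level down: the prefix $\aJourney^*$ is itself canonical for its own query by \Cref{th:canonical-subjourney}, hence Pareto-optimal under $\costsystem$ (the first entry of its tiebreaking sequence is its arrival time), so no feasible journey with at most $n-1$ trips arrives strictly earlier at $\aStop(\aTrip_n[i_n])$ in the first place. With that substitution the step goes through, and this is exactly the role the prefix plays in the paper's argument.

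A smaller inaccuracy: $\departureTime(\aJourney^*)$ is not merely at least $\departureTime(\aJourney)$ but equal to it, because the departure time of a journey is determined by its initial footpath and the first boarded stop event, which every prefix shares with $\aJourney$. Hence there are no ``intermediate runs'' between the run for $\aJourney^*$ and the run for $\aJourney$; they are the same run, and the inductive hypothesis must be read as holding already after round $n-1$ of that run (which is fine, since round-$(n-1)$ labels are final once round $n-1$ ends). What actually needs ruling out is that a label inherited from a strictly earlier run, or another journey found in round $n-1$ of the same run, overwrites the arrival time at $\aStop(\aTrip_n[i_n])$---and both cases are covered by the Pareto-optimality of $\aJourney^*$ described above. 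The remainder of your argument, in particular the treatment of cost-equivalent competitors at $\targetStop$ via (U1), (U2), and the sorting by $\lineIndex$ and $\eventIndex$, matches the paper's reasoning.
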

	\begin{proof}
		Note that the ULTRA conditions do not affect the computed cost vectors, only the parent pointers (and thereby the representative).
		Hence, it follows from the correctness of canonical RAPTOR and rRAPTOR that \changed{canonical rRAPTOR} returns the \changed{Pareto front}, which includes $\cost(\aJourney)$.
		Let $\aJourney'$ be the corresponding representative found by \changed{canonical rRAPTOR}.
		We show that $\aJourney'=\aJourney$.
		If $\departureTime(\aJourney')=\departureTime(\aJourney)$, then it follows from the correctness of canonical RAPTOR that $\aJourney'$ is the journey with the smallest tiebreaking sequence among those in $\journeys(\query)$ with cost vector $\cost(\aJourney)$ and departure time $\departureTime(\aJourney)$, which is $\aJourney$.
		Assume therefore that $\departureTime(\aJourney')>\departureTime(\aJourney)$.
		We prove the claim by induction over the proper prefixes of $\aJourney$.
		Because every proper prefix of $\aJourney$ is \changed{canonical} by~\changed{\Cref{th:canonical_prefix_optimal}}, the algorithm returns it as a representative by the induction hypothesis.
		We therefore know that the run for $\departureTime(\aJourney)$ finds $\aJourney$ and compares it to $\aJourney'$.
		Condition (U1) is not fulfilled by our assumption that $\departureTime(\aJourney')>\departureTime(\aJourney)$, whereas condition (U2) is not fulfilled because $\aJourney$ is Pareto-optimal.
		Hence, $\aJourney$ is not discarded and replaces $\aJourney'$, a contradiction.
	\end{proof}
	
	\subparagraph*{\transultra.}
	\transultra performs a \changed{canonical} rRAPTOR search from each possible source stop, restricted to the first two rounds.
	For each candidate in this set, its transfer is added to the transfer set $\transfers$.
	We show that $\transfers$ is \changed{canonicity-preserving}.
	
	\begin{theorem}\label{th:ultra:pdop}
		The transfer set $\transfers$ generated by \transultra is \changed{canonicity-preserving}.
	\end{theorem}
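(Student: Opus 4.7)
My plan is to fix an arbitrary query $\query=(\sourceStop,\targetStop,\departureTime)$ and show that the canonical representative set $\representativeSet$ for $\query$ (which exists by the uniqueness of tiebreaking sequences under $\costsystem_\tiebreakingSequence$) is contained in $\journeys(\query,\transfers)$. Since every journey in $\representativeSet$ is canonical by definition, this directly establishes that $\transfers$ is canonicity-preserving.

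The task then reduces to showing that for an arbitrary canonical journey $\aJourney=\left<\footpath_0,\tripSegment{\aTrip_1}{i_1}{j_1},\dots,\tripSegment{\aTrip_k}{i_k}{j_k},\footpath_{k+1}\right>\in\representativeSet$, every transfer $\transfer{\aTrip_\ell[j_\ell]}{\aTrip_{\ell+1}[i_{\ell+1}]}$ used by $\aJourney$ lies in $\transfers$. For each $1\leq\ell<k$, consider the corresponding candidate $C_\ell = \left<\tripSegment{\aTrip_\ell}{i_\ell}{j_\ell},\tripSegment{\aTrip_{\ell+1}}{i_{\ell+1}}{j_{\ell+1}}\right>$, which is precisely the subjourney $\subjourney{\aStop(\aTrip_\ell[i_\ell])}{\aStop(\aTrip_{\ell+1}[j_{\ell+1}])}$ of $\aJourney$ and therefore has empty initial and final footpaths. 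By Lemma~\ref{th:canonical-subjourney}, $C_\ell$ is canonical for the query $\query_\ell:=(\aStop(\aTrip_\ell[i_\ell]),\aStop(\aTrip_{\ell+1}[j_{\ell+1}]),\departureTime(C_\ell))$.

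Next, I would invoke Lemma~\ref{th:departure_canonical_mr}: in the canonical rRAPTOR search from $\aStop(\aTrip_\ell[i_\ell])$ that \transultra performs, the run for $\departureTime(C_\ell)$ produces $C_\ell$ as the representative for $\cost(C_\ell)$ at $\aStop(\aTrip_{\ell+1}[j_{\ell+1}])$. Since $C_\ell$ uses exactly two trips, its stop event pair is reached within the first two rounds, so restricting \transultra to these rounds does not affect the outcome. By construction, \transultra then adds the transfer $\transfer{\aTrip_\ell[j_\ell]}{\aTrip_{\ell+1}[i_{\ell+1}]}$ of $C_\ell$ to $\transfers$. Ranging $\ell$ over all transfer positions in $\aJourney$ shows that $\aJourney\in\journeys(\query,\transfers)$, completing the argument.

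The main technical point to watch is that the candidate extracted as a subjourney is genuinely a two-trip journey with empty end footpaths in the sense required by \transultra; this relies on the convention (stated in the preliminaries) that consecutive duplicates are removed from stop sequences and that empty footpaths can be added or omitted without changing the cost. A secondary point is confirming that $C_\ell$ is indeed reached by the run indexed by $\departureTime(C_\ell)$ of the canonical rRAPTOR search from $\aStop(\aTrip_\ell[i_\ell])$; this is exactly what Lemma~\ref{th:departure_canonical_mr} delivers, so no additional work is required beyond invoking it. No other subtle step arises: the proof is essentially a reduction from Pareto-optimality-preservation of the set of transfers occurring in canonical candidates to the behaviour of canonical rRAPTOR established in the preceding lemmas.
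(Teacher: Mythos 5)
Your proposal is correct and follows essentially the same route as the paper's proof: decompose the canonical journey into its two-trip candidate subjourneys, apply Lemma~\ref{th:canonical-subjourney} to see that each candidate is canonical for its own query, and then apply Lemma~\ref{th:departure_canonical_mr} to conclude that the canonical rRAPTOR run for the candidate's departure time selects it as the representative, so \transultra{} adds its transfer to $\transfers$. Your additional remarks about the empty end footpaths of candidates and the restriction to two rounds are accurate but just make explicit what the paper leaves implicit.
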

	\begin{proof}
        Consider a \changed{canonical} representative $\aJourney=\left<\tripSegment{\aTrip_1}{i_1}{j_1},\dots,\tripSegment{\aTrip_k}{i_k}{j_k}\right>$ for some query $\query$ and, for each $1 \leq n < k$, the candidate subjourney $\aJourney_n=\left<\tripSegment{\aTrip_n}{i_n}{j_n},\tripSegment{\aTrip_{n+1}}{i_{n+1}}{j_{n+1}}\right>$.
		\changed{By~\Cref{th:canonical-subjourney}, $\aJourney_n$ is canonical for the query $(\sourceStop(\aJourney_n),\targetStop(\aJourney_n),\departureTime(\aJourney_n))$}.
		Hence, the transfer $\transfer{\aTrip_n[j_n]}{\aTrip_{n+1}[i_{n+1}]}$ is included in $\transfers$ by Lemma~\ref{th:departure_canonical_mr}.
	\end{proof}
	
	Note that the proof of Theorem~\ref{th:ultra:pdop} only requires a weaker version of Lemma~\ref{th:departure_canonical_mr}: it is sufficient to find and unpack the \changed{canonical} representatives that are candidates.
	ULTRA exploits this with several performance optimizations that prevent irrelevant non-candidate journeys from being explored~\cite{Bau23}.
	These optimizations carry over to \transultra; see Appendix~\ref{app:optimizations}.

	\subsection{\changed{Canonical Profile-TB}}
	We now return to the task of precomputing auxiliary data to speed up TB.
	The precomputation scheme is as follows:
    Using the \transultra transfer set, the all-to-all full-range profile problem is solved by running one-to-all Profile-TB from every source stop.
	For every run and every target stop, the representative journeys are unpacked and the auxiliary data is computed from them.
	It follows from Theorems~\ref{th:one-to-all:correct} and~\ref{th:ultra:pdop} that \changed{each profile search computes the Pareto front}, but as outlined in Section~\ref{sec:transferset:issues}, the representatives are not necessarily compatible with the pruning rules of the query algorithm.
    Therefore, we perform the profile searches with \changed{\emph{canonical Profile-TB}}, a variant of one-to-all Profile-TB that computes all \changed{canonical} representatives when it is used with a \changed{CP} transfer set (such as the one computed by \transultra).
	
	Whereas RAPTOR always finds prefix-optimal representatives, TB in its original form only guarantees exit optimality via the local pruning rule.
	We add a new pruning rule to ensure that \changed{Pareto-optimal} footpath prefixes are selected.
	Furthermore, although the reached index data structure ensures that the algorithm finds optimal partial prefixes, they are not necessarily selected during journey unpacking.
	To change this, we redesign the parent pointers used for journey unpacking.
	We also adjust the order in which trip segments, footpaths and transfers are scanned within each round to match the order given by the tiebreakers.
	This ensures that each run finds the \changed{canonical} representative first and therefore discards all others.
	Finally, we relax self-pruning by incorporating the ULTRA conditions.
	Pseudocode for the modified parts is given in Algorithm~\ref{alg:flashtb:preprocessing:scan}.
	
	\subparagraph*{Relaxing-Self Pruning.}
	Like RAPTOR, one-to-all TB represents found journeys implicitly via the earliest arrival time $\arrivalTime(\aStop,n)$ for each stop $\aStop$ and round $n$.
	We incorporate the ULTRA conditions in the procedure $\AddArrival$, which updates this value.
	Unlike RAPTOR, TB also represents partial journeys implicitly via the reached index $\reachedIndex_n(\aTrip)$ for each trip $\aTrip$ and round $n$.
	When this is updated in $\Enqueue$, a version of the ULTRA conditions also needs to be incorporated.
	
	We represent the run of a partial journey implicitly by maintaining a secondary reached index $\runReachedIndex(\cdot)$, which is reset after every run.
	Consider a call to $\Enqueue$ for a stop event $\aTrip[j]$ in round $n$.
	Let $\aJourney$ denote the corresponding partial journey up to $\aTrip[j]$.
	Condition~(U1) becomes condition~(E1), which is evaluated in line~\ref{alg:flashtb:preprocessing:enqueue:e1}: if $\runReachedIndex\left(\aTrip\right) \leq j$, then the algorithm has already found a partial journey $\aJourney'$ with $\aJourney'\preceq\aJourney$ and $\departureTime(\aJourney')=\departureTime(\aJourney)$.
	Condition (U2) can be tested by evaluating the reached index $\reachedIndex_n\left(\aTrip\right)$.
	To check whether the algorithm has already found a partial journey $\aJourney'\prec\aJourney$, the algorithm evaluates three sub-conditions:
	\begin{description}
		\item[(E2a)] If $\aJourney'$ ends with a stop event $\aTrip[\ell]$ such that $\ell < j$, then $\reachedIndex_n\left(\aTrip\right) < j$ must hold (see line~\ref{alg:flashtb:preprocessing:enqueue:e2a}).
		\item[(E2b)] If $\absoluteVal{\aJourney'}<\absoluteVal{\aJourney}$, then $\reachedIndex_{n-1}\left(\aTrip\right) \leq j$ must hold (see line~\ref{alg:flashtb:preprocessing:enqueue:e2b}).
		\item[(E2c)] If $\aJourney'$ ends with a stop event $\aTrip'[\ell]$ such that $\aTrip' \prec \aTrip$ and $\ell \leq j$, then $\reachedIndex_n(\aTrip') \leq j$ must hold.
		Then it follows that $\pred{\aTrip}\neq\bot$ and $\reachedIndex_n(\pred{\aTrip}) \leq j$ (see line~\ref{alg:flashtb:preprocessing:enqueue:e2c}).
	\end{description}
	
	Consider now an $\AddArrival$ call for a stop $\aStop$ with arrival time $\arrivalTime$ in round $n$ of the run with departure time $\departureTime$.
	Besides the earliest found arrival time $\arrivalTime(\aStop, n)$ at $\aStop$ in round $n$, the algorithm maintains the departure time of the run in which $\arrivalTime(\aStop, n)$ was last updated, which we denote by $\departureTime(\aStop, n)$.
	The new arrival is discarded if one of the following conditions holds:
	\begin{description}
		\item[(T1)] $\arrivalTime(\aStop,n) = \arrivalTime$ and $\departureTime(\aStop,n) = \departureTime$,
		\item[(T2a)] $\arrivalTime(\aStop,n) < \arrivalTime$,
		\item[(T2b)] $\arrivalTime(\aStop, n-1) \leq \arrivalTime$.
	\end{description}
	Otherwise, $\arrivalTime(\aStop, n)$ is set to $\arrivalTime$ and $\departureTime(\aStop, n)$ to $\departureTime$.
	
	\subparagraph*{Reordering the Trip Scanning Phase.}
	At the start of the trip scanning phase of round $n$, the trip segments $\tripSegment{\aTrip}{j}{k}$ in the queue $\aQueue_n$ are sorted in ascending order of $\eventIndex(\aTrip[j])$ (see line~\ref{alg:flashtb:preprocessing:scan:sort}).
    This ensures that all stop events are scanned in ascending order according to $\eventIndex$.
	Normally, one-to-all TB loops over the trip segments twice: once to update the earliest arrival times (done here via $\AddArrival$) and then again to scan outgoing transfers.
	\changed{Canonical} Profile-TB splits each of these loops into two parts, resulting in four loops in total.
    The first loop only calls $\AddArrival$ for the stops at which the trips are exited.
    The second loop explores the outgoing footpaths and calls $\AddArrival$ for the stops reached in this manner.
    This ensures that journeys with empty final footpaths, which have smaller tiebreakers than those with non-empty ones, are found first.
	To achieve the same exploration order for the transfers, the transfer set $\transfers$ is split into two sets: the set $\directTransfers$ contains the transfers between stop events at the same stop, whereas $\footpathTransfers$ contains those that involve a non-empty footpath.
	The third loop explores the transfers in $\directTransfers$, whereas the fourth loop explores those in $\footpathTransfers$.
	Along with the sorting of the trip segments, these changes ensure that \changed{within each run}, the first found representative for each Pareto-optimal cost vector is the \changed{canonical} one.
	
	\subparagraph*{Local Pruning.}
	The loops that explore outgoing transfers retain the local pruning rule from one-to-all TB (see lines~\ref{alg:flashtb:preprocessing:scan:localPruning1} and~\ref{alg:flashtb:preprocessing:scan:localPruning2}), which ensures that journeys with non-optimal exit prefixes are discarded.
	A similar pruning rule is added in line~\ref{alg:flashtb:preprocessing:scan:footpathPruning}.
	When exploring a transfer $(\aTrip[i],\aTrip'[i'])$ that involves a non-empty footpath, the algorithm needs to ensure that the resulting journey ending at $\aStop(\aTrip'[i'])$ is optimal.
	This is done by calculating its arrival time $\arrivalTime$ in line~\ref{alg:flashtb:preprocessing:scan:footpathTime} and comparing it to the earliest found arrival time $\arrivalTime(\aStop(\aTrip'[i']),n)$.
	If it is greater, then the journey is not \changed{Pareto-}optimal, so $\Enqueue$ is not called.
	Note that this pruning rule is not necessary when exploring the transfers $\directTransfers$ that do not involve a footpath.
	Also note that the arrival time $\arrivalTime$ calculated in line~\ref{alg:flashtb:preprocessing:scan:footpathTime} depends only on the transfer, so it can be calculated and stored in advance for each transfer.
	
	\subparagraph*{Journey Unpacking.}
    Another modification is required to ensure that the unpacked journeys enter each trip at the earliest possible stop.
	When $\Enqueue$ is called for a stop event $\aTrip[i]$, a trip segment $\tripSegment{\aTrip}{j}{k}$ with $j > k$ may have already been enqueued earlier.
	In this case, the algorithm enqueues the segment $\tripSegment{\aTrip}{i}{j-1}$ as well, so the queue contains two consecutive trip segments of the same trip.
	This means that for every journey that exits $\aTrip$ at some stop event $\aTrip[\ell]$ with $j < \ell \leq k$, there is a journey that uses the trip segment $\tripSegment{\aTrip}{i}{\ell}$ and one that uses $\tripSegment{\aTrip}{j}{\ell}$.
	The former must be chosen because the partial prefix ending in $\aTrip[i]$ strictly dominates the one ending in $\aTrip[j]$.
	However, TB will normally choose the latter because it maintains a parent pointer per enqueued trip segment.

    \changed{Canonical} Profile-TB instead maintains a parent pointer $\parent(\aTrip,n)$ for each trip $\aTrip$ and round $n$.
	If $\aTrip[i]$ is the earliest reachable stop event of $\aTrip[i]$ and $(\aStop,\aStop(\aTrip[i]))$ is the (possibly empty) footpath from which it was reached, then $\parent(\aTrip,n)$ stores $\aStop$.
	If $n=1$, this is the source stop; otherwise, it is the stop at which the previous trip was exited.
	If an $\Enqueue$ call for the stop event $\aTrip[j]$ is successful (\ie none of the ULTRA conditions hold), the parent pointer is updated because $\aTrip[j]$ is now the earliest reachable stop event.
	Additionally, a pointer $\parent(\aStop,n)$ is stored for each stop $\aStop$ and round $n$, which stores the last trip segment in the journey that was used to reach $\aStop$.
	Consider a call to $\AddArrival$ during the scan of a trip segment $\tripSegment{\aTrip}{j}{k}$ in line~\ref{alg:flashtb:preprocessing:scan:arrivalDirect} or~\ref{alg:flashtb:preprocessing:scan:arrivalFootpath}.
	The currently scanned stop event $\aTrip[i]$ is passed to $\AddArrival$ as the stop event at which the trip segment is exited.
	The stop event at which it is entered is not necessarily $\aTrip[j]$, but rather the stop event $\aTrip[\ell]$ at which the earliest enqueued trip segment of $\aTrip$ starts.
	At this point during the execution of the algorithm, $\ell$ is given by $\runReachedIndex(\aTrip)-1$.
	
	After each run with departure time $\departureTime$, journey unpacking is performed for each combination of stop $\stopA$ and round $n$ for which $\departureTime(\stopA,n)=\departureTime$.
	To unpack the journey, the algorithm retrieves the trip segment $\tripSegment{\aTrip}{j}{k}$ stored in $\parent(\stopA,n)$.
	Then it looks up the stop $\stopB$ stored in $\parent(\aTrip,n)$ and repeats the process recursively with $\stopB$ and $n-1$.
	This is repeated until $n$ reaches $0$.
	To avoid looping over all stops and rounds after every run, all pairs of $\stopA$ and round $n$ for which $\departureTime(\stopA,n)$ is updated are marked during the run.
	
	\begin{figure}
		\begin{minipage}{\textwidth}
			\begin{algorithm}[H]
				\caption{Trip scanning operation of DC-Profile-TB.}
				\label{alg:flashtb:preprocessing:scan}
				\myproc{\Scan{$n, \departureTime$}}{
					$\aQueue_{n+1}\leftarrow{}\emptyset$\;
					Sort trip segments $\tripSegment{\aTrip}{j}{k}$ in $\aQueue_n$ by $\eventIndex(\aTrip[j])$\label{alg:flashtb:preprocessing:scan:sort}\;
					\ForEach{$\tripSegment{\aTrip}{j}{k}\in\aQueue_n$}{
						\For{$i$ from $j$ to $k$}{
							$\AddArrival(\aStop(\aTrip[i]), n, \arrivalTime(\aTrip[i]), \departureTime, \aTrip, i)$\label{alg:flashtb:preprocessing:scan:arrivalDirect}\;
						}
					}
					\ForEach{$\tripSegment{\aTrip}{j}{k}\in\aQueue_n$}{
						\For{$i$ from $j$ to $k$}{
							\ForEach{$\aStop \in \stops \setminus \{\aStop(\aTrip[i])\}$ with $\transfertime{\aStop(\aTrip[i])}{\aStop}<\infty$}{
								$\arrivalTime \leftarrow \arrivalTime(\aTrip[i])+\transfertime{\aStop(\aTrip[i])}{\aStop}$\;
								$\AddArrival(\aStop, n, \arrivalTime, \departureTime, \aTrip, i)$\label{alg:flashtb:preprocessing:scan:arrivalFootpath}\;
							}
						}
					}
					\ForEach{$\tripSegment{\aTrip}{j}{k}\in\aQueue_n$}{
						\For{$i$ from $j$ to $k$}{
							\lIfComment{local pruning}{$\arrivalTime\left(\aTrip[i]\right)>\arrivalTime(\aStop(\aTrip[i]), n)$\label{alg:flashtb:preprocessing:scan:localPruning1}}{\Continue}
							\ForEachComment{scan transfers}{$\transfer{\aTrip[i]}{\aTrip'[i']}\in\directTransfers$}{
								$\Enqueue\left(\aTrip', i'+1, n+1, \aStop(\aTrip[i])\right)$
							}
						}
					}
					\ForEach{$\tripSegment{\aTrip}{j}{k}\in\aQueue_n$}{
						\For{$i$ from $j$ to $k$}{
							\lIfComment{local pruning}{$\arrivalTime\left(\aTrip[i]\right)>\arrivalTime(\aStop(\aTrip[i]), n)$\label{alg:flashtb:preprocessing:scan:localPruning2}}{\Continue}
							\ForEachComment{scan transfers}{$\transfer{\aTrip[i]}{\aTrip'[i']}\in\footpathTransfers$}{
								$\arrivalTime \leftarrow \arrivalTime\left(\aTrip[i]\right)+\transfertime{\aStop(\aTrip[i])}{\aStop(\aTrip'[i'])}$\label{alg:flashtb:preprocessing:scan:footpathTime}\;
								\lIfComment{local pruning}{$\arrivalTime >\arrivalTime(\aStop(\aTrip'[i']), n)$\label{alg:flashtb:preprocessing:scan:footpathPruning}}{\Continue}
								$\Enqueue\left(\aTrip', i'+1, n+1, \aStop(\aTrip[i])\right)$
							}
						}
					}
				}
				
				\myproc{$\AddArrival(\aStop, n, \arrivalTime, \departureTime, \aTrip, i)$}{
					\lIfComment{(T1)\phantom{a}}{$\arrivalTime(\aStop, n) = \arrivalTime$ and $\departureTime(\aStop, n) =\departureTime$\label{alg:flashtb:preprocessing:scan:t1}}{\Return}
					\lIfComment{(T2a)}{$\arrivalTime(\aStop, n) < \arrivalTime$\label{alg:flashtb:preprocessing:scan:t2a}}{\Return}
					\lIfComment{(T2b)}{$\arrivalTime(\aStop, n-1) \leq \arrivalTime$\label{alg:flashtb:preprocessing:scan:t2b}}{\Return}
					\ForEach{$m \geq n$}{
						\lIf{$\arrivalTime(\aStop,m) \leq \arrivalTime$}{\Break}
						$\arrivalTime(\aStop, m) \leftarrow \arrivalTime$\;
						$\departureTime(\aStop, m) \leftarrow \departureTime$\;
					}
					$\parent(\aStop, n) \leftarrow \aTrip[\runReachedIndex(\aTrip)-1, i]$\;
				}
	
				\myproc{\Enqueue{$\aTrip, j, n, \aStop$}}{
					\lIfComment{(E1)\phantom{a}}{$\runReachedIndex(\aTrip)\leq{}j$\label{alg:flashtb:preprocessing:enqueue:e1}}{\Return}
					\lIfComment{(E2a)}{$\reachedIndex_{n}(\aTrip) < j$\label{alg:flashtb:preprocessing:enqueue:e2a}}{\Return}
					\lIfComment{(E2b)}{$n > 1 \wedge \reachedIndex_{n-1}(\aTrip)\leq{}j$\label{alg:flashtb:preprocessing:enqueue:e2b}}{\Return}
					\lIfComment{(E2c)}{$\pred{\aTrip} \neq \bot \wedge \reachedIndex_n(\pred{\aTrip})\leq{}j$\label{alg:flashtb:preprocessing:enqueue:e2c}}{\Return}
					$\aQueue_n\leftarrow\aQueue_n\cup\{\tripSegment{\aTrip}{j}{\runReachedIndex(\aTrip)-1}\}$\;
					\ForEach{$\aTrip'\succeq\aTrip$}{
						$\runReachedIndex(\aTrip')\leftarrow{}\min\left\{j, \runReachedIndex(\aTrip')\right\}$\;
						\lForEach{$m \geq n$}{$\reachedIndex_{m}(\aTrip')\leftarrow{}\min\left\{j, \reachedIndex_{m}(\aTrip')\right\}$
						}
					}
					$\parent(\aTrip',n) \leftarrow \aStop$\;
				}
			\end{algorithm}
            \vspace*{-20pt}
		\end{minipage}
	\end{figure}
	
	\subparagraph*{Proof of Correctness.}
	We now show that \changed{canonical} Profile-TB with a \changed{CP} transfer set finds and unpacks all \changed{canonical} journeys.

	\begin{theorem}
		\label{th:flashtb:profile}
		Let $\query=(\sourceStop,\targetStop,\departureTime)$ be a query, $\transfers$ a \changed{CP} transfer set and $\aJourney=\left< \footpath_0, \tripSegment{\aTrip_1}{i_1}{j_1}, \dots, \tripSegment{\aTrip_k}{i_k}{j_k}, \footpath_{k+1} \right> \in \journeys(\query,\transfers)$ a \changed{canonical} journey for $\query$.
		In the \changed{canonical} Profile-TB search from $\sourceStop$, the run for $\departureTime(\aJourney)$ unpacks $\aJourney$.
	\end{theorem}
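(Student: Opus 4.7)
The plan is to proceed by induction on the round number $n$, with $1 \leq n \leq k$, proving the invariant that after round $n$ of the run for $\departureTime(\aJourney)$, the parent pointers $\parent(\aTrip_m, m)$ for $1 \leq m \leq n$ encode exactly the entry stops of the first $n$ trip segments of $\aJourney$, and that $\runReachedIndex(\aTrip_m) \leq i_m + 1$ holds. Because $\transfers$ is CP, canonical journeys are prefix-optimal by~\Cref{th:canonical_prefix_optimal} and therefore exit-optimal, so $\transfers$ is EOP; hence the reasoning behind~\Cref{th:one-to-all:scan} applies, once one verifies that the sharper rejection conditions (E1)--(E2c) and (T1)--(T2b) of canonical Profile-TB do not prevent $\aTrip_n[j_n]$ from being scanned in round $n$. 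Granting this, the proof reduces to tracking parent pointers.

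For the base case $n = 1$, the initialization phase enters the earliest trip of each line at each footpath-reachable stop of $\sourceStop$. Canonicity forces $\aTrip_1$ to be this earliest trip at $\aStop(\aTrip_1[i_1])$ for the departure time $\departureTime(\aJourney)$, so the resulting $\Enqueue$ call for $\aTrip_1[i_1+1]$ passes all four rejection checks and sets $\parent(\aTrip_1, 1) = \sourceStop$. For the inductive step, \Cref{th:canonical-subjourney} yields that the candidate $\langle \tripSegment{\aTrip_n}{i_n}{j_n}, \tripSegment{\aTrip_{n+1}}{i_{n+1}}{j_{n+1}} \rangle$ is canonical for its own query, so the CP property puts $\transfer{\aTrip_n[j_n]}{\aTrip_{n+1}[i_{n+1}]}$ into $\transfers$. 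When round $n$ scans $\aTrip_n[j_n]$, the exit prefix of $\aJourney$ is Pareto-optimal, so local pruning does not fire and this transfer is explored. The resulting $\Enqueue$ call passes conditions (E1)--(E2c), because the partial prefix ending at $\aTrip_{n+1}[i_{n+1}]$ is Pareto-optimal and, by the ordering argument below, is the first equivalent partial journey processed in the current run. Therefore $\parent(\aTrip_{n+1}, n+1)$ is set to $\aStop(\aTrip_n[j_n])$.

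The main obstacle is showing that these parent pointers are not subsequently overwritten within the same run by an equivalent but non-canonical partial journey, and symmetrically that $\parent(\targetStop, k)$ after the last round corresponds to $\aJourney$. This is where the sorting of $\aQueue_n$ by $\eventIndex$ in line~\ref{alg:flashtb:preprocessing:scan:sort} and the four-loop structure of the trip-scanning phase (with $\directTransfers$ explored before $\footpathTransfers$) become decisive: among journeys with the same cost vector, the canonical one has the strictly smaller tiebreaking sequence and is processed first, so any later competitor is rejected by condition (E1) inside $\Enqueue$ or by (T1) inside $\AddArrival$. After round $k$, the first $\AddArrival$ loop records the arrival at $\targetStop$ via the (possibly empty) footpath $\footpath_{k+1}$, storing the trip segment of $\aTrip_k$ that starts at $\runReachedIndex(\aTrip_k) - 1 = i_k$ and is exited at $j_k$. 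Chaining the invariant across all rounds, the recursive unpacking procedure walking $\parent(\targetStop, k), \parent(\aTrip_k, k), \dots, \parent(\aTrip_1, 1)$ reconstructs precisely $\aJourney$.
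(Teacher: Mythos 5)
Your proposal is correct and follows essentially the same route as the paper's proof: an induction over the trip segments that uses Pareto-optimality of the proper and partial prefixes (via \Cref{th:canonical_prefix_optimal}) to defeat the conditions (E1)--(E2c) and (T1)--(T2b), and the scan ordering by $\eventIndex$ together with the four-loop structure to argue that the canonical representative is processed first, so its parent pointers are set and never overwritten. The only place you are lighter than the paper is the ``ordering argument below'' for why a competing equivalent journey would have a strictly smaller tiebreaking sequence (the paper does a short case split on whether the final footpath is empty), but the mechanism you identify is exactly the one the paper uses.
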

	\begin{proof}
		To simplify the notation, we define $\inStop_n:=\aStop(\aTrip_n[i_n])$ and $\outStop_n:=\aStop(\aTrip_n[j_n])$ for $1 \leq n \leq k$.
		Additionally, we define $\outStop_0:=\sourceStop$ and $\inStop_{k+1}:=\targetStop$.
		We show that after the run for $\departureTime(\aJourney)$ has finished, the following holds for all $1 \leq n \leq k$:
		\begin{itemize}
			\item The run has called $\Enqueue(\aTrip_n,i_n+1,n,\outStop_{n-1})$ and it was the first call of the form $\Enqueue(\aTrip_n,i_n+1,n,\cdot)$ in this run.
			\item $\departureTime(\outStop_n,n)=\departureTime(\inStop_{n+1},n)=\departureTime(\aJourney)$,
			\item $\parent(\outStop_n,n)=\parent(\inStop_{n+1},n)=\tripSegment{\aTrip_n}{i_n}{j_n}$,
			\item $\parent(\aTrip_n,n)=\outStop_{n-1}$.
		\end{itemize}
		Then it follows that the algorithm unpacks $\aJourney$.
		We prove these claims by induction over $n$.
		In round $0$, the algorithm explores the outgoing footpaths of $\sourceStop$, so all $\Enqueue$ calls have the parent stop $\outStop_0=\sourceStop$.
		When the initial footpath $\footpath_0$ is explored, the call $\Enqueue(\aTrip_1,i_1+1,1,\outStop_0)$ is made.
		
		Assume now that the claims hold for $n-1$.
		In particular, $\Enqueue(\aTrip_n,i_n+1,n,\outStop_{n-1})$ was called in round $n-1$ and it was the first call of the form $\Enqueue(\aTrip_n,i_n+1,n,\cdot)$ in this run.
		Consider what happens during this call.
		Let $\aJourney_p^n$ denote the partial prefix of $\aJourney$ that ends with $\aTrip_n[i_n]$.
		Because $\aJourney_p^n$ is Pareto-optimal, no call $\Enqueue(\aTrip',i'+1,n',\cdot)$ with $\aTrip'[i'] \prec \aTrip_n[i_n]$ and $n' \leq n$ is made by the algorithm.
		This implies that $\runReachedIndex(\aTrip_n) > i_n + 1$ holds at the beginning of the current call, so pruning condition~(E1) is not fulfilled.
		Conditions~(E2) are not fulfilled because $\aJourney_p^n$ is Pareto-optimal.
		It follows that $\parent(\aTrip_n,n)$ is set to $\outStop_{n-1}$.
        During any subsequent call of the form $\Enqueue(\aTrip_n,i_n,n,\cdot)$, condition~(E1) is fulfilled, so $\parent(\aTrip_n,n)$ is not overwritten.
		
		Because $\subjourney{\sourceStop}{\outStop_n}$ is Pareto-optimal, $\runReachedIndex(\aTrip_n) > j_n$ must hold before the start of round $n-1$.
		However, $\runReachedIndex(\aTrip_n) = i_n + 1 \leq j_n$ holds after the call $\Enqueue(\aTrip_n,i_n+1,n,\outStop_{n-1})$, so a trip segment containing $\aTrip_n[j_n]$ is enqueued and then scanned in round $n$.
		Let $\aStop \in \{ \outStop_n, \inStop_{n+1} \}$.
		When $\aTrip_n[j_n]$ is scanned, \changed{$\AddArrival(\aStop,n,\arrivalTime(\subjourney{\sourceStop}{\aStop}),\departureTime(\aJourney),\aTrip_n,j_n)$} is called.
        The pruning conditions~(T2) are not fulfilled in this call because $\subjourney{\sourceStop}{\aStop}$ is Pareto-optimal.
        We show that~(T1) is not fulfilled either.
		Assume that a previous call~\changed{$\AddArrival(\aStop,n,\arrivalTime(\subjourney{\sourceStop}{\aStop}),\departureTime(\aJourney),\aTrip',j')$} was made during this run.
		Then there is a $\sourceStop$-$\aStop$-journey $\aJourney'$ with departure time~\changed{$\departureTime(\aJourney)$} and arrival time $\arrivalTime(\aJourney')=\arrivalTime(\subjourney{\sourceStop}{\aStop})$ that ends with a trip segment $\tripSegment{\aTrip'}{i'}{j'}$.
		If $\aStop\neq\outStop_n$, then the order in which the enqueued trip segments are scanned implies that~\changed{$\aStop(\aTrip'[j'])=\aStop$} or $\eventIndex(\aTrip'[j'])<\eventIndex(\aTrip_n[j_n])$.
		If $\aStop=\outStop_n$, then both must hold.
		Both cases imply~\changed{$\tiebreakingSequence(\aJourney') <_\text{lex} \tiebreakingSequence(\subjourney{\sourceStop}{\aStop})$}, which contradicts the \changed{canonicity} of $\aJourney$ \changed{by~\Cref{th:canonical-subjourney}}.
		Hence, $\departureTime(\aStop,n)$ is set to $\departureTime(\aJourney)$ and will retain this value for the remainder of the run.
		Similarly, $\arrivalTime(\aStop,n)$ is set to $\arrivalTime(\aJourney,\aStop)$ and $\parent(\aStop,n)$ is set to $\tripSegment{\aTrip_n}{\runReachedIndex(\aTrip_n)-1}{j_n}=\tripSegment{\aTrip_n}{i_n}{j_n}$.
		No subsequent call of the form $\AddArrival(\aStop,n,\arrivalTime,\departureTime(\aJourney),\cdot,\cdot)$ is able to overwrite the parent pointer:
        Because $\subjourney{\sourceStop}{\aStop}$ is Pareto-optimal, $\arrivalTime\geq\arrivalTime(\aJourney,\aStop)$ must hold.
		If $\arrivalTime>\arrivalTime(\aJourney,\aStop)$ holds, then condition~(T2a) applies.
		If $\arrivalTime=\arrivalTime(\aJourney,\aStop)$ holds, then condition~(T1) applies.
		
		During the scan of $\aTrip_n[j_n]$ in round $n<k$, the outgoing transfer $\transfer{\aTrip_n[j_n]}{\aTrip_{n+1}[i_{n+1}]}\in\transfers$ is scanned.
		Because $\subjourney{\sourceStop}{\outStop_n}$ and $\subjourney{\sourceStop}{\inStop_{n+1}}$ are Pareto-optimal, the local pruning conditions are not fulfilled, so $\Enqueue(\aTrip_{n+1},i_{n+1}+1,n+1,\outStop_n)$ is called.
		If this is not the first call of the form $\Enqueue(\aTrip_{n+1},i_{n+1}+1,n+1,\cdot)$ in this run, then there is a $\sourceStop$-$\inStop_{n+1}$-journey $\aJourney'$ with departure time~\changed{$\departureTime(\aJourney)$} and final trip segment $\tripSegment{\aTrip'}{i'}{j'}$ such that $\transfer{\aTrip'[j']}{\aTrip_{n+1}[i_{n+1}]}$ is scanned before $\transfer{\aTrip_n[j_n]}{\aTrip_{n+1}[i_{n+1}]}$.
		Because the transfer is only scanned if the local pruning condition is not fulfilled, $\arrivalTime(\aJourney')\geq\arrivalTime(\subjourney{\sourceStop}{\inStop_{n+1}})$ must hold.
		If $\outStop_n=\inStop_{n+1}$, this implies $\aTrip'[j']\prec\aTrip_n[j_n]$.
		Otherwise, it implies $\aStop(\aTrip'[j'])=\inStop_{n+1}$ or $\aTrip'[j']\prec\aTrip_n[j_n]$.
		In either case, \changed{$\tiebreakingSequence(\aJourney') <_\text{lex} \tiebreakingSequence(\subjourney{\sourceStop}{\inStop_{n+1}})$} follows, which contradicts the \changed{canonicity} of $\aJourney$ \changed{by~\Cref{th:canonical-subjourney}}.
	\end{proof}
	
	\section{FLASH-TB}
	\label{sec:flashtb}
	
	We now present our new algorithm FLASH-TB (TB with FLAgged SHortcuts), which applies the main idea of Arc-Flags to TB.
	
	\subsection{Overview}
	\label{sec:flashtb:basic}
	The first preprocessing step partitions the set $\stops$ of stops into $\numCells$ cells, which yields a partition function $\partition:\stops \to \left\{1, \dots, \numCells\right\}$.
	To represent the topology of the network without its time dependency, we define the~\emph{layout graph} $\layoutGraph$.
	The set of \emph{links} between a pair $\stopA,\stopB$ of stops is given by
	\[
	\links(\stopA,\stopB)\coloneqq\bigl\{ \tripSegment{\aTrip}{i}{i+1} \mid \aTrip\in\trips, \aStop(\aTrip[i])=\stopA, \aStop(\aTrip[i+1])=\stopB \bigr\} \cup \bigl\{ (\stopA,\stopB) \mid (\stopA,\stopB) \in \footpaths \bigr\}.
	\]
	Thus, a link is either a trip segment between two consecutive stops or a footpath.
	The layout graph (see Figure~\ref{fig:metisgraph}) condenses all links between the same pair of stops into a single edge.
	Formally, it is a graph $\layoutGraph=\left(\stops, \layoutEdges, \layoutEdgeWeight\right)$ with the edge set $\layoutEdges \coloneqq \left\{(\stopA,\stopB) \mid \links(\stopA,\stopB) \neq \emptyset \right\}$ and the edge weight function $\layoutEdgeWeight\left((\stopA,\stopB)\right)\coloneqq \absoluteVal{\links(\stopA,\stopB)}$.
	The stop partition $\partition$ is obtained by running a graph partitioning algorithm on $\layoutGraph$.
	Because the edges are weighted by the number of corresponding links, the partitioner is guided to avoid separating stops with many direct connections between them.
	
	\begin{figure}[tp!]
		\caption{
			\textit{Left:} An example network with stops as nodes, trips as colored edges and footpaths as gray edges.
			\textit{Right:} The corresponding layout graph with edges weighted by the number of links.
			Node groupings indicate a possible ${3}$-way partition of the graph.
		}
		\label{fig:metisgraph}
		\centering
		\begin{tikzpicture}[scale=1.04]
	\begin{scope}[yscale=0.7]
		\node (v1) at (0.00, 0.00) {};%
		\node (v2) at (2.00, 0.00) {};%
		\node (v3) at (4.00, 0.00) {};%
		\node (v5) at (2.00, -1.50) {};%
		\node (v7) at (0.00, -1.50) {};%
		\node (v8) at (4.00, -1.50) {};%
		\node (v8up) at (4.00, -1.40) {};%
		\node (v8down) at (4.00, -1.60) {};%
		
		\node (v1v) at (v1) [vertex,draw=nodeColor,fill=nodeColor!15] {\gs};%
		\node (v2v) at (v2) [vertex,draw=nodeColor,fill=nodeColor!15] {\gs};%
		\node (v3v) at (v3) [vertex,draw=nodeColor,fill=nodeColor!15] {\gs};%
		\node (v5v) at (v5) [vertex,draw=nodeColor,fill=nodeColor!15] {\gs};%
		\node (v7v) at (v7) [vertex,draw=nodeColor,fill=nodeColor!15] {\gs};%
		\node (v8v) at (v8) [vertex,draw=nodeColor,fill=nodeColor!15] {\gs};%
        \node (v8upv) at (v8up) [vertex,draw=none,fill=none] {\gs};%
        \node (v8downv) at (v8down) [vertex,draw=none,fill=none] {\gs};%
        
        \begin{pgfonlayer}{background}
        \draw [KITgreen, route] (v1) -- (v2v);
        \draw [KITgreen, route] (v2) -- (v3v);
        \draw [KITblue, route] (v7) -- (v5v);
        \draw [KITblue, route, shorten <= -1pt] (v5v.330) -- (v8v.210);
        \draw [KITred, route, shorten <= -1pt] (v2v.240) -- (v5v.120);
        \draw [KITlilac, route, shorten <= -1pt] (v2v.300) -- (v5v.60);
        \draw [KITlilac, route, shorten <= -1pt] (v5v.30) -- (v8v.150);
        \draw [KITlilac, route] (v8) -- (v3v);
        
        \draw [directedEdge]  (v1) to [bend right] (v7v);
        \draw [directedEdge]  (v7) to [bend right] (v1v);
        \end{pgfonlayer}
	\end{scope}
\end{tikzpicture}
		\hspace{2cm}
		\begin{tikzpicture}[scale=1.04]
	\begin{scope}[yscale=0.7]
		\node (v1) at (0.00, 0.00) {};%
		\node (v2) at (2.00, 0.00) {};%
		\node (v3) at (4.00, 0.00) {};%
		\node (v5) at (2.00, -1.50) {};%
		\node (v7) at (0.00, -1.50) {};%
		\node (v8) at (4.00, -1.50) {};%
		
		\node (v1v) at (v1) [vertex,draw=nodeColor,fill=nodeColor!15] {\gs};%
		\node (v2v) at (v2) [vertex,draw=nodeColor,fill=nodeColor!15] {\gs};%
		\node (v3v) at (v3) [vertex,draw=nodeColor,fill=nodeColor!15] {\gs};%
		\node (v5v) at (v5) [vertex,draw=nodeColor,fill=nodeColor!15] {\gs};%
		\node (v7v) at (v7) [vertex,draw=nodeColor,fill=nodeColor!15] {\gs};%
		\node (v8v) at (v8) [vertex,draw=nodeColor,fill=nodeColor!15] {\gs};%

        \begin{pgfonlayer}{background}
        \node [fit=(v1)(v7),line width=.5pt, fill=KITred!15,rounded corners=0.1cm] {};%
		\node [fit=(v2)(v3),line width=.5pt, fill=KITgreen!15,rounded corners=0.1cm] {};%
		\node [fit=(v5)(v8),line width=.5pt, fill=KITblue!15,rounded corners=0.1cm] {};%
        
        \path [directedEdge]  (v1v) edge node [above] {1} (v2v);
		\path [directedEdge,bend right]  (v1) edge node [left] {1} (v7v);
        \path [directedEdge,bend right]  (v7) edge node [right] {1} (v1v);
		\path [directedEdge]  (v2v) edge node [above] {1} (v3v);
		\path [directedEdge]  (v2v) edge node [left] {2} (v5v);
		\path [directedEdge]  (v8v) edge node [right] {1} (v3v);
		\path [directedEdge]  (v5v) edge node [above] {1} (v7v);
		\path [directedEdge]  (v5v) edge node [above] {2} (v8v);
        \end{pgfonlayer}
	\end{scope}
\end{tikzpicture}
	\end{figure}
	
	After the partition has been computed, the transfer set $\transfers$ is computed with~\transultra.
    Then, the algorithm computes a flag for each transfer $\aTransfer\in\transfers$ and cell $i$, which indicates whether $\aTransfer$ is required to reach any target stops in cell $i$.
	Formally, this yields a flags function $\aFlag : \transfers \times \left\{1, \dots, \numCells\right\} \to \left\{0, 1\right\}$ with the following property: for each query $\query=(\sourceStop,\targetStop,\departureTime)$, there is a representative set $\representativeSet$ such that $\aFlag(\aTransfer,\partition(\targetStop))=1$ for every transfer $\aTransfer=\transfer{\tripA[i]}{\tripB[j]}$ that occurs in a journey $\aJourney\in\representativeSet$.
	The flags are computed by solving the all-to-all full-range profile problem.
	As outlined in Section~\ref{sec:transferset}, this is done by running \changed{canonical} Profile-TB from every possible source stop.
	After each TB run, all newly found journeys are unpacked.
	For each found journey $\aJourney$ to a target stop $\targetStop$ and each transfer $\aTransfer$ in $\aJourney$, the flag $\aFlag(\aTransfer,\partition(\targetStop))$ is set to $1$.
	Transfers for which no flags are set to $1$ can be removed from $\transfers$.
	Afterwards, a query with target stop $\targetStop$ is answered by running the TB query algorithm but only scanning transfers that are flagged for the target cell $\partition(\targetStop)$.

    \begin{theorem}
		\label{th:flashtb:correct}
		FLASH-TB is correct.
	\end{theorem}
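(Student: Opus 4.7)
The plan is to reduce the statement to Corollary~\ref{th:one-to-one}. Because the FLASH-TB query is just one-to-one TB restricted to the subset of flagged transfers, it suffices to show, for every query $\query=(\sourceStop,\targetStop,\departureTime)$, that the flagged set $\transfers_\targetStop := \{ \aTransfer \in \transfers \mid \aFlag(\aTransfer,\partition(\targetStop))=1 \}$ is EOP for $\query$.

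To establish this, I would produce, for each Pareto-optimal cost vector $\cost$ under $\costsystem$, a canonical journey $\aJourney$ for $\query$ with $\cost(\aJourney)=\cost$ that lies in $\journeys(\query,\transfers_\targetStop)$. The existence of such a canonical $\aJourney$ is a short preliminary step: among the feasible journeys for $\query$ attaining $\cost$, pick one whose global tiebreaking sequence $\tiebreakingSequence$ is lexicographically minimal; Pareto-optimality of $\cost$ under $\costsystem$ together with minimality of $\tiebreakingSequence$ rule out any dominator under $\costsystem_\tiebreakingSequence$, so $\aJourney$ is canonical.

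Next, I would thread the preprocessing chain. The transfer set $\transfers$ returned by \transultra is CP by Theorem~\ref{th:ultra:pdop}, so Theorem~\ref{th:flashtb:profile} applies to canonical Profile-TB with $\transfers$: the run from $\sourceStop$ for departure time $\departureTime(\aJourney)$ unpacks exactly $\aJourney$. Hence the FLASH-TB preprocessing sets $\aFlag(\aTransfer,\partition(\targetStop))=1$ for every transfer $\aTransfer$ occurring in $\aJourney$, placing $\aJourney$ into $\journeys(\query,\transfers_\targetStop)$. By Lemma~\ref{th:canonical_prefix_optimal}, canonical journeys are prefix-optimal and in particular exit-optimal, so collecting one such $\aJourney$ per Pareto-optimal cost vector yields a representative set of exit-optimal journeys in $\journeys(\query,\transfers_\targetStop)$. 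This is the definition of $\transfers_\targetStop$ being EOP for $\query$, and Corollary~\ref{th:one-to-one} closes the argument.

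The main obstacle is aligning the canonical representative required as the EOP witness with the specific journey that canonical Profile-TB actually unpacks; once both are shown to coincide via Theorem~\ref{th:flashtb:profile}, the correctness of FLASH-TB is a direct composition of results from Sections~\ref{sec:transferset} and~\ref{sec:flashtb}. A minor secondary check is that the unpacking loop is triggered for every stop that serves as $\targetStop(\aJourney)$, which follows from the fact that canonical Profile-TB marks every pair $(\aStop,n)$ whose $\departureTime(\aStop,n)$ is updated during a run and processes them all after the run completes.
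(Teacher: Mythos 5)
Your proposal is correct and follows essentially the same route as the paper's proof: both reduce to Corollary~\ref{th:one-to-one} by showing the flagged transfer set for the target cell contains a representative set of canonical (hence, via Lemma~\ref{th:canonical_prefix_optimal}, exit-optimal) journeys, using Theorem~\ref{th:ultra:pdop} and Theorem~\ref{th:flashtb:profile} to guarantee that each canonical journey is unpacked and its transfers flagged. The only difference is presentational: the paper phrases the conclusion as the flagged set being CP, while you spell out the implication to EOP and the existence of canonical representatives explicitly.
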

	\begin{proof}
		Let $\transfers$ be the \changed{(CP)} transfer set computed by \transultra and~\changed{$\aFlag(\cdot,\cdot)$} the computed flags function.
		For a cell $i$ of the stop partition, let $\transfers(i) = \{ \aTransfer \in \transfers \mid \aFlag(\aTransfer,i) = 1 \}$ be the set of flagged transfers.
		For a query $\query=(\sourceStop,\targetStop,\departureTime)$, the search space of FLASH-TB with the transfer set $\transfers$ is equal to that of TB with $\transfers(\partition(\targetStop))$.
		We show that $\transfers(\partition(\targetStop))$ is \changed{CP} for $\query$, which proves the claim by~\Cref{th:one-to-one}.
		Consider a \changed{canonical} journey $\aJourney$ for $\query$ and the \changed{canonical} Profile-TB search from $\sourceStop$ performed during the preprocessing.
		By Theorem~\ref{th:flashtb:profile}, the run for $\departureTime(\aJourney)$ unpacks $\aJourney$.
		For each transfer in $\aJourney$, the flag of the target cell $\partition(\targetStop)$ is set to $1$, so it is included in $\transfers(\partition(\targetStop))$.
	\end{proof}
	
	\subsection{Optimizations}
	\label{sec:flashtb:opt}
	
	\subparagraph*{Flag Representation.}
	For an edge $\anEdge$, we call the sequence $\langle\aFlag(\anEdge,1),\dots,\aFlag(\anEdge,k)\rangle$ its~\emph{flag pattern}.
	\hbox{\cite{Bau09}} observe for Arc-Flags on road networks that many edges in the graph share the same flag pattern.
	They exploit this with the following compression technique:
	All flag patterns that occur in the graph are stored in a global array $A$.
	For each edge $\anEdge$, the algorithm does not store the flag pattern of $\anEdge$ directly, but rather the index $i$ for which $A[i]$ contains the pattern.
	This saves space at the cost of additional pointer accesses whenever an edge is explored.
	We apply this compression technique in FLASH-TB and sort the flag pattern array in decreasing order of occurrence.
	This ensures that the most commonly accessed flag patterns are stored close together in memory, which increases the likelihood of cache hits.
	If flag compression is not used, the flags are stored in an array $A$ of size $k\absoluteVal{\transfers}$.
	To improve cache efficiency, we store the flags for each cell consecutively.
	Thus, the flag $\aFlag(\aTransfer,i)$ for a transfer $\aTransfer$ and cell $i$ is stored at position $A[i\absoluteVal{\transfers}+\aTransfer]$.
	This ensures that flags for irrelevant target cells do not occupy space in the cache.
	
	\subparagraph*{Reached Index Timestamping.}
	Usually, TB resets the reached index of every trip $\aTrip$ to $\absoluteVal{\aTrip}$ after each query.
	However, the search space of FLASH-TB is so small that the vast majority of reached indices are not modified during a typical query.
	To prevent unnecessary overhead, we store a 16-bit \textit{timestamp} $\eta(\aTrip)$ for each trip $\aTrip$, which indicates last query in which $\reachedIndex(\aTrip)$ was modified.
	When $\reachedIndex(\aTrip)$ is accessed during the $i$-th query and $\eta\left(\aTrip\right) \neq i$, $\reachedIndex(\aTrip)$ is reset to $\absoluteVal{\aTrip}$ and $\eta(\aTrip)$ is set to $i$.
	Every $2^{16}$ queries, the timestamp overflows, so all reached indices are reset.
    
	\section{Comparison with Other Algorithms}
	\label{sec:comparison}
	In this section, we outline the similarities and differences of FLASH-TB to other algorithms.
	In particular, we explain why FLASH-TB reduces the search space compared to TP, TB-CST, and previous applications of Arc-Flags in public transit networks.
	
	\subsection{Transfer Patterns and TB-CST}
	Both TP and TB-CST (without split trees) store a generalized shortest-path tree for every possible source stop.
	This offers a near-perfect reduction in the query search space but at the expense of a memory consumption of $\Omega(\absoluteVal{\stops}^2)$.
	By contrast, the memory consumption of FLASH-TB is in $\Theta\left(\numCells\absoluteVal{\transfers}\right)$, where $\numCells$ is the number of cells.
	Although $\absoluteVal{\transfers}$ is theoretically in $\Omega(\absoluteVal{\stopEvents}\absoluteVal{\trips})$, it is comparable to $\absoluteVal{\stopEvents}$ in practice.
	Thus, FLASH-TB can be seen as a way to interpolate between TB and TP regarding query search space and memory consumption.
	For $\numCells=1$, every non-superfluous transfer is flagged, so the search space is identical to that of TB with a filtered set of transfers.
	For $\numCells=\absoluteVal{\stops}$, \changed{transfers are only flagged if they occur in a Pareto-optimal journey to the target stop that was selected as a representative by canonical Profile-TB}.
	An advantage of our approach is that the transfer flags provide information about which specific trips should be entered.
	By contrast, the search graphs of TB-CST and TP only provide information about lines and stop sequences, respectively.
	This means that FLASH-TB does not have to invest additional effort during the query phase to find the earliest reachable trip of each line.
	
	\subsection{Arc-Flags on Public Transit Networks}
	Conceptually, FLASH-TB is similar to Arc-Flags on a time-expanded graph, albeit with TB as a query algorithm instead of Dijkstra's algorithm.
	\hbox{\cite{Ber09}} and \hbox{\cite{Del09c}} observe low speedups when applying Arc-Flags to public transit networks.
	We analyze the issues causing this and how FLASH-TB overcomes them.
	
	\subparagraph*{Issues with Existing Approaches.}
	\cite{Ber09} apply Arc-Flags to a time-dependent graph model in a problem setting that asks for \emph{all} Pareto-optimal paths, \ie all representatives for every Pareto-optimal cost vector.
	They observe that for nearly every combination of edge $\anEdge$ and cell $i$, there is at least one point in time for which $\anEdge$ occurs on a Pareto-optimal path to a node in cell $i$.
	To solve this problem, the authors divide the service period of the network into two-hour intervals and compute a flag for each combination of edge, cell and time interval.
	However, this merely yields a speedup of three over Dijkstra's algorithm.
	
	\begin{figure}[tp!]
		\caption{
			Two example stops in a time-expanded graph.
			Arrival nodes in lilac, transfer nodes in \changed{dark gray}, and departure nodes in \changed{yellow}.
			Trip edges are thick and colored, while transfer edges are thin and gray.
			Edge weights have been omitted for clarity.
		}
		\label{fig:tegraph}
		\centering
		\begin{tikzpicture}[xscale=0.8]
\begin{scope}[yscale=0.7]
\colorlet{arrivalColor}{KITlilac}
\colorlet{transferColor}{KITblack!80}
\colorlet{departureColor}{KITorange}

\node (arrival0Stop0) at (0.00, 0.00) {};%
\node (arrival1Stop0) at (0.00, -1.50) {};%
\node (arrival2Stop0) at (0.00, -3.00) {};%
\node (transfer0Stop0) at (2.00, 0.50) {};%
\node (transfer1Stop0) at (2.00, -1.00) {};%
\node (transfer2Stop0) at (2.00, -2.50) {};%
\node (departure0Stop0) at (4.00, 0.00) {};%
\node (departure1Stop0) at (4.00, -1.50) {};%
\node (departure2Stop0) at (4.00, -3.00) {};%

\node (arrival0Stop1) at (8.00, 0.00) {};%
\node (arrival1Stop1) at (8.00, -1.50) {};%
\node (arrival2Stop1) at (8.00, -3.00) {};%
\node (transfer0Stop1) at (10.00, 0.50) {};%
\node (transfer1Stop1) at (10.00, -1.00) {};%
\node (transfer2Stop1) at (10.00, -2.50) {};%
\node (departure0Stop1) at (12.00, 0.00) {};%
\node (departure1Stop1) at (12.00, -1.50) {};%
\node (departure2Stop1) at (12.00, -3.00) {};%

\node (departure0StopNEG1) at (-2.00, 0.00) {};%
\node (departure1StopNEG1) at (-2.00, -1.50) {};%
\node (departure2StopNEG1) at (-2.00, -3.00) {};%
\node (arrival0Stop2) at (14.00, 0.00) {};%
\node (arrival1Stop2) at (14.00, -1.50) {};%
\node (arrival2Stop2) at (14.00, -3.00) {};%
\node (transferNEG1Stop0) at (2.00, 1.50) {};%
\node (transferNEG1Stop1) at (10.00, 1.50) {};%
\node (transfer3Stop0) at (2.00, -4.00) {};%
\node (transfer3Stop1) at (10.00, -4.00) {};%

\begin{pgfonlayer}{background}
\node [stop,fit=(arrival0Stop0)(transfer0Stop0)(departure0Stop0)(arrival1Stop0)(transfer1Stop0)(departure1Stop0)(arrival2Stop0)(transfer2Stop0)(departure2Stop0)(transferNEG1Stop0)(transfer3Stop0)] {};
\node [stop,fit=(arrival0Stop1)(transfer0Stop1)(departure0Stop1)(arrival1Stop1)(transfer1Stop1)(departure1Stop1)(arrival2Stop1)(transfer2Stop1)(departure2Stop1)(transferNEG1Stop1)(transfer3Stop1)] {};
\end{pgfonlayer}

\node [align=left,text=KITblue] at (6.00, 0.30) {$\tripA$};%
\node [align=left,text=KITgreen] at (6.00, -1.20) {$\tripB$};%
\node [align=left,text=KITred] at (6.00, -2.70) {$\tripC$};%

\node (arrival0Stop0_v) at (arrival0Stop0) [vertex,draw=arrivalColor,fill=arrivalColor!50] {\gs};%
\node (arrival1Stop0_v) at (arrival1Stop0) [vertex,draw=arrivalColor,fill=arrivalColor!50] {\gs};%
\node (arrival2Stop0_v) at (arrival2Stop0) [vertex,draw=arrivalColor,fill=arrivalColor!50] {\gs};%
\node (arrival0Stop1_v) at (arrival0Stop1) [vertex,draw=arrivalColor,fill=arrivalColor!50] {\gs};%
\node (arrival1Stop1_v) at (arrival1Stop1) [vertex,draw=arrivalColor,fill=arrivalColor!50] {\gs};%
\node (arrival2Stop1_v) at (arrival2Stop1) [vertex,draw=arrivalColor,fill=arrivalColor!50] {\gs};%

\node (transfer0Stop0_v) at (transfer0Stop0) [vertex,draw=transferColor,fill=transferColor!50] {\gs};%
\node (transfer1Stop0_v) at (transfer1Stop0) [vertex,draw=transferColor,fill=transferColor!50] {\gs};%
\node (transfer2Stop0_v) at (transfer2Stop0) [vertex,draw=transferColor,fill=transferColor!50] {\gs};%
\node (transfer0Stop1_v) at (transfer0Stop1) [vertex,draw=transferColor,fill=transferColor!50] {\gs};%
\node (transfer1Stop1_v) at (transfer1Stop1) [vertex,draw=transferColor,fill=transferColor!50] {\gs};%
\node (transfer2Stop1_v) at (transfer2Stop1) [vertex,draw=transferColor,fill=transferColor!50] {\gs};%

\node (departure0Stop0_v) at (departure0Stop0) [vertex,draw=departureColor,fill=departureColor!50] {\gs};%
\node (departure1Stop0_v) at (departure1Stop0) [vertex,draw=departureColor,fill=departureColor!50] {\gs};%
\node (departure2Stop0_v) at (departure2Stop0) [vertex,draw=departureColor,fill=departureColor!50] {\gs};%
\node (departure0Stop1_v) at (departure0Stop1) [vertex,draw=departureColor,fill=departureColor!50] {\gs};%
\node (departure1Stop1_v) at (departure1Stop1) [vertex,draw=departureColor,fill=departureColor!50] {\gs};%
\node (departure2Stop1_v) at (departure2Stop1) [vertex,draw=departureColor,fill=departureColor!50] {\gs};%

\begin{pgfonlayer}{background}
\draw [edge, arrow] (arrival0Stop0) to[bend right=10] (transfer1Stop0_v);
\draw [edge, arrow] (transfer0Stop0) to[bend right=10] (departure0Stop0_v);
\draw [edge, arrow] (transferNEG1Stop0) -- (transfer0Stop0_v);
\draw [edge, arrow] (transfer0Stop0) -- (transfer1Stop0_v);
\draw [edge, arrow] (arrival1Stop0) to[bend right=10] (transfer2Stop0_v);
\draw [edge, arrow] (transfer1Stop0) to[bend right=10] (departure1Stop0_v);
\draw [edge, arrow] (transfer1Stop0) -- (transfer2Stop0_v);
\draw [edge, arrow, dashed] (transfer2Stop0) -- (transfer3Stop0);
\draw [edge, arrow] (transfer2Stop0) to[bend right=10] (departure2Stop0_v);
\draw [edge, arrow, dashed] (arrival2Stop0) to[bend right=10] (transfer3Stop0);

\draw [edge, arrow] (arrival0Stop1) to[bend right=10] (transfer1Stop1_v);
\draw [edge, arrow] (transfer0Stop1) to[bend right=10] (departure0Stop1_v);
\draw [edge, arrow] (transferNEG1Stop1) -- (transfer0Stop1_v);
\draw [edge, arrow] (transfer0Stop1) -- (transfer1Stop1_v);
\draw [edge, arrow] (arrival1Stop1) to[bend right=10] (transfer2Stop1_v);
\draw [edge, arrow] (transfer1Stop1) to[bend right=10] (departure1Stop1_v);
\draw [edge, arrow] (transfer1Stop1) -- (transfer2Stop1_v);
\draw [edge, arrow, dashed] (transfer2Stop1) -- (transfer3Stop1);
\draw [edge, arrow] (transfer2Stop1) to[bend right=10] (departure2Stop1_v);
\draw [edge, arrow, dashed] (arrival2Stop1) to[bend right=10] (transfer3Stop1);

\draw [route, KITblue] (departure0StopNEG1) -- (arrival0Stop0_v);
\draw [route, KITblue] (departure0Stop0) -- (arrival0Stop1_v);
\draw [route, KITblue] (departure0Stop1) -- (arrival0Stop2);
\draw [route, KITblue, shorten <= -0.2pt, shorten >= -0.2pt] (arrival0Stop0_v.50) arc (140:40:2.375cm);
\draw [route, KITblue, shorten <= -0.2pt, shorten >= -0.2pt] (arrival0Stop1_v.50) arc (140:40:2.375cm);

\draw [route, KITgreen] (departure1StopNEG1) -- (arrival1Stop0_v);
\draw [route, KITgreen] (departure1Stop0) -- (arrival1Stop1_v);
\draw [route, KITgreen] (departure1Stop1) -- (arrival1Stop2);
\draw [route, KITgreen, shorten <= -0.2pt, shorten >= -0.2pt] (arrival1Stop0_v.50) arc (140:40:2.375cm);
\draw [route, KITgreen, shorten <= -0.2pt, shorten >= -0.2pt] (arrival1Stop1_v.50) arc (140:40:2.375cm);

\draw [route, KITred] (departure2StopNEG1) -- (arrival2Stop0_v);
\draw [route, KITred] (departure2Stop0) -- (arrival2Stop1_v);
\draw [route, KITred] (departure2Stop1) -- (arrival2Stop2);
\draw [route, KITred, shorten <= -0.2pt, shorten >= -0.2pt] (arrival2Stop0_v.50) arc (140:40:2.375cm);
\draw [route, KITred, shorten <= -0.2pt, shorten >= -0.2pt] (arrival2Stop1_v.50) arc (140:40:2.375cm);
\end{pgfonlayer}

\node at (arrival0Stop0) [text=nodeColor!100] {\small{$a$}};%
\node at (arrival1Stop0) [text=nodeColor!100] {\small{$a$}};%
\node at (arrival2Stop0) [text=nodeColor!100] {\small{$a$}};%

\node at (transfer0Stop0) [text=nodeColor!100] {\small{$t$}};%
\node at (transfer1Stop0) [text=nodeColor!100] {\small{$t$}};%
\node at (transfer2Stop0) [text=nodeColor!100] {\small{$t$}};%

\node at (departure0Stop0) [text=nodeColor!100] {\small{$d$}};%
\node at (departure1Stop0) [text=nodeColor!100] {\small{$d$}};%
\node at (departure2Stop0) [text=nodeColor!100] {\small{$d$}};%

\node at (arrival0Stop1) [text=nodeColor!100] {\small{$a$}};%
\node at (arrival1Stop1) [text=nodeColor!100] {\small{$a$}};%
\node at (arrival2Stop1) [text=nodeColor!100] {\small{$a$}};%

\node at (transfer0Stop1) [text=nodeColor!100] {\small{$t$}};%
\node at (transfer1Stop1) [text=nodeColor!100] {\small{$t$}};%
\node at (transfer2Stop1) [text=nodeColor!100] {\small{$t$}};%

\node at (departure0Stop1) [text=nodeColor!100] {\small{$d$}};%
\node at (departure1Stop1) [text=nodeColor!100] {\small{$d$}};%
\node at (departure2Stop1) [text=nodeColor!100] {\small{$d$}};%

\end{scope}
\end{tikzpicture}
	\end{figure}
	
	Time resolution is not an issue in the time-expanded graph, where each node is associated with a specific point in time.
	Each stop event is modeled with three nodes: an \emph{arrival node}, a \emph{transfer node} and a \emph{departure node}.
	The edges in the graph can be divided into \emph{trip edges}, which connect the arrival and departure nodes of the same trip, and \emph{transfer edges}, which are incident to transfer nodes.
	The transfer nodes of each stop are connected by a ``chain'' of transfer edges leading from earlier to later nodes.
	An example is shown in Figure~\ref{fig:tegraph}.
	A trip segment corresponds to a path between a departure node and an arrival node consisting only of trip edges.
	By contrast, a transfer corresponds to a path of transfer edges between an arrival node and a departure node.
	
	In principle, applying Arc-Flags to the time-expanded graph should yield much more fine-grained information than in the time-dependent model.
	However, \hbox{\cite{Del09c}} still only report a speedup of four over Dijkstra's algorithm, even when optimizing only the arrival time.
	One reason for this is noted by the authors: in a time-expanded graph, every path is optimal because its arrival time depends only on the time of the target event.
	To solve this problem, the authors employ strategies for breaking ties between paths.
    Moreover, they propose a pruning technique called \emph{node-blocking}, which applies the principle of line pruning to Dijkstra's algorithm in time-expanded graphs.
	They observe that node-blocking conflicts with their tiebreaking choices for Arc-Flags, leading to incorrectly answered queries.
	Hence, they disable node-blocking when evaluating Arc-Flags, which increases the search space.
	
	A second reason is that their approach computes flags that are sufficient to find a shortest path between any pair of nodes.
	However, the public transit journey planning problem has two crucial differences to the shortest path problem in the time-expanded graph:
	Firstly, the source node may only be a transfer node.
	Secondly, no target node is supplied, only a target stop.
	The actual target node is the earliest reachable transfer node of the target stop, which only becomes known during the search.
	For these reasons, not every pair of source and target node occurs in an optimal journey.
	Thus, the approach by Delling, Pajor, and Wagner flags more edges than necessary.
	
	\subparagraph*{Improvements in FLASH-TB.}
	Arc-Flags on the time-expanded graph flags both trip and transfer edges, whereas FLASH-TB flags only transfers.
	Flagging an entire TB transfer provides more information than flagging all the transfer edges it consists of.
	Consider an example with four stop events $\aTrip_1[i_1],\aTrip_2[i_2],\aTrip_3[i_3],\aTrip_4[i_4]$ at the same stop $\aStop$ such that their arrival and departure times are increasing in this order.
	Let $\aVertex_1,\aVertex_2,\aVertex_3,\aVertex_4$ be the corresponding transfer nodes.
	Then the ``transfer chain'' of $\aStop$ includes the path $\aPath=\left<\aVertex_1,\aVertex_2,\aVertex_3,\aVertex_4\right>$.
	Assume that FLASH-TB flags the transfers $\transfer{\aTrip_1[i_1]}{\aTrip_3[i_3]}$ and $\transfer{\aTrip_2[i_2]}{\aTrip_4[i_4]}$.
    In Arc-Flags, this corresponds to flagging the entire path $\aPath$.
	However, this creates flagged paths from $\aTrip[i_1]$ to $\aTrip_4[i_4]$ and from $\aTrip_2[i_2]$ and $\aTrip_3[i_3]$, which are akin to also flagging the transfers $\transfer{\aTrip_1[i_1]}{\aTrip_4[i_4]}$ and $\transfer{\aTrip_2[i_2]}{\aTrip_3[i_3]}$.
	Thus, Arc-Flags on the time-expanded graph always explores these transfers, whereas FLASH-TB only does so if they are flagged.
	
	We refrain from flagging trip segments in FLASH-TB because it would not provide any benefit beyond the first round of a FLASH-TB query:
	If a trip segment is not flagged for the target cell, then neither are its incoming or outgoing transfers.
	Thus, an unflagged trip segment can only be entered during the first round, and no further trip segments are reachable from there.
	It would be possible to compute a flag for each trip $\aTrip$ and cell $i$ that indicates whether $\aTrip$ is used for the first trip segment of a required journey to cell $i$.
	However, preliminary experiments have shown that this barely improves query times at the cost of additional memory consumption.
	
	The other crucial difference between FLASH-TB and Arc-Flags on time-expanded graphs is how the flags are computed.
	Arc-Flags is typically employed for solving the shortest path problem on a graph, where any node can occur as the source or target node of a query.
	In this scenario, it is sufficient to compute a backward shortest-path tree from each boundary node.
	Because every edge in the tree is required to answer at least one query, it is not wasteful to flag all of them.
	However, as explained above, not all optimal paths between all pairs of nodes in the time-expanded graph occur in an optimal journey.
	Computing and flagging a backward shortest-path tree from every boundary node $\targetVertex$ is wasteful for two reasons:
	If $\targetVertex$ is a departure or arrival node, optimal paths to $\targetVertex$ correspond to partial journeys, which may not occur in any optimal journey.
	If $\targetVertex$ is a transfer node, it is not necessarily the earliest reachable node of its stop for every leaf in the tree.
	FLASH-TB solves both issues by computing the flags with forward Profile-TB searches from every stop in the network.
	Because this approach is independent of the network partition, it does not need to run searches from boundary nodes that are departure or arrival nodes.
	Secondly, because it uses forward searches, it can easily discard journeys with a suboptimal arrival time at the target stop.
	The downside of this approach is that the preprocessing effort becomes quadratic in the number of stops and no longer scales with the partition size.
	
	Finding a suitable preprocessing algorithm based on backward searches from the boundary nodes remains an open problem.
	Such an algorithm would need to discard enough irrelevant paths to keep the search space small while being compatible with the pruning rules of the query algorithm.
	As observed by~\hbox{\cite{Del09c}}, achieving compatibility with line pruning (or their implementation of it, node-blocking) is particularly challenging.
	Line pruning prefers journeys that use the earliest trip of each line that is still reachable from the source node.
	However, this trip is difficult to determine during a backward search from the target node.
	Delling, Pajor, and Wagner were only able to avoid this issue by disabling node-blocking altogether.
	As seen in Section~\ref{sec:transferset:issues}, a naive implementation of FLASH-TB would run into a similar issue due to the conflict between self-pruning and line pruning.
	However, we resolved this by adjusting the preprocessing algorithm, allowing FLASH-TB to benefit from both pruning rules simultaneously.
	
	\section{Experimental Evaluation}
	\label{sec:experiments}
	We evaluate the performance of FLASH-TB on four real-world public transit networks.
	Our experimental setup is outlined in Section~\ref{sec:experiments:setup}.
	Afterwards, we evaluate the partitioning step and the transfer generation algorithms in Section~\ref{sec:experiments:partitioning}.
	This is followed by the performance of FLASH-TB in Section~\ref{sec:experiments:flashtb}.
	Finally, we compare our algorithm to the state of the art in Section~\ref{sec:experiments:comparison}.
	
	\subsection{Setup}
	\label{sec:experiments:setup}
	Our datasets are taken from GTFS feeds of the public transit networks of Germany~(\url{https://gtfs.de/}), Paris~(\url{https://navitia.io/}), Sweden~(\url{https://trafiklab.se/}) and Switzerland~(\url{https://opentransportdata.swiss/}).
	Details are listed in Table~\ref{tab:datasets}.
	For each network, we extracted the timetable of two consecutive (non-holiday) weekdays to allow for overnight journeys.
	A grouping of the trips into a minimal set of lines can be computed in polynomial time~\hbox{\cite{Ste23}}.
	However, a greedy approach~\hbox{\cite{Bau23}} produced the same (optimal) grouping on all four datasets.
	All precomputations were run on a machine with two~64-core AMD Epyc Rome 7742 CPUs clocked at~2.25\,GHz, with a boost frequency of~3.4\,GHz, 1024\,GB of DDR4-3200 RAM, and 256\,MiB of L3 cache.
	All queries were performed on a machine with two~8-core Intel Xeon Skylake SP Gold 6144 CPUs clocked at~3.5\,GHz, with a boost frequency of~4.2\,GHz, 192\,GiB of DDR4-2666 RAM, and~24.75\,MiB of L3 cache.
	The source code for TB and FLASH-TB~(\url{https://github.com/TransitRouting/FLASH-TB}) was written in~C\raisebox{0.15ex}{\small++} and compiled using GCC with optimizations enabled (\codestyling{-march=native -O3}).
	For the comparison to TB-CST, we used the original source code by~\hbox{\cite{Wit15}}, with the additional fixes discussed in Section~\ref{sec:transferset:issues}.
	The preprocessing phases of TB-CST and FLASH-TB, which run one-to-all Profile-TB from each stop, were parallelized with 128 threads.
	The queries were not parallelized.
	
	\begin{table}[tb!]
		\caption{
			An overview of the networks on which we performed our experiments.
			Stops, stop events, lines, trips, and footpaths are taken from the GTFS datasets.
            For the footpaths, we build the transitive closure.
			Also listed are the results of the TB and \transultra transfer precomputations.
			Preprocessing times are formatted as ${\left[\mathrm{hh}{:}\mathrm{mm}{:}\mathrm{ss}\right]}$.
		}
		\label{tab:datasets}
		\centering
		\begin{tabular}{lrrrr}
			\toprule
			& Germany & Paris & Sweden & Switzerland \\
			\midrule
			Stops					& \numprint{441465}		& \numprint{41757} 		& \numprint{48007} 		& \numprint{30510} \\
			Stop events				& \numprint{30995609}	& \numprint{4636238} 	& \numprint{5647754} 	& \numprint{5197469} \\
			Lines					& \numprint{207801}		& \numprint{9558} 		& \numprint{15627} 		& \numprint{17895} \\
			Trips					& \numprint{1559118}	& \numprint{215526} 	& \numprint{248977} 	& \numprint{336765} \\
			Footpaths				& \numprint{1172464}	& \numprint{445912} 	& \numprint{2118} 		& \numprint{20468} \\[5pt]
			TB transfers			& \numprint{60919879}	& \numprint{23284123}	& \numprint{14771466} 	& \numprint{8481315} \\
			TB prepro.				& \printTime{0}{2}{8}	& \printTime{0}{0}{36}	& \printTime{0}{0}{24} 	& \printTime{0}{0}{16} \\[5pt]
			\transultra transfers	& \numprint{49339366}	& \numprint{15388877}	& \numprint{11960527} 	& \numprint{6420392} \\
			\transultra prepro.		& \printTime{0}{11}{34}	& \printTime{0}{0}{39}	& \printTime{0}{0}{35} 	& \printTime{0}{0}{28} \\
			\bottomrule
		\end{tabular}
	\end{table}
	
	\subsection{Partitioning and Transfer Generation}
	\label{sec:experiments:partitioning}
	We use the KaHIP graph partitioning framework~\hbox{\cite{San13}} (\url{https://github.com/KaHIP/KaHIP}) to partition our networks.
	KaHIP follows the multilevel paradigm, \ie the input graph is coarsened, initially partitioned, and locally improved during uncoarsening.
	In our experiments, better results are obtained when the coarsening is done using clustering rather than edge matching, which is the default setting.
	More specifically, we use the memetic algorithm KaFFPaE with the strong social setting and an imbalance parameter of $5\%$.
	As a time limit, we set ten minutes for all networks regardless of the number $\numCells$ of desired cells.
	In our experiments, higher time limits did not significantly improve the total number of flags set or the average query times.
	
	Results for the TB and \transultra transfer generation steps are listed in Table~\ref{tab:datasets}.
	\transultra reduces the number of transfers by around~20--30\%.
	This comes at the cost of a higher running time, especially on the Germany network.
	However, compared to the overall preprocessing time of FLASH-TB, the difference is negligible.
	
	\subsection{FLASH-TB Performance}
	\label{sec:experiments:flashtb}
	\begin{table}[p]
		\caption{
                Performance of FLASH-TB depending on the number ${\numCells}$ of cells.
			Values for ${\numCells=1}$ are for the original TB algorithm.
			Memory consumption and query times and are measured with (Comp.) and without (Uncomp.) flag compression.
			Query times and scanned trips are averaged over~\numprint{10000} random fixed departure time queries.
			\changed{Also reported are profile query times, averaged over \numprint{1000} random queries with the departure time range spanning the first 24 hours in the timetable.}
			The preprocessing time excludes the partitioning, which is limited to ten minutes in all configurations, and the preprocessing time of \transultra (see Table~\ref{tab:datasets}).
		}
		\label{tab:results}
		\centering
		\begin{tabular*}{\textwidth}{@{\,}l@{\extracolsep{\fill}}r@{\extracolsep{\fill}}r@{\extracolsep{\fill}}r@{\extracolsep{\fill}}r@{\extracolsep{\fill}}r@{\extracolsep{\fill}}r@{\extracolsep{\fill}}r@{\extracolsep{\fill}}r@{\extracolsep{\fill}}r@{\extracolsep{\fill}}r@{\,}}
			\toprule
			\multirow{2}{*}{Network} & \multirow{2}{*}{$\numCells$} & \multirow{2}{*}{\thead{Prepro.\\ $\left[ \mathrm{hh}{:}\mathrm{mm}{:}\mathrm{ss}\right]$}} & \multicolumn{2}{c}{Memory $[\mathrm{MB}]$} & \multicolumn{2}{c}{Query time $[\mu s]$} & \multirow{2}{*}{\thead{Scanned\\trips}} & \multicolumn{2}{c}{Profile time $[\mu s]$} \\
			\cmidrule(){4-5} \cmidrule(){6-7} \cmidrule(){9-10}
			& & & Uncomp. & Comp. & Uncomp. &  Comp. & & \changed{Uncomp.} & \changed{Comp.} \\
			\midrule
			\multirow{9}{*}{\rotatebox{90}{Germany}} & \numprint{1} & -- & -- & -- & \numprint{63687} & -- & \numprint{280586} & \numprint{383747} & -- \\
			& \numprint{64} & \printTime{30}{41}{41} & \numprint{1843} & \numprint{280} & \numprint{1555} & \numprint{2026} & \numprint{4912} & \numprint{8164}  & \numprint{10641} \\
			& \numprint{128} & \printTime{29}{13}{44} & \numprint{2150} & \numprint{307} & \numprint{902} & \numprint{1314} & \numprint{2769} & \numprint{4858}  & \numprint{6996} \\
			& \numprint{256} & \printTime{28}{40}{21} & \numprint{2662} & \numprint{359} & \numprint{547} & \numprint{888} & \numprint{1598} & \numprint{3076}  & \numprint{4961} \\
			& \numprint{512} & \printTime{28}{28}{33} & \numprint{2686} & \numprint{473} & \numprint{355} & \numprint{635} & \numprint{976} & \numprint{2052}  & \numprint{3702} \\
			& \numprint{1024} & \printTime{28}{19}{22} & \numprint{5734} & \numprint{720} & \numprint{234} & \numprint{468} & \numprint{608} & \numprint{1383}  & \numprint{2800} \\
			& \numprint{2048} & \printTime{28}{29}{21} & \numprint{9830} & \numprint{1384} & \numprint{168} & \numprint{372} & \numprint{409} & \numprint{1052}  & \numprint{2329} \\
			& \numprint{4096} & \printTime{28}{40}{08} & \numprint{19251} & \numprint{2716} & \numprint{130} & \numprint{321} & \numprint{291} & \numprint{840}  & \numprint{2839} \\
			& \numprint{8192} & \printTime{32}{36}{15} & \numprint{34816} & \numprint{5788} & \numprint{127} & \numprint{419} & \numprint{218} & \numprint{861}  & \numprint{2590} \\[5pt]
			\multirow{10}{*}{\rotatebox{90}{Paris}} & \numprint{1}	& -- & --	& -- & \numprint{2804} & --	& \numprint{17619} & \numprint{44197} & --\\
			& \numprint{64} & \printTime{0}{18}{41} & \numprint{408} & \numprint{88} & \numprint{314} & \numprint{535} & \numprint{796} & \numprint{6854}  & \numprint{14529}  \\
			& \numprint{128} & \printTime{0}{18}{42} & \numprint{465} & \numprint{121} & \numprint{204} & \numprint{403} & \numprint{493} & \numprint{4733}  & \numprint{11878} \\
			& \numprint{256} & \printTime{0}{18}{43} & \numprint{579} & \numprint{174} & \numprint{139} & \numprint{327} & \numprint{316} & \numprint{3302}  & \numprint{9760} \\
			& \numprint{512} & \printTime{0}{18}{44} & \numprint{808} & \numprint{266} & \numprint{93} & \numprint{241} & \numprint{203} & \numprint{2299}  & \numprint{8035} \\
			& \numprint{1024} & \printTime{0}{18}{46} & \numprint{1331} & \numprint{447} & \numprint{67} & \numprint{199} & \numprint{140} & \numprint{1784}  & \numprint{6424} \\
			& \numprint{2048} & \printTime{0}{18}{49} & \numprint{2253} & \numprint{806} & \numprint{47} & \numprint{155} & \numprint{96} & \numprint{1330}  & \numprint{5069} \\
			& \numprint{4096} & \printTime{0}{18}{52} & \numprint{4096} & \numprint{1594} & \numprint{36} & \numprint{132} & \numprint{72} & \numprint{1111}  & \numprint{4219} \\
			& \numprint{8192} & \printTime{0}{18}{54} & \numprint{7680} & \numprint{3232} & \numprint{29} & \numprint{116} & \numprint{57} & \numprint{930}  & \numprint{4225} \\
			& \numprint{16384} & \printTime{0}{19}{1} & \numprint{15360} & \numprint{6612} & \numprint{25} & \numprint{110} & \numprint{48} & \numprint{920}  & \numprint{5053} \\[5pt]
			\multirow{10}{*}{\rotatebox{90}{Sweden}} & \numprint{1} & -- & -- & -- & \numprint{4926} & -- & \numprint{26850} & \numprint{39281} & --\\
			& \numprint{64} & \printTime{0}{12}{35} & \numprint{403} & \numprint{63} & \numprint{106} & \numprint{133} & \numprint{444} & \numprint{829}  & \numprint{1166} \\
			& \numprint{128} & \printTime{0}{12}{18} & \numprint{460} & \numprint{70} & \numprint{62} & \numprint{85} & \numprint{244} & \numprint{503}  & \numprint{728} \\
			& \numprint{256} & \printTime{0}{12}{07} & \numprint{573} & \numprint{81} & \numprint{41} & \numprint{61} & \numprint{142} & \numprint{376}  & \numprint{591} \\
			& \numprint{512} & \printTime{0}{10}{50} & \numprint{799} & \numprint{105} & \numprint{28} & \numprint{48} & \numprint{87} & \numprint{289}  & \numprint{501} \\
			& \numprint{1024} & \printTime{0}{10}{19} & \numprint{1331} & \numprint{160} & \numprint{21} & \numprint{42} & \numprint{58} & \numprint{247}  & \numprint{463}\\
			& \numprint{2048} & \printTime{0}{10}{8} & \numprint{2253} & \numprint{287} & \numprint{17} & \numprint{40} & \numprint{42} & \numprint{204}  & \numprint{453} \\
			& \numprint{4096} & \printTime{0}{10}{9} & \numprint{3994} & \numprint{579} & \numprint{14} & \numprint{41} & \numprint{34} & \numprint{184}   & \numprint{459} \\
			& \numprint{8192} & \printTime{0}{10}{11} & \numprint{7680} & \numprint{1286} & \numprint{13} & \numprint{44} & \numprint{28} & \numprint{174}  & \numprint{468} \\
			& \numprint{16384} & \printTime{0}{10}{14} & \numprint{15360} & \numprint{2719} & \numprint{11} & \numprint{45} & \numprint{25} & \numprint{167}  & \numprint{546} \\[5pt]
			\multirow{10}{*}{\rotatebox{90}{Switzerland}} & \numprint{1} & -- & -- & -- & \numprint{5005} & -- & \numprint{35951} & \numprint{36368}  & -- \\
			& \numprint{64} & \printTime{0}{11}{29} & \numprint{204} & \numprint{31} & \numprint{92} & \numprint{106} & \numprint{475} & \numprint{907} & \numprint{1118} \\
			& \numprint{128} & \printTime{0}{9}{45} & \numprint{233} & \numprint{34} & \numprint{58} & \numprint{71} & \numprint{278} & \numprint{617}  & \numprint{801} \\
			& \numprint{256} & \printTime{0}{8}{12} & \numprint{290} & \numprint{42} & \numprint{39} & \numprint{50} & \numprint{165} & \numprint{439}  & \numprint{610} \\
			& \numprint{512} & \printTime{0}{7}{18} & \numprint{404} & \numprint{58} & \numprint{28} & \numprint{40} & \numprint{111} & \numprint{346}  & \numprint{525} \\
			& \numprint{1024} & \printTime{0}{6}{52} & \numprint{633} & \numprint{96} & \numprint{23} & \numprint{36} & \numprint{83} & \numprint{298} & \numprint{497} \\
			& \numprint{2048} & \printTime{0}{6}{44} & \numprint{1126} & \numprint{183} & \numprint{19} & \numprint{34} & \numprint{66} & \numprint{264}  & \numprint{470}\\
			& \numprint{4096} & \printTime{0}{6}{42} & \numprint{2048} & \numprint{388} & \numprint{17} & \numprint{33} & \numprint{55} & \numprint{243} & \numprint{460} \\
			& \numprint{8192} & \printTime{0}{6}{43} & \numprint{3891} & \numprint{832} & \numprint{15} & \numprint{32} & \numprint{49} & \numprint{229} & \numprint{457} \\
			& \numprint{16384} & \printTime{0}{6}{47} & \numprint{7578} & \numprint{1872} & \numprint{15} & \numprint{33} & \numprint{46} & \numprint{222}  & \numprint{466}\\
			\bottomrule
		\end{tabular*}
	\end{table}

	\begin{figure}[tp!]
		\caption{
			Average speedup over TB and memory consumption of FLASH-TB (with and without flag compression) on the Paris and Switzerland networks, depending on the number of cells ${\numCells}$.
		The partition of Switzerland with ${\numCells=\absoluteVal{\stops}}$, in which each stop has its own cell, was computed directly without the use of KaHIP.}
		\label{fig:resultplot}
		\centering
		\newcommand{\plotHeight}{6.3cm}
\newcommand{\plotWidth}{0.48\textwidth}

\begin{tikzpicture}
	\begin{groupplot}[group style={group size= 2 by 2, vertical sep=60, horizontal sep=60}]
		\nextgroupplot[
            title=\textbf{Paris -- Speedup},
            height=\plotHeight,
            width=\plotWidth,
            xlabel={$\numCells$},
            ylabel={Speedup},
            symbolic x coords={2, 4, 8, 16, 32, 64, 128, 256, 512, 1024, 2048, 4096, 8192, 16384, 32768},
			xticklabels={$2^1$, $ $, $2^3$, $ $, $2^5$, $ $, $2^7$, $ $, $2^9$, $ $, $2^{11}$, $ $, $2^{13}$, $ $, $2^{15}$},
            xtick=data,
            ymin=0,
            ymax=125,
            bar width = 4pt
        ]
		\addplot[ybar, /pgf/bar shift=-2.5pt, fill=KITred, KITred, thick] table [x index=0, y index=1, col sep=comma] {csv/paris.csv};
		\addplot[ybar, /pgf/bar shift=2.5pt, fill=KITblue, KITblue, thick] table [x index=0, y index=2, col sep=comma] {csv/paris.csv};
		
		\nextgroupplot[
            title=\textbf{Paris -- Memory},
            height=\plotHeight,
            width=\plotWidth,
            xlabel={$\numCells$},
            ylabel={Memory $[\mathrm{GB}]$},
            symbolic x coords={2, 4, 8, 16, 32, 64, 128, 256, 512, 1024, 2048, 4096, 8192, 16384, 32768},
			xticklabels={$2^1$, $ $, $2^3$, $ $, $2^5$, $ $, $2^7$, $ $, $2^9$, $ $, $2^{11}$, $ $, $2^{13}$, $ $, $2^{15}$},
            xtick=data,
            ymin=0,
            ymax=32,
            bar width = 4pt
        ]
		\addplot[ybar, /pgf/bar shift=-2.5pt, fill=KITred, KITred, thick] table [x index=0, y index=3, col sep=comma] {csv/paris.csv};
		\addplot[ybar, /pgf/bar shift=2.5pt, fill=KITblue, KITblue, thick] table [x index=0, y index=4, col sep=comma] {csv/paris.csv};
		
		\nextgroupplot[
			name=mainPlot,
            title=\textbf{Switzerland -- Speedup},
            height=\plotHeight,
            width=\plotWidth,
            xlabel={$\numCells$},
            ylabel={Speedup},
            symbolic x coords={2, 4, 8, 16, 32, 64, 128, 256, 512, 1024, 2048, 4096, 8192, 16384, 30510},
			xticklabels={$2^1$, $$, $2^3$, $$, $2^5$, $$, $2^7$, $$, $2^9$, $$, $2^{11}$, $$, $2^{13}$, $$, $\absoluteVal{\stops}$},
            xtick=data,
            ymin=0,
            ymax=300,
            bar width = 4pt
        ]
		\addplot[ybar, /pgf/bar shift=-2.5pt, fill=KITred, KITred, thick] table [x index=0, y index=1, col sep=comma] {csv/switzerland.csv};
		\addplot[ybar, /pgf/bar shift=2.5pt, fill=KITblue, KITblue, thick] table [x index=0, y index=2, col sep=comma] {csv/switzerland.csv};
		\coordinate (c1) at (rel axis cs:0,1);
		
		\nextgroupplot[
            title=\textbf{Switzerland -- Memory},
            height=\plotHeight,
            width=\plotWidth,
            xlabel={$\numCells$},
            ylabel={Memory $[\mathrm{GB}]$},
            legend to name=sweden,
            symbolic x coords={2, 4, 8, 16, 32, 64, 128, 256, 512, 1024, 2048, 4096, 8192, 16384, 30510},
            xticklabels={$2^1$, $$, $2^3$, $$, $2^5$, $$, $2^7$, $$, $2^9$, $$, $2^{11}$, $$, $2^{13}$, $$, $\absoluteVal{\stops}$},
            xtick=data,
            ymin=0,
            ymax=15,
            bar width = 4pt,
            legend image code/.code={\draw [#1] (0cm,-0.1cm) rectangle (0.5cm,0.1cm);}
        ]
		\addplot[ybar, /pgf/bar shift=-2.5pt, fill=KITred, KITred, thick] table [x index=0, y index=3, col sep=comma] {csv/switzerland.csv};\label{legend:uncomp}
		\addplot[ybar, /pgf/bar shift=2.5pt, fill=KITblue, KITblue, thick] table [x index=0, y index=4, col sep=comma] {csv/switzerland.csv};\label{legend:comp}
		\coordinate (c2) at (rel axis cs:1,1);
	\end{groupplot}
	
	\arrayrulecolor{legendColor}
	\node[inner sep=0pt,outer sep=0pt] (legend) at (mainPlot.outer south west) {};
	\node[inner sep=0pt,outer sep=0pt,anchor=north west] at (legend) {
		\small
		\begin{tabular*}{\textwidth}{|@{~~}l@{\extracolsep{\fill}}r@{~~}|}
			\hline
			& \\[-7pt]
			\legend{\ref{legend:uncomp}}~FLASH-TB (without flag compression) & \legend{\ref{legend:comp}}~FLASH-TB (with flag compression) \\[2pt]
			\hline
		\end{tabular*}
	};
	\arrayrulecolor{black}
\end{tikzpicture}
		\vspace*{-20pt}
	\end{figure}

	Performance measurements for FLASH-TB, including the impact of flag compression and the number $\numCells$ of cells, are shown in Table~\ref{tab:results}.
	For each configuration, we ran \numprint{10000} queries with the source and target stops chosen uniformly at random and the departure time chosen uniformly at random within the first day covered in the network.
	Values for $\numCells$ that do not substantially improve the memory consumption or the query time compared to other values are omitted from Table~\ref{tab:results}.
	For the Germany network, we suspect that there would be further performance gains with $\numCells=\numprint{16384}$ cells, but the machine ran out of memory during the precomputation with~128 threads.
	Additionally, for the Paris and Switzerland networks, Figure~\ref{fig:resultplot} plots the speedup over TB \changed{(for the fixed-departure time query)} and the memory consumption, with and without flag compression, depending on $\numCells$.

        \changed{
        The reported query times exclude the time required to unpack the journeys.
        For TB, journey unpacking took $5\,\mu s$ on Paris, $7\,\mu s$ on Sweden, $11\,\mu s$ on Switzerland, and $17\, \mu s$ for Germany.
        For FLASH-TB with the highest evaluated choice of $k$, this is reduced to $2\,\mu s$ on Sweden and Paris, $3\,\mu s$ on Switzerland, and $9\,\mu s$ on Germany.
        The speedup is due to improved cache locality, as the queue of scanned trip segments is much smaller.        
         }

	As expected, the preprocessing time of FLASH-TB is independent of $\numCells$.
	With the time for partitioning the network and the \transultra transfer precomputation included, the preprocessing takes less than~30 minutes for the three smaller networks.
	The much higher times for Germany indicate that the precomputation effort is quadratic in the network size.
	
	Flagging the transfers is extremely effective at pruning the search space:
	With the highest number of cells, FLASH-TB reduces the number of scanned trips by a factor of~367 on the Paris network,~782 on Switzerland and over~1\,000 on the other two networks.
	Without flag compression, this yields speedups of~501 on Germany, 112 on Paris, 448 on Sweden and 334 on Switzerland.
	Because the memory consumption scales linearly with the number of cells, it quickly becomes very high, whereas the speedup eventually begins to stall.
	However, excellent query times can be achieved with a reasonable amount of data:
	For a speedup of 100 over TB, FLASH-TB without memory compression requires a few gigabytes on Germany and Paris, and a few hundred megabytes on Sweden and Switzerland.
	For profile queries, the speedups are slightly lower at~457 for Germany, 48 for Paris, 235 for Sweden, and 164 for Switzerland.
	This is because self-pruning already thins out the search space of each individual run, which means that there is less potential for additional pruning.

	Flag compression reduces the memory consumption by a factor of two to five on Paris and four to seven on the other networks.
	The resulting loss in query speed ranges from almost nothing to a factor of four; it is generally greater for higher values of $\numCells$.
	This is because the performance overhead caused by the additional memory accesses amortizes if cached flag patterns are accessed frequently.
	Once the search space is reduced to a few dozen trips per query, this is no longer the case.
	With flag compression, the highest achieved speedups are 198 on Germany, 25 on Paris, 123 on Sweden, and 156 on Switzerland.
	The space required for a speedup of~100 over TB is reduced to around~500\,MB for Germany and around or less than~100\,MB for Sweden and Switzerland.
	
	We observe that Paris, a dense metropolitan network, behaves differently from the other networks, which represent entire countries.
	The overall speedup of FLASH-TB is lower, and in relation to the size of the network, the memory consumption is much higher.
	Whereas a speedup of~100 can be achieved with around~10~bytes per transfer on the other networks, it requires~500 for Paris.
	This is both because more cells are required and because flag compression is less effective.
	Similar discrepancies in the performance between metropolitan and country networks were observed for Transfer Patterns~\hbox{\cite{Bas10}} and TB-CST~\hbox{\cite{Wit16}}.
	
	\begin{figure}[tp!]
		\caption{
			Flag distributions for ${\numCells=32}$ cells.
			The bar for value ${i}$ represents the percentage of transfers for which exactly ${i}$ out of ${\numCells}$ flags are set.
		}
		\label{fig:histogram}
		\centering
		\newcommand{\plotHeight}{6.3cm}
\newcommand{\plotWidth}{0.48\textwidth}

\begin{tikzpicture}
	\begin{groupplot}[group style={group size= 2 by 2, vertical sep=60, horizontal sep=60}]
		\nextgroupplot[
            title=\textbf{Germany},
            height=\plotHeight,
            width=\plotWidth,
            xlabel={Number of set flags},
            ylabel={Transfers [\%]},
            scaled y ticks=false,
            xmin=-0.5,
            xmax=32.5,
            ymin=0,
            ymax=60,
            bar width = 5pt
        ]
		\addplot[ybar, fill=KITblue50, draw=KITblue] table [x index=0, y index=1, col sep=tab] {csv/new_distribution_32.tsv};
		
		\nextgroupplot[
		title=\textbf{Paris},
		height=\plotHeight,
		width=\plotWidth,
		xlabel={Number of set flags},
		ylabel={Transfers [\%]},
		xmin=-0.5,
		xmax=32.5,
		ymin=0,
		ymax=60,
		bar width = 5pt,
		]
		\addplot[ybar, fill=KITblue50, draw=KITblue] table [x index=0, y index=3, col sep=tab] {csv/new_distribution_32.tsv};

		\nextgroupplot[
            title=\textbf{Sweden},
            height=\plotHeight,
            width=\plotWidth,
            xlabel={Number of set flags},
            ylabel={Transfers [\%]},
            xmin=-0.5,
			xmax=32.5,
            ymin=0,
            ymax=60,
            bar width = 5pt
        ]
		\addplot[ybar, fill=KITblue50, draw=KITblue] table [x index=0, y index=4, col sep=tab] {csv/new_distribution_32.tsv};
		
		\nextgroupplot[
            title=\textbf{Switzerland},
            height=\plotHeight,
            width=\plotWidth,
            xlabel={Number of set flags},
            ylabel={Transfers [\%]},
            xmin=-0.5,
			xmax=32.5,
            ymin=0,
            ymax=60,
            bar width = 5pt
        ]
		\addplot[ybar, fill=KITblue50, draw=KITblue] table [x index=0, y index=2, col sep=tab] {csv/new_distribution_32.tsv};
	\end{groupplot}
\end{tikzpicture}
	\end{figure}
	
	To explain these differences, we analyze how the number of set flags is distributed across the set of transfers.
	Figure~\ref{fig:histogram} shows a histogram of the flag distribution for a partition with $\numCells=32$ cells.
	Depending on the network, between~30\% and~50\% of transfers have no flags set at all and are therefore removed at the end of the preprocessing phase.
	On all networks except for Paris, over~70\% of the remaining transfers have only one flag set and over~80\% have at most two.
	There is also a small group of transfers with all or almost all flags set, but there are very few in between.
	This explains why FLASH-TB is so effective: most transfers are only relevant for local, intra-cell travel and can therefore be ignored outside of the target cell.
	For Paris, this effect is weaker because the network consists almost entirely of local transport.
	Accordingly, more flags are set overall and fewer transfers share the same pattern, so flag compression is less effective.
	
	\subsection{Comparison to TB-CST}
	\label{sec:experiments:comparison}
	We compare FLASH-TB to TB-CST, which is the state of the art among speedup techniques with a high preprocessing effort.
	A direct comparison with TP is difficult because we do not have access to its original implementation.
	However, preliminary experiments with our own implementation of TP confirmed the findings of~\hbox{\cite{Wit16}} that TB-CST without split trees has a very similar memory consumption and query performance while the preprocessing time is significantly lower.
	We also do not compare our algorithm to Scalable Transfer Patterns, whose query times are barely competitive with TB, or to PTL, which does not support journey unpacking.
	
	The results of our evaluation are shown in Table~\ref{tab:tbcst-results}.
	The precomputation of TB-CST is four to ten times faster than that of FLASH-TB, even though both algorithms solve the all-to-all problem.
	This is because our algorithm cannot fully benefit from self-pruning due to the ULTRA conditions.
	However, the additional preprocessing effort pays off in the query phase.
	Whereas TB-CST must reconstruct the earliest reachable trip of each line during the query, FLASH-TB can access it directly.
	Furthermore, TB-CST has an additional time overhead for constructing the query graph.
	Consequently, it is significantly outperformed for fixed departure time queries:
	When comparing TB-CST with split trees to a configuration of FLASH-TB that requires approximately the same amount of space, FLASH-TB is two to nine times faster.
	The speedup is especially high for Germany and Paris.
	On the other hand, to achieve the same query speed, FLASH-TB requires between one and two orders of magnitude less space.
	Additionally, the fastest configuration of FLASH-TB outperforms TB-CST without split trees while still requiring much less space.
	
	For profile queries, TB-CST loses much of its disadvantage.
	Firstly, the construction of the query graph takes up a smaller portion of the overall running time.
	Secondly, instead of performing a run for each possible departure time, TB-CST constructs the entire profile during a single search within the query graph.
	Thus, the step of finding the earliest reachable trip of a line is performed for all profile entries simultaneously, making it more efficient.
	Consequently, TB-CST answers profile queries slightly faster than FLASH-TB.
	However, we note that our implementation of FLASH-TB is not specifically optimized for profile queries and its performance can likely be improved.
	
	\begin{table}[t]
		\centering
		\caption{
			Performance of TB-CST for \numprint{10000} random fixed departure time queries and \numprint{1000} random profile queries with a departure time range of 24 hours. Query times are split into the time for building the query graph (\emph{Build}) and exploring it (\emph{Search}). On the Germany network, TB-CST without split trees ran out of memory.
		}
		\label{tab:tbcst-results}
		\begin{tabular*}{\textwidth}{@{\,}l@{\extracolsep{\fill}}c@{\extracolsep{\fill}}r@{\extracolsep{\fill}}r@{\extracolsep{\fill}}r@{\extracolsep{\fill}}r@{\extracolsep{\fill}}r@{\extracolsep{\fill}}r@{\extracolsep{\fill}}r@{\extracolsep{\fill}}r@{\,}}
			\toprule
			\multirow{2}{*}{Network} & \multirow{2}{*}{\thead{Split\\trees?}} & \multirow{2}{*}{\thead{Prepro.\\ $\left[ \mathrm{hh}{:}\mathrm{mm}{:}\mathrm{ss}\right]$}} & \multicolumn{3}{c}{Fixed dep. query $\left[ \mu s\right]$} & \multicolumn{3}{c}{Profile query $\left[ \mu s\right]$} & \multirow{2}{*}{Memory $\left[ \mathrm{MB} \right]$} \\
			\cmidrule(){4-6} \cmidrule(){7-9}
			& & & Build & Search & Total & Build & Search & Total & \\
			\midrule
			\multirow{2}{*}{Germany} & $\circ$ & -- & -- & -- & -- & -- & -- & -- & OOM \\
			& $\bullet$ & \printTime{9}{27}{42} & \numprint{576} & \numprint{441} & \numprint{1017} & \numprint{543} & \numprint{609} & \numprint{1152} & \numprint{37786} \\[5pt]
			\multirow{2}{*}{Paris} & $\circ$ & \printTime{0}{4}{0} & \numprint{8} & \numprint{28} & \numprint{36} & \numprint{8} & \numprint{151} & \numprint{159} & \numprint{103928} \\
			& $\bullet$ & \printTime{0}{6}{17} & \numprint{68} & \numprint{359} & \numprint{427} & \numprint{65} & \numprint{971} & \numprint{1036} & \numprint{2822} \\[5pt]
			\multirow{2}{*}{Sweden} & $\circ$ & \printTime{0}{1}{52} & \numprint{9} & \numprint{11} & \numprint{20} & \numprint{11} & \numprint{40} & \numprint{51} & \numprint{111370} \\
			& $\bullet$ & \printTime{0}{3}{02} & \numprint{15} & \numprint{32} & \numprint{46} & \numprint{15} & \numprint{83} & \numprint{98} & \numprint{1350} \\[5pt]
			\multirow{2}{*}{Switzerland} & $\circ$ & \printTime{0}{1}{25} & \numprint{24} & \numprint{19} & \numprint{44} & \numprint{23} & \numprint{73} & \numprint{96} & \numprint{86060}\\
			& $\bullet$ & \printTime{0}{2}{30} & \numprint{37} & \numprint{40} & \numprint{77} & \numprint{38} & \numprint{118} & \numprint{156} & \numprint{1559} \\
			\bottomrule
		\end{tabular*}
	\end{table}
	
	\section{Conclusion}
	\label{sec:conclusion}
    \changed{FLASH-TB constitutes a step towards making extremely fast public transit journey planning more practical.
    It demonstrates for the first time that a moderate amount of precomputed data is sufficient to achieve sub-millisecond query times even on country-sized networks.
    The main challenge that must be overcome for this is that the schedule-based nature of public transit leads to a highly redundant path structure, in which many journeys are equivalent in all criteria and optimal paths between stop events do not necessarily appear in any optimal journey.
    If this redundant information is not discarded in a consistent manner, then either the speedup will degrade significantly or some queries will be answered incorrectly.}

    \changed{Despite these advances, FLASH-TB in its current form is still of limited use to real-world journey planning applications.
    The main drawback is the quadratic preprocessing time, which is prohibitive for the largest networks.
    Some existing algorithms, including Scalable TP and Arc-Flags on time-expanded graphs, reduce the preprocessing time by splitting the work into intra-cell and inter-cell computations, but this reintroduces redundancies and therefore causes a significant blowup in the search space.
    Thus, the most important open question is whether this blowup can be mitigated.
    Tackling this question may require further structural insights.}
    
    Another important issue in practice is handling delays.
	The search graphs computed by TP are known to be very robust in this regard~\cite{Bas13b}.
	By comparison, FLASH-TB is likely to be more sensitive to delays because its pruning information is highly time-dependent.
	It remains to be shown whether a delay-resistant variant of FLASH-TB can be designed that still retains most of its speedup over TB.
	Finally, real-world timetables cover more than two days, \eg an entire year.
    Currently, the preprocessing effort of FLASH-TB is proportional to the number of days.
	However, \cite{witt2021extendingtimehorizonefficient} presents a TB variant that handles longer timetables efficiently.
    Incorporating similar ideas into the FLASH-TB preprocessing step may improve its scalability for longer timetables.

    \appendix

		\section{Trans-ULTRA Optimizations}
		\label{app:optimizations}
		A simplified version of \transultra was presented in Section~\ref{sec:transferset:ultra}.
		In this section, we present additional performance optimizations that are carried over from ULTRA.
		These are based on the observation that in order to generate a \changed{CP} transfer set, it is not necessary to find and unpack all \changed{canonical} journeys, merely those that are candidates.
		
		In the profile search performed from each source stop $\sourceStop$, this is exploited by skipping all runs that cannot produce any candidates.
		The set of initial stop events that can occur in a journey with departure time $\departureTime$ is given by
		\[\stopEvents(\sourceStop,\departureTime):=\{ \aTrip[i] \mid \departureTime+\transfertime{\sourceStop}{\aStop(\aTrip[i])}=\departureTime(\aTrip[i])\}. \]
		Normally, the profile search performs a run for every departure time $\departureTime$ such that $\stopEvents(\sourceStop,\departureTime)$ is not empty.
		We call $\departureTime$ a \emph{candidate departure time} if there is a stop event $\aTrip[i]\in\stopEvents(\sourceStop,\departureTime)$ with $\aStop(\aTrip[i])=\sourceStop$.
		Let $\departureTime^1,\dots,\departureTime^k$ be the sequence of candidate departure times, sorted in ascending order.
		To simplify the notation, let $\departureTime^{k+1}:=\infty$.
		Trans-ULTRA performs a run for each candidate departure time $\departureTime^i$ in descending order; the non-candidate departure times are skipped.
		As a consequence, the run for $\departureTime^i$ may find new journeys whose departure time is not exactly $\departureTime^i$ but lies in the interval $[\departureTime^i,\departureTime^{i+1})$.
		The proof of Lemma~\ref{th:departure_canonical_mr} still carries over to a weaker claim, which is sufficient to prove Theorem~\ref{th:ultra:pdop}:
		\begin{lemma}\label{th:departure_canonical_candidate}
			Let $\aJourney$ be a candidate that is \changed{canonical} for the query $\query=(\sourceStop(\aJourney),\targetStop(\aJourney),\departureTime(\aJourney))$.
			After the run for $\departureTime(\aJourney)$ in the \changed{canonical} rRAPTOR search from $\sourceStop(\aJourney)$, the representative for $\cost(\aJourney)$ at $\targetStop(\aJourney)$ is $\aJourney$.
		\end{lemma}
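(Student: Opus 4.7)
The plan is to follow the structure of the proof of \Cref{th:departure_canonical_mr} while exploiting the candidate shape of $\aJourney$ to bypass the complication introduced by skipping non-candidate departure times. First, I would observe that the run for $\departureTime(\aJourney)$ is actually performed by canonical rRAPTOR: because $\aJourney$ starts with a trip segment at $\sourceStop(\aJourney)$ under an empty initial footpath, $\departureTime(\aJourney)$ coincides with the departure time of a stop event at the source and is therefore a candidate departure time that is not skipped.

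Next, I would use the fact that every proper prefix of $\aJourney$ shares the empty initial footpath and first trip segment with $\aJourney$, so every proper prefix has departure time exactly $\departureTime(\aJourney)$ and is processed entirely within this same run. By \Cref{th:canonical_prefix_optimal} together with \Cref{th:canonical-subjourney}, each such proper prefix is canonical for its own query, so the correctness of canonical RAPTOR applied inside a single run ensures that round one reaches $\aStop(\aTrip_1[j_1])$ with arrival time $\arrivalTime(\aTrip_1[j_1])$ and attaches $\aTrip_1[j_1]$ as the canonical tiebreaker $\stopEvent(\aStop(\aTrip_1[j_1]))$. This is exactly the intermediate state that round two needs in order to enter $\aTrip_2$ at index $i_2$ and generate $\aJourney$ as the first $\sourceStop$-to-$\targetStop$ journey with cost vector $\cost(\aJourney)$ considered in the round.

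With the in-run reconstruction established, the final step is to rerun the case split from the proof of \Cref{th:departure_canonical_mr} to show that $\aJourney$ ends up as the representative at $\targetStop(\aJourney)$. Let $\aJourney'$ denote the representative for $\cost(\aJourney)$ after the run, whose existence is guaranteed by the correctness of canonical rRAPTOR as a whole. If $\departureTime(\aJourney')=\departureTime(\aJourney)$, canonicity within the current run forces $\aJourney'=\aJourney$ via \Cref{th:canonical-subjourney}. Otherwise $\aJourney'$ was fixed in an earlier run and has strictly larger departure time, so ULTRA condition~(U1) fails by the departure inequality while (U2) fails because $\aJourney$ is Pareto-optimal as a canonical journey, and $\aJourney$ therefore overwrites $\aJourney'$.

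The step I expect to require the most care is the second one: verifying that state stored at $\aStop(\aTrip_1[j_1])$ by previous runs cannot prevent round one from recording $\aTrip_1[j_1]$ as the tiebreaker. The key will be to argue that any surviving entry inherited from an earlier run either has an arrival time strictly better than $\arrivalTime(\aTrip_1[j_1])$, which contradicts the Pareto-optimality of the exit prefix of $\aJourney$, or matches it at a strictly later departure time, in which case neither ULTRA condition fires and the update proceeds. Once this is settled, canonical RAPTOR's tiebreaking within the current run pins the correct stop event and the argument above runs through.
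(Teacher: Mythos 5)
Your proposal is correct and follows essentially the same route as the paper, which gives no separate proof but simply asserts that the argument for \Cref{th:departure_canonical_mr} carries over; your write-up supplies exactly that carrying-over, with the one genuinely new ingredient being the observation that a candidate's empty initial footpath makes $\departureTime(\aJourney)$ a candidate departure time, so the relevant run is not skipped. The remaining steps (prefix canonicity via \Cref{th:canonical_prefix_optimal} and \Cref{th:canonical-subjourney}, and the case split on $\departureTime(\aJourney')$ against conditions (U1) and (U2)) match the paper's proof of \Cref{th:departure_canonical_mr} exactly.
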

		
		Another optimization is that the algorithm only maintains parent pointers for representatives whose initial transfer is empty; all other pointer are set to a dummy value.
		This allows the algorithm to efficiently identify whether a found journey is a candidate and needs to be unpacked.
		It can also be leveraged by adding an extra ULTRA condition (U3):
		Let $\aJourney$ be the newly found journey and let $\aJourney'$ be the previous representative.
		Then $\aJourney$ is discarded if $\aJourney' \preceq \aJourney$ and $\aJourney$ has a non-empty initial transfer.
		This does not affect the correctness of Lemma~\ref{th:departure_canonical_candidate} because~(U3) is not fulfilled for any prefix of a \changed{canonical} candidate.

\bibliography{bibliography.bib}
\end{document}